\documentclass[11pt]{article}

\usepackage{amssymb, amsfonts, amsthm, mathtools}
\mathtoolsset{showonlyrefs}
\usepackage[margin=2.5cm]{geometry}

\usepackage{booktabs, multirow}
\usepackage{makecell}
\usepackage{rotating}
\usepackage{subcaption}
\usepackage{threeparttable}

\newtheorem{thm}{Theorem}
\newtheorem{prop}{Proposition}

\newtheorem{lem}{Lemma}
\newtheorem{asm}{Assumption}
\newtheorem{rem}{Remark}

\newtheorem*{algo*}{Algorithm}
\allowdisplaybreaks

\renewenvironment{proof}[1][\proofname]{{\bfseries #1. }}{\qed}

\DeclareMathOperator*{\argmin}{\arg\!\min}

\usepackage{xcolor}
\definecolor{gold}{rgb}{1.0, 0.84, 0.0}
\usepackage{tikz}
\DeclareRobustCommand\full  {\tikz[baseline=-0.6ex]\draw[thick] (0,0)--(0.5,0);}
\DeclareRobustCommand\dotted{\tikz[baseline=-0.6ex]\draw[thick,dotted] (0,0)--(0.54,0);}
\DeclareRobustCommand\dashed{\tikz[baseline=-0.6ex]\draw[thick,dashed] (0,0)--(0.54,0);}
\DeclareRobustCommand\longdash{\tikz[baseline=-0.6ex]\draw[thick,dash pattern=on 8 off 4] (0,0)--(0.54,0);}

\usepackage{setspace}
\usepackage[style = authoryear, backend = biber, url = false, doi = false, isbn = false, eprint = true, maxcitenames=99, maxbibnames = 99, uniquename = false]{biblatex} 
\title{Inference methods for unit-specific coefficients in panel data models with latent group structure\thanks{Nishi acknowledges financial support from the Japan Society for the Promotion of Science under KAKENHI Grant No. 25KJ0041. Okui acknowledges financial support from the Japan Society for the Promotion of Science under KAKENHI Grant Nos. 23K25501. All errors are our own.}}

\author{Mikihito Nishi\thanks{Graduate School of Economics, University of Tokyo, 7-3-1 Hongo, Bunkyou-ku, Tokyo, 113-0033, Japan. Email: mnishi@g.ecc.u-tokyo.ac.jp} \ and Ryo Okui\thanks{Faculty of Economics, the University of Tokyo, Bunkyo-ku, Tokyo 113-0033, Japan. Email: okuiryo@e.u-tokyo.ac.jp}}

\date{\today}

\begin{document}

\maketitle

\begin{abstract}

  This paper introduces statistical inference procedures for unit-specific coefficients in panel data models, where the coefficients exhibit a latent group structure. The proposed methods achieve efficiency gains by clustering units into a small number of groups, while explicitly accounting for the statistical uncertainty of group assignments. The core idea is to integrate standard inference procedures, such as the $t$-test and Wald tests, with confidence sets for group membership. Two methods are proposed: the first takes the minimum of the test statistics over the confidence set for group membership, and the second corrects for bias caused by possible group misassignment. The former can produce shorter but possibly disconnected sets, while the latter guarantees connected, interpretable intervals at some cost in length. We also develop standard errors that are adjusted for possible group misassignment and valid even with short time periods, which may be of independent interest. Monte Carlo simulations demonstrate that our approach yields narrower confidence sets for units with relatively large error variances than unit-by-unit time-series methods. In contrast, ignoring statistical uncertainty in the group membership estimation leads to distortions in size and coverage. We illustrate the method with an empirical example that estimates the effect of the minimum wage in each U.S. state.

\medskip

\textbf{Keywords}: Panel data, latent group structure, confidence set, unit-specific coefficients

\textbf{JEL classifications}: C23, C33, C38

\end{abstract}

\section{Introduction}

Heterogeneity plays a crucial role in panel data analysis, yet modeling and accounting for it in statistical inference pose numerous challenges. In the literature, various methods have been developed to address heterogeneity across units. One prominent approach is the latent group structure model, which posits that units can be partitioned into a few groups, with units within each group homogeneous and those across groups exhibiting distinct characteristics \parencite[see e.g.,][for a review]{pionati2025latent}. This framework is both flexible enough to capture complex heterogeneity and parsimonious enough to remain interpretable. However, inference methods for heterogeneous coefficients are still limited. Existing works mostly focus on group-level inferences.

In many applications, the objects of ultimate interest are the coefficients of individual units rather than those of the groups to which units are assigned. Researchers and policymakers typically need to draw conclusions about a particular unit, such as the effect of a policy in a specific state. In our empirical application, for instance, each U.S. state can, to some extent, set its own minimum wage, so the employment effect relevant to a given state's policy debate is that state's own coefficient. From this perspective, a concrete decision is guided by the parameter attached to the unit itself, and the group structure is primarily an estimation device that improves precision by pooling information across similar units. Reporting only group-level coefficients would fulfill these purposes if the group structure were known, but doing so can be misleading when a unit's group membership is uncertain. In contrast, an alternative of unit-by-unit time series regression provides unbiased inference but is inefficient. Furthermore, unit-by-unit regression may not be applicable in models such as the time fixed effects model.\footnote{Indeed, unit-by-unit regression cannot be applied in our empirical application due to the presence of time fixed effects and the lack of variation within units.} These observations motivate inference procedures that target unit-specific coefficients while still exploiting the efficiency gains afforded by the group structure.

In this paper, we develop new methods for statistical inference on unit-specific coefficients when these coefficients exhibit a latent group structure. If a unit's group membership were known, inference would simply involve the corresponding group-level coefficient. However, a central challenge in making valid statistical inferences on the coefficient for each unit is that group memberships are usually unobserved and must be estimated from the data, which introduces additional statistical uncertainty. This is particularly relevant in situations where group-specific parameters can be estimated precisely, but where uncertainty remains in the group membership structure. \textcite{dzemskiokui2021convergence} show that such situations occur when the group memberships of a sufficient number of units—not necessarily all—are consistently estimated, while other units with larger error variances have uncertain group assignments (namely their group memberships may not be consistently estimated). In such a scenario, making statistical inferences for units with relatively large error variances by relying solely on coefficient estimates for the estimated group can lead to misleading conclusions because it ignores statistical uncertainty.

To address this issue, our approach integrates standard inference procedures for group-specific coefficients, such as $t$-tests and Wald tests, with the confidence set methodology for group membership proposed by \textcite{dzemski2024confidence}. This combination ensures that our inference procedures properly account for the uncertainty in group membership estimation and, at the same time, enjoy the efficiency of group-wise estimation.

We propose two methods for unit-level statistical inference. The first approach relies on minimum-type test statistics. It constructs a confidence set by inverting the minimum of the $t$- or Wald test statistics for group-specific parameters, with the minimum taken over the groups included in a confidence set for the group membership. The underlying idea is that, under the true null hypothesis, the test statistic should be small when the group assignment is correct; therefore, it should be small for at least one group in the confidence set. This construction suggests that the resulting confidence set can be represented as a union of confidence sets built for each group. The minimum-type test statistics method may produce confidence sets that are disconnected. This possible disconnectedness has both advantages and disadvantages: it can reduce the volume of the confidence set and increase power, but in practice, a disconnected confidence set may be harder to interpret or communicate.

The second method employs bias correction. The coefficient for a given unit is estimated using its group-specific coefficient, but this estimator can be biased if the group assignment is inaccurately determined. Our approach addresses this bias by explicitly accounting for potential group misclassification. Specifically, we adjust the test statistic by considering the difference between the coefficient under the estimated group membership and those of other plausible groups to which the unit may belong, as indicated by the confidence set of \textcite{dzemski2024confidence}. The bias correction can be applied to the $t$-statistic for a single parameter and to the Wald test statistic for multiple parameters. In contrast to the minimum-type approach, the bias-correction approach necessarily produces a connected confidence interval for a single parameter. It is important to note that bias correction for the Wald test statistic is nontrivial, as the bias can affect the statistic nonlinearly. The bias correction ensures that the confidence set achieves the desired coverage probability. However, this approach is inherently conservative.

We also propose an asymptotic variance estimator valid under fixed-$T$ asymptotics, extending \textcite[][S.2.2]{bonhomme2015grouped} to models with heterogeneous coefficients. This asymptotic variance estimator is robust to group misallocation and better suited to our setting. We find that it achieves better coverage and does not necessarily make inference conservative. This contribution may also be of independent interest.

Our proposed procedures are straightforward to implement, both conceptually and computationally. They require only the group-specific coefficient estimates, their standard errors, and unit-level confidence sets for group memberships. Our Monte Carlo simulations indicate that the most computationally intensive step is the estimation of the group membership structure and group-specific coefficients. However, once these are obtained, the calculation of confidence sets for group memberships is rapid, as documented in \textcite{dzemski2024confidence}. With this preliminary information in hand, applying our procedure is seamless.

As an empirical illustration, we examine the effect of a minimum-wage increase in each US state. The impact of the minimum wage varies across states. It is therefore important to estimate state-specific effects. This is especially relevant since states can set their own policies. We use the data from \textcite{dube2010minimum} and extend the analysis by \textcite{wang2019heterogeneous,dzemski2024confidence}. In those studies, states are grouped into four based on similarities in coefficients. 
The results highlight the importance of accounting for group membership uncertainty. Specifically, when the groups differ mainly in covariate coefficients rather than the minimum wage effect itself, ignoring uncertainty in group assignments can be misleading. Our confidence sets address this issue by explicitly incorporating group assignment uncertainty.

The Monte Carlo simulations demonstrate the situations in which our proposed methods are most effective. We compare our methods to those that ignore statistical uncertainty in group membership estimation (naive) and to the unit-by-unit estimation approach. The naive method fails to provide adequate coverage for units whose group memberships remain uncertain, leading to severe undercoverage. The unit-by-unit method achieves adequate coverage, but its interval length can be large unless the error variance is small. In contrast, our methods maintain adequate coverage and, while our confidence intervals may be wider than those from the unit-by-unit approach for units with small error variance, they are significantly shorter when the error variance is moderate to large. This advantage is especially notable when confidence intervals are allowed to be unions of disjoint intervals. We therefore conclude that our methods are particularly beneficial for units with moderate to large error variances.

Our contributions are twofold. First, while earlier studies have primarily investigated asymptotic distributions and inference for group-specific coefficients, they do not address inference for unit-level coefficients. Second, our approach does not rely on the result that the estimated group structure converges to the true membership with probability approaching one (so-called super-consistency). Note that under super-consistency, inferences for group-specific and unit-specific coefficients become equivalent. Our method incorporates the statistical uncertainty associated with estimating group memberships.

\paragraph{Literature review}

This paper contributes to the recent econometrics literature on latent group structure. A partial list includes \textcite{hahn2010panel, lin2012estimation, bonhomme2015grouped, sarafidis2015partially, ando2016panel,su_identifying_2016, wang2016homogeneity, liu2020identification, mehrabani2022estimation,chetverikov2022spectral,yu2024spectral, mugnier2025simple}.
\textcite{pionati2025latent} provides an excellent review.
This paper addresses the underdeveloped area of inference methods for panel data models with latent group structure, specifically focusing on unit-specific coefficients. To our knowledge, this problem has not been previously tackled.  \textcite{wan2025conditionalselectiveinferenceselected,akgun2025robustinferencemethodslatent} consider post-clustering inference in panel data, but their focus is again on group-specific coefficients. \textcite{beyhum2024inference} also consider post-clustering inference, but their interest is in the parameter common to all units.

Our approach combines standard statistical inference methods for the parameter of interest with a confidence set for the nuisance parameter, which, in our case, is the group membership. Our minimum-type approach builds on a long tradition of obtaining valid inference by combining over a confidence set for a nuisance parameter. \textcite{Dufour1990} developed this approach for time series models. In statistics, \textcite{BergerBoos1994} and \textcite{Silvapulle1996} independently formalized computing a valid $p$-value by taking the supremum of $p$-values over a confidence set for the nuisance parameter. The same principle has been adapted to post-moment-selection inference by \textcite{DiTraglia2016} in moment restriction models for GMM estimators. A related idea is used by \textcite{kaji2025controlling} for a financial application. A key distinction between the earlier works and the present paper is that the nuisance parameter in our model is discrete, while the earlier works considered continuous nuisance parameters; the discrete nature of the nuisance parameter may result in disconnected confidence intervals. The bias-correction approach, by contrast, relates to the literature on bias-aware confidence intervals built from bias bounds.  For instance, \textcite{ArmstrongWeidnerZeleneev2025} construct bias-aware confidence intervals for interactive fixed effects models based on a possible set of interactive fixed effects.

Our method is related to empirical Bayes in terms of purpose, but is fundamentally different from it in both approach and implementation. Empirical Bayes is a widely used approach for estimating and inferring unit-specific parameters and unit-specific predictions. Recent econometric articles on this topic include \textcite{liu2020forecasting,liu2021panel,liu2023forecasting,kwon_optimal}. The core principle of empirical Bayes methods is to utilize information from other units to improve the precision of these estimates. These methods typically begin with unit-level statistics and subsequently adjust them using information from the broader population, thereby enhancing statistical inference. In contrast, the proposed method adopts a different perspective. It assumes the presence of a latent group structure, enabling unit-specific parameters to be estimated as parameters shared among all units within a group. After estimating group-level parameters, the method incorporates information from the unit of interest to ensure robustness against statistical errors arising from the clustering process.

\paragraph{Roadmap} Section \ref{sec:setting} introduces the setting. Section \ref{sec:method} describes our two new methods for inference on unit-specific parameters. Their asymptotic properties are given in Section \ref{sec:asym}. Section \ref{sec:asym_var} develops asymptotic variance estimators, including one valid under fixed-$T$ asymptotics. Section \ref{sec:empirical} provides an empirical illustration, and Section \ref{sec:simu} presents Monte Carlo simulation results. Section \ref{sec:conclusion} concludes the paper. The appendix contains the proofs of the main results, implementation details, and additional discussions.

\section{Settings and preliminaries}\label{sec:setting}

We consider panel linear regression models with heterogeneous coefficients. Specifically, the coefficients exhibit a grouped pattern of heterogeneity. Our objective is to conduct statistical inference on the coefficients for each unit, accounting for this latent group structure.

Suppose we observe panel data $\{(y_{it}, x_{it}) : t = 1, \dots, T;\ i = 1, \dots, N\}$, where $y_{it}$ denotes the scalar outcome for unit $i$ at time $t$, and $x_{it}$ is a $p \times 1$ vector of regressors. 

We consider the following linear regression model with heterogeneous coefficients:
\begin{align}
   y_{it} = x_{it}' \beta_i + \varepsilon_{it},
\end{align}
where $\beta_i$ is the $p\times 1 $ coefficient vector for unit $i$, $\varepsilon_{it}$ is the error term, and $\beta_i^0$ denotes the true value of $\beta_i$. 

The coefficients are assumed to follow a grouped pattern of heterogeneity: units are partitioned into $G$ distinct groups, denoted by $\mathbb{G} = \{1, \dots, G\}$. Let $g_i \in \mathbb{G}$ represent the group membership for unit $i$, with $g_i^0$ the true group assignment. All units within the same group share a common coefficient vector, while units in different groups have distinct coefficient vectors. Let $\theta_g\in\Theta$ denote the coefficient vector for group $g$, and $\theta_g^0$ its true value. Formally, we assume
\begin{align}
   \beta_i^0 = \theta_{g_i^0}^0.
\end{align}

Our objective is to conduct statistical inference on $\beta_i$. With sufficiently long panel data, one could, in principle, estimate $\beta_i$ for each unit individually using time-series regressions. However, this approach can be inefficient, especially when units share a grouped pattern of heterogeneity. By pooling cross-sectional information from units within the same group, we can make more precise inferences about $\theta_{g_i}$, the common coefficient for group $g_i$. The main challenge is that group memberships are unknown and must be estimated from the data. This introduces additional statistical uncertainty, making it non-trivial to fully leverage the cross-sectional information while properly accounting for the estimation of group structure. Indeed, ignoring this uncertainty may lead to biased inference as we argue below.

\subsection{The source of bias}\label{subsec:bias}

Suppose that some estimates $\hat{\theta}_g$ and $\hat{g}_i$ for the group-level slopes and the group memberships are available. We estimate $\beta_i$ using $\hat{\theta}_{\hat g_i}$. There are two main sources of statistical uncertainty: (1) the estimation error of $\hat{\theta}_{g}$ as an estimator of $\theta_g$, and (2) uncertainty in the estimated group membership $\hat g_i$. For the first, the literature provides valid procedures to evaluate the statistical uncertainty of group-specific coefficients. However, to make inferences for $\beta_i$, the second issue, namely an additional bias caused by the misclassification of group membership, must be addressed.

To be more specific, the estimation error of $\hat{\theta}_{\hat{g}_i}$ can be decomposed as
\begin{align}
    \hat{\theta}_{\hat{g}_i} - \theta^0_{g_i^0} = \left(\hat{\theta}_{g_i^0} - \theta^0_{g_i^0}\right) + \left(\hat{\theta}_{\hat{g}_i} - \hat{\theta}_{g_i^0}\right) .
    \label{eqn:decompose}
\end{align}
In the above decomposition, the first term converges in distribution to a normal distribution, whereas the second term represents the bias induced by potential misclassification. 
The literature often assumes that the probability of $\hat g_i = g_i^0$ converges to one, effectively treating the bias term as negligible. However, in finite samples, this bias may remain substantial and should not be ignored. Inference about the unit-specific parameters requires accounting for uncertainty in the estimation of $g_i$.

In what follows, we will explain how this statistical uncertainty can be quantified. We note that we are agnostic about the procedures used in the estimation of $\theta_g$ and the construction of confidence sets for $g_i$, as long as the conditions stated later are satisfied. For concreteness, we discuss the KMeans-type procedure given in \textcite{bonhomme2015grouped} and \textcite{okui_heterogeneous_2021} for the estimation of the group-specific parameters and the procedure by \textcite{dzemski2024confidence} for the confidence set for group membership.

\subsection{Estimation of the group-level parameters}\label{subsec:est_group_slope}

The first step is to estimate the group-specific parameter $\theta_{g}$. It may be estimated by minimizing the least square objective function with respect to both the group-specific coefficients and group structure. We denote the group-specific coefficient estimator by $\hat{\theta}_g$ for $g\in \mathbb{G}$. 
Also, let $\hat{g}_{i}$ denote the estimates for $g_i^0$, for $i=1, \dots,N$. The estimator is defined as:
\begin{align}
\{ \{\hat \theta_g\}_{g \in \mathbb{G}}, \{ \hat g_i\}_{i=1,\dots,N} \}=  \argmin_{ \{ \theta_g\}_{g \in \mathbb{G}}, \{ g_i\}_{i=1,\dots,N} }
\sum_{i=1}^N \sum_{t=1}^T (y_{it} - x_{it}'\theta_{g_i})^2.
\end{align}

This minimization problem may be solved by extending the KMeans algorithm. 
\begin{enumerate}
    \item $s=0$. Let $\hat{g}_i^{(0)}$ be the initial group assignment.
    \item Compute $\theta_{g}^{(s)}$ for $g \in \mathbb{G}$ using the observations belonging to group $g$:
    \begin{align}
        \hat{\theta}_{g}^{(s)} = \argmin_{\theta_g} \sum_{\hat{g}_i^{(s)}=g} \sum_{t=1}^T (y_{it} - x_{it}'\theta_g)^2.
    \end{align}
    \item Update the group membership assignment based on $\hat{\theta}_{g}^{(s)}$:
    \begin{align}
        \hat{g}_{i}^{(s+1)} = \argmin_{g \in \mathbb{G}} \sum_{t=1}^T (y_{it} - x_{it}'\hat{\theta}_{g}^{(s)})^2.
    \end{align}
    \item $s= s+1$.
    \item Repeat 2-4 until convergence.
\end{enumerate}
The algorithm extends KMeans clustering to panel-data regression models and is considered by \textcite[][S.4.2]{bonhomme2015grouped} in the context of models with grouped fixed effects and \textcite{okui_heterogeneous_2021} for models with time-varying coefficients with group structure. Step 1 initializes the process by assigning each individual to an initial group. In Step 2, group-specific regression coefficients are estimated by minimizing the sum of squared residuals within each group, using only the observations currently assigned to that group. Step 3 reassigns each individual to the group whose estimated parameters best fit their data, specifically selecting the group that minimizes the squared prediction error aggregated across all time periods. Step 4 increments the iteration counter. Steps 2 through 4 are repeated until convergence, defined as the point at which group assignments stabilize between iterations. The algorithm alternates between estimating group-specific parameters and reassigning observations to groups, analogous to the standard KMeans algorithm but adapted for regression contexts.
This algorithm always converges to a local minimum. To ensure the global minimum, we would need to run this algorithm with many initial values. 

We also need two additional statistics that can be computed from the estimation results.
Let $\widehat{N}_{g}=\Sigma_{i=1}^{N}1\left\{\hat{g}_{i}=g\right\}$ denote the estimates for $N_g=\Sigma_{i=1}^{N}1\left\{g_{i}^0=g\right\}$ (the number of units in group $g$). We also use $\widehat{\Sigma}_{g}$ to denote the estimates for the asymptotic variance-covariance matrices $\Sigma_g$ of $\sqrt{N_{g} T}\left(\hat{\theta}_g-\theta_{g}^{0}\right)$. Section \ref{sec:asym_var} discusses how to obtain $\widehat{\Sigma}_{g}$.

We note that many alternative estimators are available in the literature \parencite[e.g.,][]{su_identifying_2016, wang2016homogeneity, mehrabani2022estimation}. Our proposed inference procedures are agnostic about the choice of the estimator. All estimators that satisfy Assumptions \ref{asm:asym_normal}--\ref{asm:vcov_consistent} given below can be used.

\subsection{Confidence set for group membership}\label{subsec:cs_membership}

Next, we construct a confidence set for the group membership parameter $g_i$ by following the procedure outlined in \textcite{dzemski2024confidence}. The central insight is that, for each unit, the sum of squared errors is minimized when the coefficient corresponds to the true group membership. To obtain the confidence set, the procedure inverts a test that examines whether this minimum is achieved under the parameter value associated with the group specified by the null hypothesis. We note that while \textcite{dzemski2024confidence} center on confidence sets for the entire group structure, our discussion here focuses on a marginal confidence set for each unit, as our interest is in inferring the coefficient for each unit.

The first step is to compute the adjusted difference between the squared losses under the two group membership assignments. Let
\begin{align*}
    \hat{d}_{it}(g, h) = \frac{1}{2} \left( \left(y_{it} -  x_{it}'\hat{\theta}_{g}\right)^2 - \left(y_{it} - x_{it}'\hat{\theta}_{h}\right)^2
    + \left(x_{it}' \left(\hat{\theta}_{g} - \hat{\theta}_{h}\right)\right)^2 \right).
\end{align*}
The first two terms on the right-hand side are squared residuals representing the fit of assigning unit $i$ to group $g$ and to group $h$, respectively. The third term is an adjustment that recenters the difference between the two squared residuals so that it is asymptotically mean zero under the null hypothesis $g_i^0 = g$. The $t$-statistic is then computed as follows:
\begin{align*}
    \widehat{D}_{i}(g, h) = \frac{\sum_{t = 1}^T \hat{d}_{it}(g, h) /\sqrt{T}}{ \sqrt{ \widehat{\Xi}_i (g,h,h)} },
\end{align*}
where $\widehat{\Xi}_i (g,h,h)$ is a long-run variance estimator of $\hat{d}_{it}(g, h)$. Details regarding the calculation of $\widehat{\Xi}_i (g,h,h)$ are provided in Appendix \ref{appendix: confsetgroup}.

The test statistic is defined as the maximum $t$-statistic, taken over all $h \neq g$, where $g$ is the group membership assignment under the null hypothesis. The test statistic is given by:
\begin{align*}
    \widehat{T}_i(g) =& \max_{h \in \mathbb{G} \setminus \{g\}} \widehat{D}_{i}(g, h).
\end{align*}
If the null hypothesis $g_i^0=g$ is true, all values of $\widehat{D}_{i}(g, h)$ are unlikely to be large and positive. The null hypothesis is rejected when $\widehat{T}_i(g)$ exceeds the critical value. This test is one-sided.

The critical value is computed from the distribution of the maximum of a random vector following a multivariate $t$-distribution and is denoted as $\hat{c}_{i, \alpha}(g)$, where $\alpha$ is a prespecified size.\footnote{The discussion here is for a general $\alpha$. Later, we use $\alpha_1$ for the confidence set for group memberships.} In the case of two groups ($G=2$), the critical value is computed from the usual $t$-distribution with degrees of freedom $T-1$:
\begin{align*}
    \hat{c}_{i, \alpha} (g)
    = \sqrt{\frac{T}{T - 1}} t_{T- 1}^{-1}\left(1 - \alpha\right),
\end{align*}
where $t_{T - 1}^{-1}(p)$ denotes the $p$-quantile of Student's $t$-distribution with $(T - 1)$ degrees of freedom. 
For $G \geq 3$, the critical value is given by
\begin{align*}
    \hat{c}_{i, \alpha} (g) =& c_{\alpha} \left( \rho(\widehat{\Omega}_i(g), \epsilon_N) \right) = \sqrt{\frac{T}{T - 1}} \left(t_{\max, \rho(\widehat{\Omega}_i(g), \epsilon_N), T - 1}\right)^{-1} \left(1 - \alpha \right),
\end{align*}
where $t_{\max, \Omega, T-1}$ denotes the distribution function of the maximal entry of a centered random vector with a multivariate $t$-distribution with scale matrix $\Omega$ and $(T-1)$ degrees of freedom. $\widehat{\Omega}_i (g)$ is an estimator of the correlation matrix of the moment inequalities, $\rho$ is a regularization function, and $\epsilon_N$ is a regularization parameter. The procedures for obtaining $\widehat{\Omega}_i (g)$, $\rho$, and $\epsilon_N$ are discussed in Appendix \ref{appendix: confsetgroup}.

The marginal confidence set for unit $i$ is obtained by inverting a test for group membership.
\begin{align*}
\mathrm{CS}_{i, \alpha} =& \left\{ g \in \mathbb{G} : \widehat{T}_i(g) \leq \hat{c}_{i, \alpha} (g)  \right\} \cup \left\{\hat{g}_i\right\}.
\end{align*}
$\hat{g}_i$ is added to the confidence set to ensure that the set is not empty.

All necessary components for the proposed procedures have now been collected. Specifically, the confidence sets for $\beta_i$ can be constructed by combining $\hat{\theta}_{g}$, $\widehat{\Sigma}_g$, and $\mathrm{CS}_{i, \alpha} $. The following section discusses the proposed procedures.

\section{Inference methods for unit-specific coefficients}\label{sec:method}

In this section, we present inference methods for unit-specific coefficients $\beta_i$. We combine the group-level slope estimates and their standard errors with the confidence set for group membership to construct confidence sets for each unit’s coefficient. Specifically, let $1-\alpha$ be our intended coverage probability. We divide $\alpha$ into two parts: $\alpha = \alpha_1 +\alpha_2$. The confidence level for the group assignment is set to $1 - \alpha_1$. For each group membership assignment, we consider a confidence set for the slope coefficients with coverage of $1- \alpha_2$. Combining these two achieves the intended coverage $1-\alpha$. Two procedures are proposed for this purpose. Both methods address the bias in \eqref{eqn:decompose}. The minimum-type approach avoids it by considering various possible groups within which the test statistic is evaluated, and the bias-correction method keeps the estimated group but expands the confidence set to include the coefficient values in the other possible groups.

\subsection{Confidence set based on minimum-type statistics}\label{subsec:method_um}

We first propose a procedure based on the minimum of the $t$- or Wald test statistics across groups in the confidence set for group membership. The idea is that, under the true value of the coefficient, $\beta_i^0=\theta_{g_i^0}^0$, the test statistic constructed from the group-level estimate, $\hat{\theta}_g$, should be small for some group $g$. The problem is that relying on $\hat{\theta}_{\hat{g}_i}$ incurs bias due to uncertainty in the point estimate $\hat{g}_i$, as shown in \eqref{eqn:decompose}. Therefore, it is natural to base inference on $\hat{\theta}_g$ for $g\in\mathrm{CS}_{i, \alpha_1}$, instead of relying solely on $\hat{\theta}_{\hat{g}_i}$, to avoid the bias identified in \eqref{eqn:decompose}. 

Specifically, our proposal is to take the minimum of the (absolute) test statistics based on $\hat{\theta}_g$ over $g\in\mathrm{CS}_{i, \alpha_1}$, and then compare it with a standard critical value. The confidence set follows upon inverting the minimum test statistic. Suppose we want to construct a confidence set for $R\beta_i^0$ for some $r\times p$ matrix $R$ ($r\leq p$). Construction is discussed separately for $r=1$ and $r \geq 2$.

When $r=1$, the minimum absolute $t$-statistic is defined as
\begin{align}
    \tilde{t}_{i,NT, \alpha_1}(\theta) \coloneqq \min_{g\in \mathrm{CS}_{i, \alpha_1}}\left|\frac{R\left(\hat{\theta}_g - \theta\right)}{\mathrm{se}(g)}\right|,
\end{align}
where $\operatorname{se}(g)\coloneqq\sqrt{R\widehat{\Sigma}_gR' / \widehat{N}_{g} T}$. The confidence interval based on $\tilde{t}_{i,NT, \alpha_1}(\theta)$ is defined as
\begin{align}
    \mathrm{CI}^{\mathrm{MT}}(i,\alpha) \coloneqq \left\{R\theta: \tilde{t}_{i,NT,\alpha_1}(\theta) \leq z_{1-\alpha_2/2}\right\},
    \label{def:umt_t}
\end{align}
with $z_{\alpha}$ the $\alpha$-quantile of $N(0,1)$. 

Next, we consider inference for multiple parameters or multiple linear combinations of parameters, that is, when $r \geq 2$. In this case, we consider the Wald statistic, $W_{i,NT}(\theta, \hat{g}_i,\hat{g}_i)$, where 
\begin{align}
W_{i, N T}(\theta, g, f)\coloneqq\widehat{N}_{g} T\left(\hat{\theta}_{f}-\theta\right)^{\prime} R'\left(R\widehat{\Sigma}_{g}R'\right)^{-1}R\left(\hat{\theta}_{f}-\theta\right).   
\end{align}
The second argument $g$ of $W_{i,NT}$ indicates the group used for scaling $\widehat{N}_{g}$ and $\widehat{\Sigma}_{g}$, and the third argument $f$ refers to the group for the coefficient vector.
As in the case with $r=1$, $W_{i,NT}(\theta, \hat{g}_i,\hat{g}_i)$ suffers from bias due to uncertainty in $\hat{g}_i$, albeit in a nonlinear fashion because of its quadratic form.
By the decomposition, $(\hat{\theta}_{g_i^0} - \theta^0_{g_i^0} ) + (\hat{\theta}_{\hat{g}_i} - \hat{\theta}_{g_i^0})$, in \eqref{eqn:decompose}, it holds that 
\begin{align}
\widehat{N}_{g_{i}^{0}}& T\left(\hat{\theta}_{\hat{g}_{i}}-\theta_{g_{i}^{0}}^{0}\right)^{\prime} R'\left(R\widehat{\Sigma}_{g_i^0}R'\right)^{-1}R\left(\hat{\theta}_{\hat{g}_{i}}-\theta_{g_{i}^{0}}^{0}\right) \\
=& 
\widehat{N}_{g_{i}^{0}} T\left(\hat{\theta}_{g_{i}^{0}}-\theta_{g_{i}^{0}}^{0}\right)^{\prime} R'\left(R\widehat{\Sigma}_{g_i^0}R'\right)^{-1}R\left(\hat{\theta}_{g_{i}^{0}}-\theta_{g_{i}^{0}}^{0}\right) \\
&+
\underbrace{\widehat{N}_{g_{i}^{0}} T\left(\hat{\theta}_{\hat{g}_{i}}-\hat{\theta}_{g_{i}^{0}}\right)^{\prime} R'\left(R\widehat{\Sigma}_{g_i^0}R'\right)^{-1}R\left(\hat{\theta}_{\hat{g}_{i}}-\hat{\theta}_{g_{i}^{0}}\right)}_{\mathrm{Bias} \ 1} \\
& + \underbrace{2 \widehat{N}_{g_{i}^{0}} T\left(\hat{\theta}_{\hat{g}_{i}}-\hat{\theta}_{g_{i}^{0}}\right)^{\prime} R'\left(R\widehat{\Sigma}_{g_i^0}R'\right)^{-1}R\left(\hat{\theta}_{g_{i}^{0}}-\theta_{g_{i}^{0}}^{0}\right)}_{\mathrm{Bias} \ 2} .
\label{eqn:decompose_2}
\end{align}
We observe two sources of bias arising even when $\widehat{N}_{g_i^0}$ and $\widehat{\Sigma}_{g_i^0}$ are known.
The first term represents the squared bias, which is necessarily positive. The second term is the cross-product between the bias and the estimation error, and its sign may be either positive or negative depending on the data. Furthermore, since $W_{i,NT}(\theta, \hat{g}_i,\hat{g}_i)$ uses $\widehat{N}_{\hat{g}_i}$ and $\widehat{\Sigma}_{\hat{g}_i}$ for scaling, the possible misclassification also affects inference through this channel.

The above bias can be avoided by using the minimum-Wald statistic:
\begin{align*}
 \widetilde{W}_{i, N T, \alpha_1}(\theta)\coloneqq \operatorname{min}_{g \in \mathrm{CS}_{i, \alpha_1}} W_{i, N T}(\theta, g, g).   
\end{align*}
The confidence set is defined as
\begin{align}
   \mathrm{CS}^{\mathrm{MW}}(i, \alpha)\coloneqq\left\{R\theta: \theta \in \Theta, \widetilde{W}_{i, N T, \alpha_1}(\theta) \leq \chi_{p, 1-\alpha_2}^{2}\right\},
\end{align}
where $\chi_{p,\alpha}^2$ is the $\alpha$-quantile of the chi-square distribution with $p$ degrees of freedom.

To practically compute these confidence sets, note that $\mathrm{CI}^{\mathrm{MT}}(i,\alpha)$ and $\mathrm{CS}^{\mathrm{MW}}(i, \alpha)$ can be calculated as the union of the confidence intervals (or confidence sets) based on $\hat{\theta}_g$, where the union is taken across $g\in\mathrm{CS}_{i, \alpha_1}$. More specifically, our confidence interval or set can be written as:
\begin{align}
    \mathrm{CI}^{\mathrm{MT}}(i,\alpha) 
    &= \bigcup_{g\in\mathrm{CS}_{i, \alpha_1}}\left[R\hat{\theta}_g-z_{1-\alpha_2 /2} \times \mathrm{se}(g), \ R\hat{\theta}_g+z_{1-\alpha_2 /2} \times \mathrm{se}(g)\right],
    \intertext{and}
   \mathrm{CS}^{\mathrm{MW}}(i, \alpha) 
    &= \bigcup_{g\in\mathrm{CS}_{i, \alpha_1}}\left\{R\theta: W_{i, N T}(\theta, g, g) \leq \chi_{p, 1-\alpha_2}^{2}\right\}.
\end{align}
This formulation follows the ideas of \textcite{Dufour1990,BergerBoos1994,Silvapulle1996}, which are early works on constructing valid confidence sets that combine over nuisance parameter values. 

The minimum-Wald approach may produce disjoint confidence sets.\footnote{We found through simulations that the minimum-type confidence interval is disjoint more frequently when $T$ is smaller and/or the error variance is larger.} This feature has both advantages and disadvantages. Compared with the bias-correction approach introduced below, which always yields connected intervals, this approach would yield smaller confidence sets and higher power. A drawback is that disjoint confidence intervals may be hard to interpret and explain. We illustrate these features in the empirical example.

\subsection{Confidence set based on bias correction}\label{subsec:method_bc}

Next, we introduce an alternative procedure that relies on bias correction. While the bias given in \eqref{eqn:decompose} and \eqref{eqn:decompose_2} cannot be directly estimated, we can bound its possible magnitude. We discuss cases with $r=1$ and $r\geq 2$ separately. The method for $r=1$ is for the scalar parameter and applies the upper and lower bounds of the bias to the confidence interval. The method for $r\geq 2$ corrects the quadratic forms of the bias and the cross-product term, corresponding to the formula in \eqref{eqn:decompose_2}.

When $r=1$, our $100(1-\alpha)\%$ bias-corrected confidence interval is defined as follows:
\begin{align}
  \mathrm{CI}^{\mathrm{BC}}(i, \alpha)
  &\coloneqq\left[R\hat{\theta}_{\hat{g}_{i}}+\widehat{\mathrm{LBias}}_i-z_{1-\alpha_2 /2} \times \mathrm{se}_{i}, \ R\hat{\theta}_{\hat{g}_{i}}+\widehat{\mathrm{UBias}}_i+z_{1-\alpha_2 /2} \times \mathrm{se}_{i}\right] \\
  &=\left[\min _{g \in \mathrm{CS}_{i, \alpha_1}}R\hat{\theta}_{g}-z_{1-\alpha_2 /2} \times \mathrm{se}_{i}, \ \max _{g \in \mathrm{CS}_{i, \alpha_1}}R\hat{\theta}_{g}+z_{1-\alpha_2 /2} \times \mathrm{se}_{i}\right],  
\end{align}
where
\begin{align}
\widehat{\mathrm{LBias}}_i\coloneqq & \min _{g \in \mathrm{CS}_{i, \alpha_1}}R\left(\hat{\theta}_{g}-\hat{\theta}_{\hat{g}_{i}}\right), \\
\widehat{\mathrm{UBias}}_i\coloneqq & \max _{g \in \mathrm{CS}_{i, \alpha_1}}R\left(\hat{\theta}_{g}-\hat{\theta}_{\hat{g}_{i}}\right), \\
\mathrm{se}_{i}\coloneqq & \max _{g \in \mathrm{CS}_{i, \alpha_1}} \operatorname{se}(g).    
\end{align}

The bias-corrected confidence interval is constructed by explicitly bounding the magnitude of the bias, which arises from statistical uncertainty in $\hat g_i$. By considering a confidence set for group membership, we bound the bias from below by $\widehat{\mathrm{LBias}}_i$ and from above by $\widehat{\mathrm{UBias}}_i$.

It is also important to correct the standard error. To illustrate this, consider the decomposition of the $t$-statistic:
\begin{align}
    \frac{\sqrt{\widehat{N}_{\hat{g}_{i}} T}\left(R\hat{\theta}_{\hat{g}_{i}}-R\theta_{g_i^0}^{0}\right)}{\sqrt{R\widehat{\Sigma}_{\hat{g}_{i}}R'}}=A_{i,NT}^* + B_{i,NT}^*,
\end{align}
where $A_{i, N T}^*=\sqrt{\widehat{N}_{\hat{g}_{i}} T}R\left(\hat{\theta}_{g_i^0}-\theta_{g_i^0}^{0}\right) / \sqrt{R\widehat{\Sigma}_{\hat{g}_{i}}R'}$ and $B_{i, N T}^*=\sqrt{\widehat{N}_{\hat{g}_{i}} T}R\left(\hat{\theta}_{\hat{g}_{i}}-\hat{\theta}_{g_i^0}\right) / \sqrt{R\widehat{\Sigma}_{\hat{g}_{i}}R'}$.
Notice that the denominator of $A_{i, N T}^*$ is $\sqrt{R\widehat{\Sigma}_{\hat{g}_{i}}R'}/ \sqrt{\widehat{N}_{\hat{g}_{i}} T}$, but this is not the standard error for the numerator $\hat{\theta}_{g_i^0}-\theta_{g_i^0}^{0}$ when $\hat g_i \neq g_i^0$. Therefore, it is necessary to account for possible group misassignment by using the maximum standard error across all groups in the confidence set for group membership.

We extend this approach to $r\geq2$. The bias formula given in \eqref{eqn:decompose_2} motivates the following bias-corrected Wald statistic:
\begin{align}
\overline{W}_{i, N T, \alpha_1}(\theta)\coloneqq& \min _{g \in \mathrm{CS}_{i, \alpha_1}} W_{i, N T}\left(\theta, g, \hat{g}_{i}\right)-\max _{g \in \mathrm{CS}_{i, \alpha_1}} \widehat{N}_{g} T\left(\hat{\theta}_{\hat{g}_{i}}-\hat{\theta}_{g}\right)^{\prime} R'\left(R\widehat{\Sigma}_{g}R'\right)^{-1}R\left(\hat{\theta}_{\hat{g}_{i}}-\hat{\theta}_{g}\right), \\
&-2 \max _{g \in \mathrm{CS}_{i, \alpha_1}}\left|\widehat{N}_{g} T\left(\hat{\theta}_{\hat{g}_{i}}-\hat{\theta}_{g}\right)^{\prime} R'\left(R\widehat{\Sigma}_{g}R'\right)^{-1}R\left(\hat{\theta}_{g}-\theta\right)\right|.
\label{def:bc_wald}
\end{align}
The bias-corrected Wald statistic, $\overline{W}_{i, N T, \alpha_1}(\theta)$, has three terms. The first, $\min _{g \in \mathrm{CS}_{i, \alpha_1}} W_{i, N T}(\theta, g, \hat{g}_{i})$, resembles the standard Wald statistic but may have bias if group membership is misclassified. We adjust by using variance-covariance matrices for each group in the confidence set (noting the minimization pertains only to $\widehat{N}_g$ and $\widehat{\Sigma}_{g}$). The second term is the maximum squared difference between the coefficients estimated for the assigned and true group. The third term is the cross-product between bias and estimation error; since its sign is indeterminate, we conservatively use its absolute value. This approach is robust, yet inherently conservative.  

A confidence set is constructed by inverting the bias-corrected Wald statistic:
\begin{align}
    \mathrm{CS}^{\mathrm{BC}}(i, \alpha)=\left\{R\theta: \theta\in \Theta, \overline{W}_{i, NT, \alpha_1}(\theta) \leq \chi_{p, 1-\alpha_2}^{2}\right\}.
\end{align}

Confidence intervals based on the bias-corrected $t$ statistic are always connected, unlike those from the minimum $t$-statistic.\footnote{Note that confidence sets based on the bias-corrected Wald statistic are not necessarily connected, whereas the bias-corrected confidence interval is always connected.} However, this connectedness leads to wider, more conservative intervals. It is also important to note that these two methods differ in the choice of the variance or standard errors in the test statistics. The minimum-type approach can utilize the standard error for each group. On the other hand, the bias correction method must use the maximum standard error, since it fixes the center at $\hat{\theta}_{\hat{g}_i}$. Thus, the bias-correction method has two sources of conservativeness. We need to consider trade-offs: the bias-correction method offers connected, interpretable intervals, while the minimum-type approach yields smaller, more powerful ones. These trade-offs are illustrated in an empirical example in Section \ref{sec:empirical} and in the simulations in Section \ref{sec:simu}.

\section{Asymptotic analysis}\label{sec:asym}

In this section, we study the theoretical properties of the confidence sets proposed in Section \ref{sec:method}. We first give a few assumptions and then provide the asymptotic justification for our methods.

\subsection{Assumptions}\label{subsec:asym_assumption}

We derive the asymptotic properties of our proposed inference methods under high-level assumptions about the estimators. These conditions are satisfied by the existing procedures under suitable conditions. We briefly discuss more primitive conditions, but the details can be found in the original papers. 

The first assumption is the asymptotic normality of the group-specific slope estimators.
\begin{asm}\label{asm:asym_normal}
    There exists a permutation $\sigma:\mathbb{G}\to\mathbb{G}$ such that $\sqrt{N_{g} T}\left(\hat{\theta}_{\sigma(g)}-\theta_{g}^{0}\right) \xrightarrow{d} N\left(0, \Sigma_g\right)$ for some $\Sigma_g>0$ for all $g \in\mathbb{G}$.
\end{asm}
The asymptotic normality of the group-specific slope estimators is established for a wide range of clustering methods \parencite{su_identifying_2016, wang2016homogeneity, mehrabani2022estimation}. \textcite{dzemskiokui2021convergence} show that Assumption \ref{asm:asym_normal} can hold even when some units are potentially misclassified---a case where our proposed methods are particularly relevant. Following the convention in the literature, we take $\sigma(g)=g$ for all $g\in\mathbb{G}$ by relabeling groups.

\begin{asm}\label{asm:est_group_size}
    $\widehat{N}_{g} / N_{g} \xrightarrow{p} 1$ for all $g$.
\end{asm}
Assumption \ref{asm:est_group_size} is typically implied by classification consistency; see, for example, Corollary 2.3 of \textcite{su_identifying_2016}. Furthermore, \textcite{dzemskiokui2021convergence} show that Assumption \ref{asm:est_group_size} holds even if some units are potentially misclassified; see the proof of Theorem 1 of \textcite{dzemskiokui2021convergence}.

The next assumption requires that the group-specific variance-covariance matrix estimators are consistent.
\begin{asm}\label{asm:vcov_consistent}
    $\widehat{\Sigma}_{g} \xrightarrow{p} \Sigma_g$ for all $g \in\mathbb{G}$.
\end{asm}
How to obtain $\widehat{\Sigma}_{g} $ satisfying Assumption \ref{asm:vcov_consistent} is discussed in Section \ref{sec:asym_var}. In our setting, group memberships may not be consistently estimated across all units. We extend the existing analyses to accommodate possible inconsistencies in group membership estimation and show that the usual group-specific asymptotic variance estimator is consistent. We also develop an asymptotic variance estimator under fixed-$T$ asymptotics. 

Lastly, we assume that the confidence sets for the group memberships have appropriate coverage.
\begin{asm}\label{asm:cs_membership}
    \begin{align}
        \liminf_{N,T\to\infty}\min_{i=1,\ldots,N}P\left(g_i^0\in \mathrm{CS}_{i, \alpha_1}\right)\geq 1-\alpha_1.
    \end{align}
\end{asm}
\textcite{dzemski2024confidence} extensively discuss sufficient conditions that ensure Assumption \ref{asm:cs_membership}; see Sections 3 and 4 of \textcite{dzemski2024confidence} for details.

\subsection{Asymptotics for the minimum-type confidence set}\label{subsec:asym_um}

We now establish the coverage properties of the confidence sets based on the minimum statistics.

\begin{thm}\label{thm:coverage_um}
    Suppose Assumptions \ref{asm:asym_normal}-\ref{asm:cs_membership} hold.
    
    (i) When $r=1$, we have
    \begin{align}
        \liminf_{N,T\to\infty}\min_{i=1,\ldots,N}P\left(R\theta_{g_i^0}^{0} \in \mathrm{CI}^{\mathrm{MT}}(i, \alpha)\right) \geq 1 - \alpha.
    \end{align}
    
    (ii) When $r\geq2$, we have
    \begin{align}
        \liminf_{N,T\to\infty}\min_{i=1,\ldots,N}P\left(R\theta_{g_{i}^{0}}^{0} \in\mathrm{CS}^{\mathrm{MW}}(i, \alpha)\right) \geq 1 - \alpha.
    \end{align}
\end{thm}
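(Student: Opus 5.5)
The argument is a Bonferroni-type union bound combined with the asymptotic normality of the oracle statistic evaluated at the true group. Fix $i$ and write $g=g_i^0$.

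\textbf{Step 1: reduce to two events.} The event $\{g_i^0\in \mathrm{CS}_{i,\alpha_1}\}\cap\{W_{i,NT}(\theta^0_{g_i^0},g_i^0,g_i^0)\le \chi^2_{p,1-\alpha_2}\}$ implies $\widetilde{W}_{i,NT,\alpha_1}(\theta^0_{g_i^0})\le \chi^2_{p,1-\alpha_2}$. The reason is that the minimum over $\mathrm{CS}_{i,\alpha_1}$ is bounded by the value at any member, in particular at $g_i^0$. Hence $R\theta^0_{g_i^0}\in \mathrm{CS}^{\mathrm{MW}}(i,\alpha)$ on this event. The same reasoning applies to $r=1$, with $\tilde t_{i,NT,\alpha_1}$ and the event $\{|R(\hat\theta_{g}-\theta^0_{g})|/\mathrm{se}(g)\le z_{1-\alpha_2/2}\}$. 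By $P(A\cap B)\ge 1-P(A^c)-P(B^c)$,
\begin{align}
P\left(R\theta^0_{g_i^0}\in \mathrm{CS}^{\mathrm{MW}}(i,\alpha)\right)\ge 1-P\left(g_i^0\notin \mathrm{CS}_{i,\alpha_1}\right)-P\left(W_{i,NT}(\theta^0_{g_i^0},g_i^0,g_i^0)>\chi^2_{p,1-\alpha_2}\right).
\end{align}

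\textbf{Step 2: bound the two error terms uniformly in $i$.} The first term is handled directly by Assumption \ref{asm:cs_membership}, which is already stated uniformly in $i$: its $\limsup$ of the max over $i$ is at most $\alpha_1$. For the second term, the key observation is that $W_{i,NT}(\theta^0_g,g,g)$ depends on $i$ only through $g_i^0$, which takes one of finitely many values $g\in\mathbb{G}$. So
\begin{align}
\max_i P\left(W_{i,NT}>\chi^2\right)\le \max_{g\in\mathbb{G}}P\left(W_g>\chi^2\right),
\end{align}
where $W_g\coloneqq \widehat N_gT(\hat\theta_g-\theta_g^0)'R'(R\widehat\Sigma_gR')^{-1}R(\hat\theta_g-\theta^0_g)$. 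Because $G$ is fixed, it suffices to show $P(W_g>\chi^2_{r,1-\alpha_2})\to\alpha_2$ for each $g$.

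\textbf{Step 3: limit distribution of $W_g$.} By Assumption \ref{asm:asym_normal}, with labels normalized so that $\sigma$ is the identity, $R\sqrt{N_gT}(\hat\theta_g-\theta_g^0)\xrightarrow{d}N(0,R\Sigma_gR')$. Here $R\Sigma_gR'>0$ since $\Sigma_g>0$ and $R$ has full row rank, which I take as implicit in the paper. Assumptions \ref{asm:est_group_size}--\ref{asm:vcov_consistent} give $\widehat N_g/N_g\xrightarrow{p}1$ and $R\widehat\Sigma_gR'\xrightarrow{p}R\Sigma_gR'$. Slutsky's theorem and the continuous mapping theorem then yield $W_g\xrightarrow{d}\chi^2_r$, and for $r=1$ the $t$-ratio tends to $N(0,1)$. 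Since the limit distribution is continuous, $P(W_g>\chi^2_{r,1-\alpha_2})\to\alpha_2$, and for $r=1$ we get $P(|t_g|>z_{1-\alpha_2/2})\to\alpha_2$.

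\textbf{Step 4: combine.} Plugging the bounds from Steps 2--3 into Step 1 gives $\liminf\min_i P(\cdot)\ge 1-\alpha_1-\alpha_2=1-\alpha$.

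\textbf{Main obstacle: the critical value.} The only delicate point is that the paper uses $\chi^2_{p,1-\alpha_2}$ rather than $\chi^2_{r,1-\alpha_2}$. Since $r\le p$, we have $\chi^2_{p,1-\alpha_2}\ge\chi^2_{r,1-\alpha_2}$. This makes the limit of $P(W_g>\chi^2_{p,1-\alpha_2})$ at most $\alpha_2$ rather than equal to it, which only strengthens the inequality.

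The uniformity over $i$ is otherwise trivial here, because of the finite-$G$ reduction in Step 2 and because Assumption \ref{asm:cs_membership} is already uniform.
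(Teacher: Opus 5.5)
Your proof is correct and is the paper's argument: split off $\{g_i^0\notin \mathrm{CS}_{i,\alpha_1}\}$, which costs $\alpha_1+o(1)$ uniformly in $i$ by Assumption \ref{asm:cs_membership}; bound the minimum statistic by its value at $g_i^0$; and take a maximum over the finitely many groups before applying Assumptions \ref{asm:asym_normal}--\ref{asm:vcov_consistent}. Your remark that $\chi^2_{p,1-\alpha_2}\geq\chi^2_{r,1-\alpha_2}$ is right: it makes the paper's final ``$=\alpha+o(1)$'' an inequality, which is all the theorem needs.
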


\noindent\begin{proof}[Proof of Theorem \ref{thm:coverage_um}]
    See Appendix \ref{appendix: proof}.
\end{proof}

\subsection{Asymptotics for the bias-corrected confidence set}\label{subsec:asym_bc}

The following theorem shows that the bias-corrected confidence sets attain nominal coverage.
\begin{thm}\label{thm:coverage_bc}
    Suppose Assumptions \ref{asm:asym_normal}-\ref{asm:cs_membership} hold.
    
    (i) When $r=1$, we have
    \begin{align}
        \liminf_{N,T\to\infty}\min_{i=1,\ldots,N}P\left(R\theta_{g_i^0}^{0} \in \mathrm{CI}^{\mathrm{BC}}(i, \alpha)\right) \geq 1 - \alpha.
    \end{align}
    
    (ii) When $r\geq2$, we have
    \begin{align}
        \liminf_{N,T\to\infty}\min_{i=1,\ldots,N}P\left(R\theta_{g_{i}^{0}}^{0} \in \mathrm{CS}^{\mathrm{BC}}(i, \alpha)\right) \geq 1 - \alpha.
    \end{align}
\end{thm}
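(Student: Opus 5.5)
The plan is a Bonferroni/union-bound argument: show that on the event $\{g_i^0\in\mathrm{CS}_{i,\alpha_1}\}$ the true value $R\theta^0_{g_i^0}$ is covered whenever a statistic built only from $\hat\theta_{g_i^0}$ and group $g_i^0$'s scaling is small. Write $\mathcal{E}_{i}=\{g_i^0\in \mathrm{CS}_{i,\alpha_1}\}$ and
\begin{align}
A_i=\left\{\widehat N_{g_i^0}T\,(\hat\theta_{g_i^0}-\theta^0_{g_i^0})'R'(R\widehat\Sigma_{g_i^0}R')^{-1}R(\hat\theta_{g_i^0}-\theta^0_{g_i^0})\le c\right\},
\end{align}
with $c=z_{1-\alpha_2/2}^2$ for $r=1$ and $c=\chi^2_{p,1-\alpha_2}$ for $r\ge2$. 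The claim to prove is that $\mathcal{E}_i\cap A_i$ implies coverage. Then $P(\text{cover})\ge 1-P(\mathcal{E}_i^c)-P(A_i^c)$. Assumption \ref{asm:cs_membership} controls the first term uniformly in $i$. For the second term, $A_i^c$ depends on $i$ only through $g_i^0\in\mathbb{G}$, which is finite. So $\max_i P(A_i^c)\le\max_{g\in\mathbb{G}}P(A^{(g)c})$, where $A^{(g)}$ is the analogous event for group $g$. By Assumptions \ref{asm:asym_normal}--\ref{asm:vcov_consistent} and Slutsky, each such statistic converges to $\chi^2_r$ (or is the square of an $N(0,1)$ variable). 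Hence $\limsup_{N,T}\max_g P(A^{(g)c})\le\alpha_2$ for $r=1$, and for $r\ge2$ the same bound holds because $\chi^2_{p,1-\alpha_2}\ge\chi^2_{r,1-\alpha_2}$ when $r\le p$. Combining the two terms and taking $\liminf\min_i$ gives $1-\alpha_1-\alpha_2=1-\alpha$.

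\textbf{Part (i).} On $\mathcal{E}_i$ we have $\mathrm{se}_i\ge \mathrm{se}(g_i^0)$ and
\begin{align}
\min_{g\in\mathrm{CS}}R\hat\theta_g\le R\hat\theta_{g_i^0}\le\max_{g\in\mathrm{CS}}R\hat\theta_g .
\end{align}
On $A_i$, $|R\hat\theta_{g_i^0}-R\theta^0_{g_i^0}|\le z_{1-\alpha_2/2}\,\mathrm{se}(g_i^0)\le z_{1-\alpha_2/2}\,\mathrm{se}_i$. It follows that $R\theta^0_{g_i^0}$ lies in the interval given by the second representation of $\mathrm{CI}^{\mathrm{BC}}(i,\alpha)$.

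\textbf{Part (ii).} This is the main obstacle, because of the quadratic bias terms and the min/max structure of $\overline W$. Plug in $\theta=\theta^0_{g_i^0}$, and choose any $\theta$ with $R\theta=R\theta^0_{g_i^0}$; only $R\theta$ enters. Take $g=g_i^0$ in the first minimum, which on $\mathcal{E}_i$ gives
\begin{align}
\min_{g\in\mathrm{CS}}W_{i,NT}(\theta,g,\hat g_i)\le W_{i,NT}(\theta^0_{g_i^0},g_i^0,\hat g_i).
\end{align}
Expand the right-hand side using the algebraic identity \eqref{eqn:decompose_2}, with $\widehat N_{g_i^0}$ and $\widehat\Sigma_{g_i^0}$ as scaling. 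It equals the statistic in $A_i$ plus Bias 1 plus Bias 2. Bias 1 is exactly the second subtracted term of $\overline W$ evaluated at $g=g_i^0$, so it is at most the max over $\mathrm{CS}$. Bias 2 is $2\widehat N_{g_i^0}T(\hat\theta_{\hat g_i}-\hat\theta_{g_i^0})'R'(R\widehat\Sigma_{g_i^0}R')^{-1}R(\hat\theta_{g_i^0}-\theta)$, which is bounded in absolute value by the third term at $g=g_i^0$, and hence by its max. Each term of $\overline W$ is thus upper bounded by its $g_i^0$ counterpart. Therefore $\overline W_{i,NT,\alpha_1}(\theta^0_{g_i^0})$ is at most the $A_i$ statistic, which is at most $\chi^2_{p,1-\alpha_2}$ on $A_i$. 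This proves coverage on $\mathcal{E}_i\cap A_i$.

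Two points need care. First, the sign bookkeeping: the first term of $\overline W$ is a min while the corrections are subtracted maxes, so every inequality goes in the right direction. Second, uniformity over $i$ must come from the finiteness of $\mathbb{G}$, not from any per-unit limit. Invertibility of $R\widehat\Sigma_gR'$ holds with probability tending to one by Assumption \ref{asm:vcov_consistent}, $\Sigma_g>0$, and full row rank of $R$.
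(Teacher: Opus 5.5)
Your proposal is correct and follows essentially the same route as the paper. Both arguments restrict to the event $g_i^0\in\mathrm{CS}_{i,\alpha_1}$, show that noncoverage there forces the own-group statistic built from $\hat\theta_{g_i^0}$, $\widehat N_{g_i^0}$ and $\widehat\Sigma_{g_i^0}$ to exceed its critical value (in part (ii) because \eqref{eqn:decompose_2} bounds $\overline W_{i,NT,\alpha_1}(\theta^0_{g_i^0})$ above by that Wald statistic), and obtain uniformity in $i$ by maximizing over the finitely many groups. Your explicit sign bookkeeping for the subtracted max terms and your observation that $\chi^2_{p,1-\alpha_2}\ge\chi^2_{r,1-\alpha_2}$ only make precise steps the paper compresses; the paper writes the last bound as $\alpha_2+o(1)$, which is exact only when $r=p$.
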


\noindent\begin{proof}[Proof of Theorem \ref{thm:coverage_bc}]
    See Appendix \ref{appendix: proof}.
\end{proof}

\begin{rem}
    Theorems \ref{thm:coverage_um} and \ref{thm:coverage_bc} guarantee unit-wise (marginal) correct coverage by the bias-corrected and minimum-type confidence sets. The joint coverage across $N$ units can also be established under the assumption of joint coverage by the first step confidence set, $\mathrm{CS}_{i,\alpha}$, for group membership and a suitably adjusted nominal level for the second step confidence set construction; see Appendix \ref{appendix: joint_coverage} for details.
\end{rem}

\subsection{Length analysis}\label{subsec:asym:length}
In this subsection, we compare the length properties of our proposed method’s confidence interval with those of two existing alternatives: the unit-by-unit and naive approaches, defined below. Our minimum-type confidence interval is asymptotically shorter than the unit-by-unit approach. Although the naive procedure that ignores uncertainty in group membership estimation produces an even shorter interval, it is not guaranteed to be valid.

The first alternative, the unit-by-unit confidence interval, is constructed from the unit-wise time-series regression. Letting $\tilde{\beta}_i\coloneqq (\sum_{t=1}^Tx_{it}x_{it}')^{-1}\sum_{t=1}^Tx_{it}y_{it}$, the unit-by-unit confidence interval is defined as
\begin{align}
    \mathrm{CI}^{\mathrm{ubu}}(i,\alpha) \coloneqq \left[R\tilde{\beta}_i-z_{1-\alpha/2} \times \mathrm{se}_{i,\beta}, \ R\tilde{\beta}_i+z_{1-\alpha/2} \times \mathrm{se}_{i,\beta}\right],
\end{align}
where $\mathrm{se}_{i,\beta}$ is the standard error of $R\tilde{\beta}_i$. The second alternative is the clustering-based naive confidence interval, ignoring uncertainty in $\hat{g}_i$. This confidence interval, denoted by $\mathrm{CI}^{\mathrm{naive}}(i,\alpha)$, is constructed using $\hat{\theta}_{\hat{g}_i}$ and $\mathrm{se}(\hat{g}_i)$:
\begin{align}
    \mathrm{CI}^{\mathrm{naive}}(i,\alpha) \coloneqq \left[R\hat{\theta}_{\hat{g}_{i}}-z_{1-\alpha/2} \times \mathrm{se}(\hat{g}_i), \ R\hat{\theta}_{\hat{g}_{i}}+z_{1-\alpha/2} \times \mathrm{se}(\hat{g}_i)\right].
\end{align}

We study the length properties of four confidence intervals in total: the unit-by-unit, naive, bias-corrected, and minimum-$t$-based confidence intervals.
Under the standard assumptions, $\tilde{\beta}_i$ is $\sqrt{T}$-consistent, which implies that $|\mathrm{CI}^{\mathrm{ubu}}(i,\alpha)|^{-1} = (2z_{1-\alpha/2}\mathrm{se}_{i,\beta})^{-1} = O_p(\sqrt{T})$, where $|\mathrm{CI}(i,\alpha)|$ denotes the length of $\mathrm{CI}(i,\alpha)$ excluding the gaps between disjoint intervals (namely, it is the Lebesgue measure of $\mathrm{CI}(i,\alpha)$). In contrast, the length of the naive confidence interval is $|\mathrm{CI}^{\mathrm{naive}}(i,\alpha)|=2z_{1-\alpha/2}\mathrm{se}(\hat{g}_i)=O_p(1/\sqrt{NT})$ under Assumption \ref{asm:asym_normal}. The length of the minimum-$t$-based confidence interval, $\mathrm{CI}^{\mathrm{MT}}(i,\alpha)$, is also of this order, since it is a finite union of the group-specific confidence intervals. Hence, we have 
\begin{align}
    \frac{|\mathrm{CI}^{\mathrm{naive}}(i,\alpha)| + |\mathrm{CI}^{\mathrm{MT}}(i,\alpha)|}{|\mathrm{CI}^{\mathrm{ubu}}(i,\alpha)|} = O_p(1/\sqrt{N}) \xrightarrow{p} 0.
\end{align}
Note that $|\mathrm{CI}^{\mathrm{naive}}(i,\alpha)| \leq |\mathrm{CI}^{\mathrm{MT}}(i,\alpha)|$, because the naive procedure uses $\alpha$ for constructing the interval, while the minimum-type approach uses $\alpha_2<\alpha$ for each interval. We thus obtain $|\mathrm{CI}^{\mathrm{naive}}(i,\alpha)| \leq |\mathrm{CI}^{\mathrm{MT}}(i,\alpha)| \ll |\mathrm{CI}^{\mathrm{ubu}}(i,\alpha)|$.

The length of the bias-corrected confidence interval, $\mathrm{CI}^{\mathrm{BC}}(i,\alpha)$, is harder to study. It should be noticed that its length is $O_p(1)$ without additional assumptions. To see this, we write $|\mathrm{CI}^{\mathrm{BC}}(i,\alpha)| = 2z_{1-\alpha_2/2}\mathrm{se}_i + \max_{g,f\in\mathrm{CS}_{i,\alpha_1}}(\hat{\theta}_g - \hat{\theta}_f)$. Whereas the first term is $O_p(1/\sqrt{NT})$, the second term is $O_p(1)$ in general, as $\mathrm{CS}_{i,\alpha_1}$ may not be a singleton. That said, we recall that the advantages of the bias-corrected confidence interval are its connectedness and interpretability. We may establish conditions under which $\max_{g,f\in\mathrm{CS}_{i,\alpha_1}}(\hat{\theta}_g - \hat{\theta}_f)=O_p(1/g(N,T))$ for some $g(N,T)\to\infty$, although doing so requires analyzing the power of the confidence set $\mathrm{CS}_{i,\alpha_1}$ for the group membership. We leave this analysis as a topic for future research.

\section{Asymptotic variance estimation}\label{sec:asym_var}

In this section, we consider the estimation of the asymptotic variance of the group-specific coefficients. Namely, we construct asymptotic variance estimators, $\hat{\Sigma}_g$, that satisfy Assumption \ref{asm:vcov_consistent}. Since we consider cases where the group memberships of some units are not consistently estimated, we need to extend the existing asymptotic analyses to accommodate such situations. We first consider the usual asymptotic variance estimator and show its consistency even when some units are misclassified. We also develop an asymptotic variance estimator that is valid when $T$ is fixed and only $N$ tends to infinity. This development may be of independent interest.

\subsection{The cluster robust asymptotic variance estimator}

We first consider the usual cluster robust asymptotic variance estimator and establish its consistency. The estimand is $\Sigma_g$, which in our setting is usually the asymptotic variance-covariance matrix of the within-group OLS estimator, i.e., 
\begin{align*}
    \Sigma_g=M_g^{-1}\Omega_gM_g^{-1}
\end{align*}
where
\begin{align*}
    M_g=&\lim_{N,T\to\infty}\frac{1}{N_gT}\sum_{i=1}^N\sum_{t=1}^T1\{g_i^0=g\}E[x_{it}x_{it}'] \\
    \Omega_g=&\lim_{N,T\to\infty} \frac{1}{N_gT}\sum_{i=1}^N\sum_{j=1}^N\sum_{t=1}^T\sum_{s=1}^T1\{g_i^0=g\}1\{g_j^0=g\}E[x_{it}x_{js}'\varepsilon_{it}\varepsilon_{js}].
\end{align*}
We note that $M_g$ can be consistently estimated by $\widehat{M}_g \coloneqq (\widehat{N}_gT)^{-1}\sum_{i=1}^N\sum_{t=1}^T1\{\hat{g}_i=g\}x_{it}x_{it}'$. 
For $\Omega_g$, we often assume the independence across $i$, so that $\Omega_g$ is simplified and does not include the terms with $i\neq j$. Under such an assumption, a typical choice is 
\citeauthor{arellano1987computing}'s (\citeyear{arellano1987computing}) cluster robust variance estimator: 
\begin{align}
    \widehat{\Omega}_{g} \coloneqq \frac{1}{\widehat{N}_gT}\sum_{i=1}^N\sum_{t=1}^T\sum_{s=1}^T1\{\hat{g}_i=g\}x_{it}x_{is}'\hat{\varepsilon}_{it}\hat{\varepsilon}_{is},
\end{align}
where $\hat{\varepsilon}_{it} \coloneqq y_{it}-x_{it}'\hat{\theta}_{\hat{g}_i}$. 

The next lemma shows that, even if some units are misclassified, this estimator is still consistent under mild conditions.

\begin{lem}\label{lem:vcov_consistency}
    Suppose that Assumptions \ref{asm:asym_normal} and \ref{asm:est_group_size} hold, and that
    \begin{itemize}
        \item[(a)] $\operatorname{liminf}_{N \rightarrow \infty}\left(N_{g} / N\right)>0$ for all $g \in \mathbb{G}$,

        \item[(b)] $\sum_{i=1}^{N} 1\left\{\hat{g}_{i}=g, g_{i}^{0} \neq g\right\}=o_p\left(N / T^{1+q}\right)$ for all $g \in \mathbb{G}$ for some $q>0$,

        \item[(c)] $\sup_{i,t} E[\|x_{it} \varepsilon_{it}\|^{2(1+q) / q+\varepsilon}] + \sup_{i,t}E[\|x_{it}\|^{2(1+q) / q}]<\infty$ for some $\varepsilon>0$,

        \item[(d)] $\{ (x_{it}, \varepsilon_{it}) \}_{t=1}^T$ is independent across $i$,

        \item[(e)] $(x_{it}, \varepsilon_{it})$ for each $i$ is a strong mixing sequence with mixing coefficients bounded uniformly across $i$ by size $(1-c) r /(r-c)$ for some $c \in 2 \mathbb{N}$, $c \geq 2(1+q) / q$, and $r>c$,

        \item[(f)] $\widetilde{\Omega}_g \xrightarrow{p} \Omega_g$ for all $g \in \mathbb{G}$, where
        \begin{align}
            \widetilde{\Omega}_{g}\coloneqq\frac{1}{N_{g} T} \sum_{i=1}^{N} \sum_{t=1}^{T} \sum_{s=1}^{T} 1\left\{g_{i}^{0}=g\right\} x_{it} x_{is}^{\prime} \tilde{\varepsilon}_{it} \tilde{\varepsilon}_{is},
        \end{align}
        and $\tilde{\varepsilon}_{it}\coloneqq y_{it}-x_{it}^{\prime} \hat{\theta}_{g_i^0}$.
    \end{itemize}
    
 Then we have $\widehat{M}_g\xrightarrow{p} M_g$, and $\widehat{\Omega}_{g} \xrightarrow{p} \Omega_{g}$ for all $g \in \mathbb{G}$.
\end{lem}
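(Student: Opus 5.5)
The plan is to compare each estimator with its oracle counterpart built from the true memberships, and to show that the difference comes only from misclassified units. For $\widehat{\Omega}_g$ the oracle counterpart is $\widetilde{\Omega}_g$ from condition (f). For $\widehat{M}_g$ it is $\widetilde{M}_g\coloneqq (N_gT)^{-1}\sum_{i}\sum_t 1\{g_i^0=g\}x_{it}x_{it}'$, which converges to $M_g$ by a weak law of large numbers: this uses independence across $i$ from (d), mixing within $i$ from (e), and the moment bounds in (c).

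We write
\[
1\{\hat g_i=g\}=1\{g_i^0=g\}+1\{\hat g_i=g,g_i^0\neq g\}-1\{\hat g_i\neq g,g_i^0=g\}.
\]
By Assumption \ref{asm:est_group_size}, $N_g/\widehat N_g\to_p 1$, so the normalization may be swapped at no cost. The correction terms are sums over misclassified units, so it suffices to bound sums of the form
\[
\frac{1}{N_gT}\sum_i 1\{i\in\mathcal{E}\}\sum_{t,s}\|x_{it}\|\|x_{is}\|\,|\hat\varepsilon_{it}\hat\varepsilon_{is}|,
\]
and the analogous sum with $\sum_t\|x_{it}\|^2$. Here $\mathcal{E}$ is the random set of misclassified units, and $|\mathcal{E}|=o_p(N/T^{1+q})$ by (b). The second type of misclassification, $1\{\hat g_i\neq g,g_i^0=g\}$, is handled by summing (b) over the other groups $h$.

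The key device is a maximal inequality. Since $\mathcal{E}$ is random and data-dependent, we bound the sum by $|\mathcal{E}|\cdot\max_i Z_i$, where
\[
Z_i\coloneqq T^{-1}\Big\|\sum_t x_{it}\varepsilon_{it}\Big\|^2 .
\]
The residual is $\hat\varepsilon_{it}=\varepsilon_{it}-x_{it}'(\hat\theta_{\hat g_i}-\theta^0_{g_i^0})$, and the parameter difference is $O_p(1)$ uniformly over the finitely many groups by Assumption \ref{asm:asym_normal}. The residual terms therefore reduce to $Z_i$ plus the analogous terms in $\|T^{-1/2}\sum_t x_{it}x_{it}'\|$-type quantities; those are handled the same way, using the moment bound on $x_{it}$ in (c).

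For the maximum we use $E\max_i Z_i\le (\sum_i E Z_i^{m})^{1/m}$ with $m=(1+q)/q$. A Rosenthal/Yokoyama-type moment inequality for strong mixing sequences (this is where the size condition in (e), with $c\geq 2(1+q)/q$ even, enters) gives $\sup_i E\|T^{-1/2}\sum_t x_{it}\varepsilon_{it}\|^{2m}<\infty$ under (c). Hence $\max_i Z_i=O_p(N^{q/(1+q)})$. For the $\Omega$ correction the $T^{-1}$ normalization combines with the double sum over $t,s$, giving a total error of
\[
O_p\!\Big(\frac{|\mathcal{E}|}{N}\,N^{q/(1+q)}\cdot\text{(powers of }T)\Big).
\]

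The main obstacle is the bookkeeping that makes the exponents close. Condition (b) is stated in terms of $T^{1+q}$ rather than $N$, which suggests the intended bound is different: use Hölder's inequality, $\sum_i 1\{i\in\mathcal{E}\}Z_i\le |\mathcal{E}|^{1/(1+q)}\,(\sum_i Z_i^{(1+q)/q})^{q/(1+q)}$, with $\sum_i Z_i^{(1+q)/q}=O_p(NT^{\kappa})$ for some $\kappa$. Raw (unnormalized) sums over $t$ grow like $T$ and produce the $T$-powers, so that $|\mathcal{E}|^{1/(1+q)}N^{q/(1+q)}T^{\cdot}/(NT)=o_p(1)$ exactly when $|\mathcal{E}|=o_p(N/T^{1+q})$. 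I would first try this Hölder route, tracking the power of $T$ carefully. The cross terms with $\hat\theta$ differences are lower order by the same argument.

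Combining the pieces gives $\widehat M_g-\widetilde M_g\to_p0$ and $\widehat\Omega_g-\widetilde\Omega_g\to_p0$. Condition (f) then yields both stated convergences.
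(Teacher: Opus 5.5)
Once you commit to the H\"older route, your argument is the paper's proof. The paper writes $\widehat{\Omega}_g=(1+o_p(1))\widetilde{\Omega}_g-A_1+A_2$, where $A_1$ sums over units leaving $g$ (residual $\tilde\varepsilon_{it}$, parameter error $\hat\theta_g-\theta^0_g$ small) and $A_2$ sums over units wrongly entering $g$ (residual $\varepsilon_{it}-x_{it}'(\hat\theta_g-\theta^0_h)$, parameter error $O_p(1)$). It bounds each piece by H\"older with exponents $1+q$ and $(1+q)/q$ together with the mixing moment bound (Lemma 3 of Hansen, 2007). The binding piece is the term of $A_2$ quadratic in $\hat\theta_g-\theta^0_h$. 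It is $O_p(T)$ per unit and is exactly what forces $T^{1+q}$ in (b), which pins down your $T^{\cdot}=T^2$.

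Drop the maximal-inequality device entirely rather than keeping it as the ``key device.'' For that piece it yields only $o_p\bigl((N^{1/(1+q)}/T)^{q}\bigr)$. Condition (b) makes the misclassified set empty with probability approaching one whenever $N/T^{1+q}$ stays bounded, so the lemma has content only when $N/T^{1+q}\to\infty$. That is precisely where this bound diverges.

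One caveat, which applies equally to the paper's ``same argument'' step: a direct H\"older bound on the quadratic piece needs $\sup_{i,t}E\|x_{it}\|^{4(1+q)/q}<\infty$, not just the $2(1+q)/q$ moments in (c). So your remark that these terms are handled ``using the moment bound on $x_{it}$ in (c)'' is not literally sufficient as stated.
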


\noindent\begin{proof}[Proof of Lemma \ref{lem:vcov_consistency}]
    See Appendix \ref{appendix: proof}.
\end{proof}

Lemma \ref{lem:vcov_consistency} shows that a consistent estimation of the variance-covariance matrices is possible even if some units are misclassified. 

\begin{rem}
    \textcite{dzemskiokui2021convergence} establish the asymptotic normality of the KMeans estimator in the presence of potential misclassification, under a condition that implies $\sum_{i=1}^{N} 1\left\{\hat{g}_{i}=g, g_{i}^{0} \neq g\right\}=o_p\left(N / T\right)$. Proving the consistency of the variance-covariance matrix estimators requires a bit stronger condition stated in (b). Assuming condition (d), \textcite{hansen2007asymptotic} shows that condition (f) holds if $(x_{it},\varepsilon_{it})$ is weakly serially dependent; see Theorem 3 of \textcite{hansen2007asymptotic}.
\end{rem}

We may use other estimators for $\Omega_g$ proposed in the literature. 
An alternative approach is to use \citeauthor{driscoll1998consistent}'s (\citeyear{driscoll1998consistent}) kernel-based estimator, which is robust to both cross-sectional and temporal dependence. We note that, if classification based on $\{\hat{g}_i\}_{i=1}^N$ is consistent, the consistency of these estimators can be established following the same lines as \textcite{hansen2007asymptotic} and \textcite{driscoll1998consistent}. Formally establishing their theoretical properties under possible misclassification is beyond the scope of the current paper.

\subsection{Fixed-$T$ adjustment for variance-covariance matrices}\label{sec:fixed_T}

We next consider the asymptotic variance estimator under fixed-$T$ framework. The previous double asymptotic framework used there might be restrictive in that $T$ is assumed to diverge, so that misclassification is nullified in variance-covariance estimation. In contrast, if $T$ is short, then the variance-covariance estimation needs to incorporate possible misclassification, as pointed out by \textcite{bonhomme2015grouped}. In this section, we follow \textcite{pollard1981strong}, \textcite{pollard1982central} and \textcite[][S.2.2]{bonhomme2015grouped} to construct short-$T$ variance-covariance matrix estimators, assuming $T$ is fixed as $N\to\infty$.

For the purpose of exposition, define $\boldsymbol{\theta}\coloneqq\left(\theta_{1}^{\prime}, \ldots, \theta_{G}^{\prime}\right)^{\prime}$, $\hat{\boldsymbol{\theta}}\coloneqq\left(\hat{\theta}_{1}^{\prime}, \ldots, \hat{\theta}_{G}^{\prime}\right)^{\prime}, y_{i}\coloneqq\left(y_{i 1}, \ldots, y_{i T}\right)^{\prime}$, and $x_{i}\coloneqq\left(x_{i 1}, \ldots, x_{i T}\right)^{\prime}$. 
Let $\hat{g}_{i}(\boldsymbol{\theta})=\argmin_{g \in \mathbb{G}}\left\|y_{i}-x_{i} \theta_{g}\right\|^{2}$ denote the optimal group assignment for unit $i$ conditional on $\boldsymbol{\theta}$. We allow the specified number of groups, $G$, to not coincide with the true number of groups, $G^0$, as long as $G$ is no larger than $G^0$. Following \textcite{pollard1981strong} and \textcite[][S.2.2]{bonhomme2015grouped}, we assume the following condition.
\begin{asm}
    \begin{itemize}
    \item[(a)] $\theta_{g} \in \Theta$ for all $g \in \mathbb{G}$, where $\Theta$ is compact.

    \item[(b)] $\left(y_{i}, x_{i}\right)$ are i.i.d. across $i$.

    \item[(c)] $E\left[\left\|y_{i}\right\|^{2}\right]+E\left[\left\|x_{i}\right\|^{2}\right]<\infty$.

    \item[(d)] $\max _{i} \lambda_{\max }\left(x_{i}^{\prime} x_{i}\right) \leq M<\infty$ for some $M>0$, where $\lambda_{\max}(\cdot)$ is the maximum eigenvalue.

    \item[(e)] The solution
    \begin{align}
        \bar{\boldsymbol{\theta}}=\underset{\boldsymbol{\theta}}{\argmin} E\left[\left\|y_{i}-x_{i} \theta_{\hat{g}_{i}(\boldsymbol{\theta})}\right\|^{2}\right]
    \end{align}
    is unique up to relabeling of groups for $G \leq G^{0}$.

    \item[(f)] $\left(y_{i}, x_{i}\right)$ has a continuous density, and $y_{i}$ has a continuous density given $x_{i}$.
\end{itemize}
\label{asm:fixed_T}
\end{asm}

Adapting the argument by \textcite{pollard1981strong}, it can be shown that under Assumptions \ref{asm:fixed_T} (a)--(e), $d_{H}(\hat{\boldsymbol{\theta}}, \bar{\boldsymbol{\theta}}) \xrightarrow{p} 0$, where $d_{H}$ is the Hausdorff distance. Furthermore, $\bar{\boldsymbol{\theta}}$ solves the following moment condition:
\begin{align}
    E\left[1\left\{\hat{g}_{i}(\bar{\boldsymbol{\theta}})=g\right\} x_{i}^{\prime}\left(y_{i}-x_{i} \bar{\theta}_g\right)\right]=0 \text{, for all } g \in \mathbb{G}.
    \label{eqn:moment_condition}
\end{align}
In this sense, $\bar{\boldsymbol{\theta}}$ may be viewed as the pseudo-true slope parameter that is identified based on the fixed-$T$ moment restrictions, \eqref{eqn:moment_condition}.

Following the argument by \textcite{bonhomme2015grouped}, we then have, under Assumption \ref{asm:fixed_T}, as $N\to \infty$ while $T$ is fixed, 
\begin{align}
    \sqrt{N}(\hat{\boldsymbol{\theta}}-\bar{\boldsymbol{\theta}}) \xrightarrow{d} N\left(0, \Gamma^{-1} V \Gamma^{-1}\right),
\end{align}
where
\begin{align}
    V=E\left[W_i(\bar{\boldsymbol{\theta}})\left(y_{i}-x_i \theta_{\hat{g}_{i}(\bar{\boldsymbol{\theta}})}\right)\left(y_{i}-x_i \theta_{\hat{g}_{i}(\bar{\boldsymbol{\theta}})}\right)^{\prime} W_i(\bar{\boldsymbol{\theta}})^{\prime}\right],
\end{align}
$W_i(\boldsymbol{\theta})=e_{\hat{g}_{i}(\boldsymbol{\theta})} \otimes x_{i}^{\prime}$, $e_g$ is the $G\times 1$ vector whose $g$-th element is 1 and 0's elsewhere, and $\Gamma=\left(\Gamma_{g\widetilde{g}} \right) \in \mathbb{R}^{p G \times p G}$ with
\begin{align}
    \Gamma_{g \widetilde{g}}=-\left.\frac{\partial}{\partial \theta_{\widetilde{g}}^{\prime}}\right|_{\boldsymbol{\theta}=\overline{\boldsymbol{\theta}}} E\left[1\left\{\hat{g}_{i}(\boldsymbol{\theta})=g\right\} x_{i}^{\prime}\left(y_{i}-x_{i} \theta_g\right)\right], \  g, \tilde{g} \in \mathbb{G}.
\end{align}

The following representation of $\Gamma$ is useful in the estimation.
\begin{prop}
Suppose Assumption \ref{asm:fixed_T} holds. Then we have
    \begin{align}
    &\Gamma_{gg} = E\left[x_{i}^{\prime} x_{i} 1\left\{\hat{g}_{i}(\bar{\boldsymbol{\theta}})=g\right\}\right] -E\left[\sum_{h \neq g} \int_{\overline{S}_{g h}} \frac{x_{i}^{\prime}\left(y-x_{i} \bar{\theta}_{g}\right)\left(y-x_{i} \bar{\theta}_{g}\right)^{\prime} x_{i}}{\left\|x_{i}\left(\bar{\theta}_{h}-\bar{\theta}_{g}\right)\right\|} f\left(y \mid x_{i}\right) d y\right]
    \intertext{and}
    &\Gamma_{g\widetilde{g}} 
    = E\left[\int_{\overline{S}_{g \widetilde{g}}} \frac{x_{i}^{\prime}\left(y-x_{i} \bar{\theta}_{g}\right)\left(y-x_{i} \bar{\theta}_{\widetilde{g}}\right)^{\prime} x_{i}}{\left\|x_{i}\left(\bar{\theta}_{\widetilde{g}}-\bar{\theta}_{g}\right) \right\|} f\left(y\mid x_{i}\right) d y\right] \text{ for } g \neq \tilde{g},
\end{align}
where
\begin{align}
     S_{g h}=\left\{y \in \mathbb{R}^{T} ;\left\|y-x \theta_{g}\right\|^{2}=\left\|y-x \theta_{h}\right\|^{2}, \text{ and } \left\|y-x \theta_{g}\right\|^{2} \leq\left\|y-x \theta_{\widetilde{g}}\right\|^{2} \text{ for all } \tilde{g} \neq g, h\right\},
\end{align}
and $\overline{S}_{gh}$ is $S_{gh}$ evaluated at $\boldsymbol{\theta}=\overline{\boldsymbol{\theta}}$.
     \label{prop:fixed_T}
\end{prop}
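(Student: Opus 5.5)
The plan is to compute the derivative of $m_g(\boldsymbol{\theta}) \coloneqq E[1\{\hat g_i(\boldsymbol{\theta})=g\} x_i'(y_i-x_i\theta_g)]$ by conditioning on $x_i$. We write
\begin{align}
m_g(\boldsymbol{\theta}) = E\left[\int_{A_g(\boldsymbol{\theta},x_i)} x_i'(y-x_i\theta_g) f(y\mid x_i)\,dy\right],
\end{align}
where $A_g(\boldsymbol{\theta},x)=\{y:\|y-x\theta_g\|^2\le \|y-x\theta_h\|^2 \ \forall h\}$ is the Voronoi-type cell, a polyhedron in $\mathbb{R}^T$. Ties have probability zero by Assumption \ref{asm:fixed_T}(f). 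The parameter $\boldsymbol{\theta}$ enters in two places: through the integrand (only via $\theta_g$) and through the region $A_g$ (via all $\theta_h$). Differentiating the integrand under the integral sign gives $-E[x_i'x_i 1\{\hat g_i(\bar{\boldsymbol\theta})=g\}]$. With the minus sign in the definition of $\Gamma$, this is the first term of $\Gamma_{gg}$. It contributes nothing to $\Gamma_{g\tilde g}$ for $\tilde g\neq g$.

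The boundary contribution will come from a Reynolds transport / Leibniz rule for moving polyhedral domains. The face of $A_g$ shared with $A_h$ lies on the hyperplane
\begin{align}
\phi_{gh}(y)\coloneqq\|y-x\theta_g\|^2-\|y-x\theta_h\|^2 = 0,
\end{align}
which is affine in $y$ and has gradient $\nabla_y\phi_{gh}=2x(\theta_h-\theta_g)$. The outward unit normal is $n=x(\theta_h-\theta_g)/\|x(\theta_h-\theta_g)\|$. A perturbation of the parameters moves this face with normal velocity $-\partial\phi_{gh}/\partial\theta \,/\, \|\nabla_y\phi_{gh}\|$. We have $\partial\phi_{gh}/\partial\theta_g'=-2(y-x\theta_g)'x$ and $\partial\phi_{gh}/\partial\theta_h' = 2(y-x\theta_h)'x$.

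The transport formula then yields, for the derivative with respect to $\theta_{\tilde g}$ with $\tilde g\neq g$ (only the face $S_{g\tilde g}$ moves),
\begin{align}
\int_{S_{g\tilde g}} x'(y-x\theta_g)\,\frac{-(y-x\theta_{\tilde g})'x}{\|x(\theta_{\tilde g}-\theta_g)\|} f(y\mid x)\,dS(y).
\end{align}
After the sign flip from the definition of $\Gamma$, this gives the stated $\Gamma_{g\tilde g}$. For the derivative with respect to $\theta_g$, every face $S_{gh}$, $h\neq g$, moves with velocity $+(y-x\theta_g)'x/\|x(\theta_h-\theta_g)\|$. This yields the subtracted sum in $\Gamma_{gg}$. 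The integrals over $\overline S_{gh}$ are to be read as surface integrals, which the statement writes as $dy$. I will also check that the factors of 2 cancel between $\partial\phi$ and $\|\nabla\phi\|$.

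The main obstacle is justifying the interchange of differentiation and expectation, and the Leibniz rule for the moving domain. I would first apply the divergence-theorem form of the transport formula for fixed $x$. This uses the continuity of $f(y\mid x)$ from Assumption \ref{asm:fixed_T}(f), and the fact that $\bar\theta_h\neq\bar\theta_g$, so the normals are well defined whenever $x(\bar\theta_h-\bar\theta_g)\neq0$. Next, I would dominate the difference quotients using the moment bound (c), the bounded eigenvalues (d), and compactness (a). Dominated convergence then lets me pass the derivative through $E[\cdot]$ over $x_i$. Lower-dimensional intersections of faces, where three or more cells meet, have surface measure zero, so they do not contribute.
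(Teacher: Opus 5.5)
Your proposal is correct and is essentially the paper's argument: the paper likewise splits $\partial m_g/\partial\theta_{\tilde g}'$ into the integrand term $-E[x_i'x_i1\{\hat g_i(\bar{\boldsymbol\theta})=g\}]$ and a boundary term over the faces $S_{gh}$ of the cell $V_g$, computing Pollard-style velocity vectors by moving points of $S_{gh}$ along the normal $x(\theta_h-\theta_g)/\|x(\theta_h-\theta_g)\|$ (Lemmas \ref{lem:first_deriv_indicator}--\ref{lem:partial_exp}), and this gives exactly the normal velocities you obtain more directly by implicitly differentiating $\phi_{gh}$. One caveat, shared with the paper (which simply defers to \textcite{pollard1982central} and \textcite{bonhomme2015grouped} for the interchange): the domination step you sketch does not follow from (a), (c), (d) and the continuity in (f) of Assumption \ref{asm:fixed_T} alone, because the boundary terms are hyperplane integrals of $\|y-x\theta_g\|^2 f(y\mid x)/\|x(\theta_h-\theta_g)\|$, which an $L^2$ moment bound and a merely continuous density do not control, so a Pollard-type hyperplane-integrability condition would have to be added.
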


\noindent\begin{proof}[Proof of Proposition \ref{prop:fixed_T}]
    See Appendix \ref{appendix: proof}.
\end{proof}

To estimate $V$ and $\Gamma$, we follow \textcite{bonhomme2015grouped}. Specifically, we use the following estimators:
\begin{align}
    &\hat{V} \coloneqq \frac{1}{N}\sum_{i=1}^NW_i\left(\hat{\boldsymbol{\theta}}\right)\hat{\varepsilon}_{i,\widehat{g}_i(\widehat{\boldsymbol{\theta}})}\hat{\varepsilon}_{i,\widehat{g}_i(\widehat{\boldsymbol{\theta}})}'W_i\left(\hat{\boldsymbol{\theta}}\right)', \\
    &\hat{\Gamma}_{gg} \coloneqq \frac{1}{N}\sum_{i=1}^Nx_i'x_i1\{\hat{g}_i(\hat{\boldsymbol{\theta}})=g\} - \frac{1}{N}\sum_{i=1}^N\sum_{h\neq g} \hat{\Delta}_{igh}(\eta_N) \frac{x_i'\hat{\varepsilon}_{i,g}\hat{\varepsilon}_{i,g}'x_i}{\|x_i(\hat{\theta}_h - \hat{\theta}_g)\|}, \\
    \intertext{and}
    &\hat{\Gamma}_{g\widetilde{g}} \coloneqq \frac{1}{N}\sum_{i=1}^N\hat{\Delta}_{ig\widetilde{g}}(\eta_N) \frac{x_i'\hat{\varepsilon}_{i,g}\hat{\varepsilon}_{i, \widetilde{g}}'x_i}{\|x_i(\hat{\theta}_{\widetilde{g}} - \hat{\theta}_g)\|},
\end{align}
where $\hat{\varepsilon}_{i, g} \coloneqq y_i - x_i\hat{\theta}_g$, and
\begin{align}
    &\hat{\Delta}_{igh}(\eta_N) \coloneqq \frac{1}{\eta_N}\lambda \left(\eta_N^{-1}\frac{(\hat{\theta}_h-\hat{\theta}_g)'x'}{\|x(\hat{\theta}_h - \hat{\theta}_g)\|} \ \frac{\hat{\varepsilon}_{i,h}+\hat{\varepsilon}_{i,g}}{2}\right)  \\
    &\hspace{2.5cm} \times 1\left\{\max\left(\left\|\hat{\varepsilon}_{i,g}\right\|^2,\left\|\hat{\varepsilon}_{i,h}\right\|^2\right) \leq \min_{\tilde{h}\neq g,h}\left\|\hat{\varepsilon}_{i,\widetilde{h}}\right\|^2\right\}.
\end{align}
$\lambda(\cdot)$ is a kernel function, and $\eta_{N}$ is the bandwidth parameter such that $\eta_{N} \rightarrow 0$ and $\sqrt{N}{\eta_{N}} \rightarrow \infty$. We use the Gaussian kernel, $\lambda=\phi$, and
\begin{align}
    \eta_N = 1.06 \min_{g,h\neq g}\left(\sqrt{\widehat{\mathrm{Var}}\left(\frac{(\hat{\theta}_h-\hat{\theta}_g)^{'}x'}{\|x(\hat{\theta}_h - \hat{\theta}_g)\|} \ \frac{\hat{\varepsilon}_{i,h}+\hat{\varepsilon}_{i,g}}{2}\right)}\right)N^{-1/5}.
\end{align}

We then obtain $\hat \Sigma_g$ by taking the $g$-th block of $(\hat N_g / N) T \hat{\Gamma}^{-1} \hat{V}\hat{\Gamma}^{-1} $. This asymptotic variance estimator satisfies Assumption \ref{asm:vcov_consistent} when we take both $N, T\to \infty$. Note that the fixed-$T$ asymptotic distribution is centered around $\overline{\boldsymbol{\theta}}$, which is not the same as the true value of $\boldsymbol{\theta}$. As $T\to \infty$, they tend to be identical. Because of this property, this estimator satisfies Assumption \ref{asm:vcov_consistent} only under $T \to \infty$. Nonetheless, this fixed-$T$ version is expected to perform better than the usual cluster-robust variance estimator because it can accommodate misclassification of group memberships.

\section{Empirical applications}
\label{sec:empirical}

As an empirical illustration, we investigate the impact of minimum wage increases across U.S. states. The impact of the minimum wage has been documented as heterogeneous, so it is interesting to estimate the effect for each state, given that each state can, to some extent, set its own minimum wage policy. Our analysis builds on the foundational work of \textcite{dube2010minimum} and uses similar data sources. We adopt the group structure framework explored by \textcite{wang2019heterogeneous}, and compute confidence sets for group memberships following \textcite{dzemski2024confidence}. Importantly, our approach extends \textcite{dzemski2024confidence} by enabling inference on the effect of minimum wage changes at the state level.

Our empirical setting closely follows \textcite{dzemski2024confidence}, using the same approaches for coefficient estimation and confidence set construction for group memberships. However, our interest is in the unit-specific coefficients. We also use fixed-$T$ standard errors. The dataset, described in \textcite{dube2010minimum}, comprises quarterly observations for 1,380 U.S. counties from 1990Q1 to 2006Q2. Consistent with \textcite{wang2019heterogeneous}, we set the number of groups to four, as determined by the information criterion of \textcite{su_identifying_2016}. Coefficient estimates are obtained using the CLasso method \parencite{su_identifying_2016}. The estimated model is as follows:
\begin{align*}
    \log (\texttt{emp}_{ict}) = \theta_{g^0_i, 1} \log (\texttt{mw}_{ict}) + \theta_{g^0_i, 2} \log (\texttt{pop}_{ict}) + \theta_{g^0_i, 3} \log (\texttt{emp}_{ict}^{\texttt{TOT}}) + \phi_c + \tau_t + \sigma_i v_{ict}
\end{align*}
where $i = 1, \dotsc, 51$ indexes states, $c = 1, \dotsc, n_i$ indexes counties within state $i$, and $t = 1, \dotsc, T = 66$ indexes time periods, with $n_i$ denoting the number of counties in state $i$. Here, $\texttt{emp}_{ict}$ denotes employment in the restaurant sector, $\texttt{mw}_{ict}$ is the minimum wage, and $\texttt{emp}_{ict}^{\text{TOT}}$ represents total employment across all sectors. $\phi_c$ and $\tau_t$ capture county and time fixed effects, respectively, and $\sigma_i v_{ict}$ is an idiosyncratic error term.
Table \ref{tab:app:slope_coef} presents the estimated coefficients and associated fixed-$T$ standard errors. Our primary interest lies in $\theta_{g,1}$, which measures the effect of minimum wage changes for each group $g$. We subsequently compute the confidence sets for group memberships. Details of the computational procedure are provided in Appendix \ref{app:detail_empirical}. Table \ref{tab:empirics_cs_categories} summarizes the point estimate and confidence set for group membership for each state. Notably, the confidence sets and point estimates for group memberships allow states to be classified into 11 categories as in Table \ref{tab:empirics_cs_categories}, with states within a given category sharing the same confidence interval.

\begin{table}
\centering

\begin{tabular}{lcccc}
\toprule
 & \multicolumn{4}{c}{Groups} \\
\cmidrule(lr){2-5}
Parameter & $g=1$ & $g=2$ & $g=3$ & $g=4$ \\
\midrule
$\theta_{1}$ & 0.55 & -0.03 & 0.06 & -0.25 \\
 & (0.14) & (0.06) & (0.04) & (0.04) \\[0.5em]
$\theta_{2}$ & 0.63 & 0.60 & 0.34 & 0.47 \\
 & (0.47) & (0.34) & (0.08) & (0.11) \\[0.5em]
$\theta_{3}$ & 0.51 & 0.61 & 0.41 & 0.53 \\
 & (0.39) & (0.34) & (0.04) & (0.09) \\
\bottomrule
\end{tabular}

\caption{\label{tab:app:slope_coef}Estimates for the group-specific slope coefficients and fixed-$T$ standard errors. The coefficient estimates are derived from the replication of \textcite[][Table 1]{dzemski2024confidence}. The standard errors are the fixed-$T$ version.}
\end{table}

\begin{table}
	\centering
	\begin{threeparttable}
		\renewcommand{\arraystretch}{1.4}	\begin{tabular}{ccc}
			\hline $\hat{g}_i$ & $\mathrm{CS}_{i,0.05}$ &states \\
            \hline 1 & $\{1\}$ & Alabama, Georgia, Louisiana, Mississippi, Ohio, South Carolina, Texas \\
            \hline\multirow{3}{*}{2} & $\{2\}$ & \makecell{Arkansas, Maine, Maryland, Michigan, Minnesota, Missouri, \\ New York, North Carolina, Rhode Island, Tennessee} \\
            & $\{2,3\}$ & Delaware, Massachusetts,  New Jersey, Virginia \\
            & $\{1,2,3,4\}$ & District of Columbia, Hawaii, Nevada \\
            \hline\multirow{4}{*}{3} & $\{3\}$ & California, Illinois, Indiana, Pennsylvania, West Virginia \\
            & $\{2,3\}$ & Connecticut,Oklahoma, Wisconsin \\
            & $\{3,4\}$ & Idaho, Kentucky, New Hampshire \\
            & $\{2,3,4\}$ & Alaska, Arizona, Montana, New Mexico, Wyoming \\
            \hline\multirow{3}{*}{4} & $\{4\}$ & Florida,  Iowa, Kansas, North Dakota, Oregon, Vermont, Washington \\
            & $\{3,4\}$ & Colorado, Utah \\
            & $\{2,3,4\}$ & Nebraska, South Dakota \\
             \hline 
		\end{tabular}
        \caption{\label{tab:empirics_cs_categories}Categories based on point estimates and confidence sets for group membership. This table is based on the replication of \textcite[][B.1]{dzemski2024confidence} and formatted by the authors.}
	\end{threeparttable} 
\end{table} 

The estimated effect of the minimum wage increase exhibits heterogeneity; see $\theta_{g,1} $ in Table \ref{tab:app:slope_coef} and their 90\% naive confidence intervals given in Figure \ref{fig:empirics_naive}. There are two groups with estimated positive effects of the minimum wage on employment: one with a large coefficient and the other with a small effect. Similarly, two groups have negative effects, with one having a large effect and the other having a small effect. However, the interpretation of this result is more nuanced once we take into account the statistical uncertainty of the group membership estimation, as we discuss in detail below. 

\begin{figure}
	\centering
     \includegraphics[width=\columnwidth]{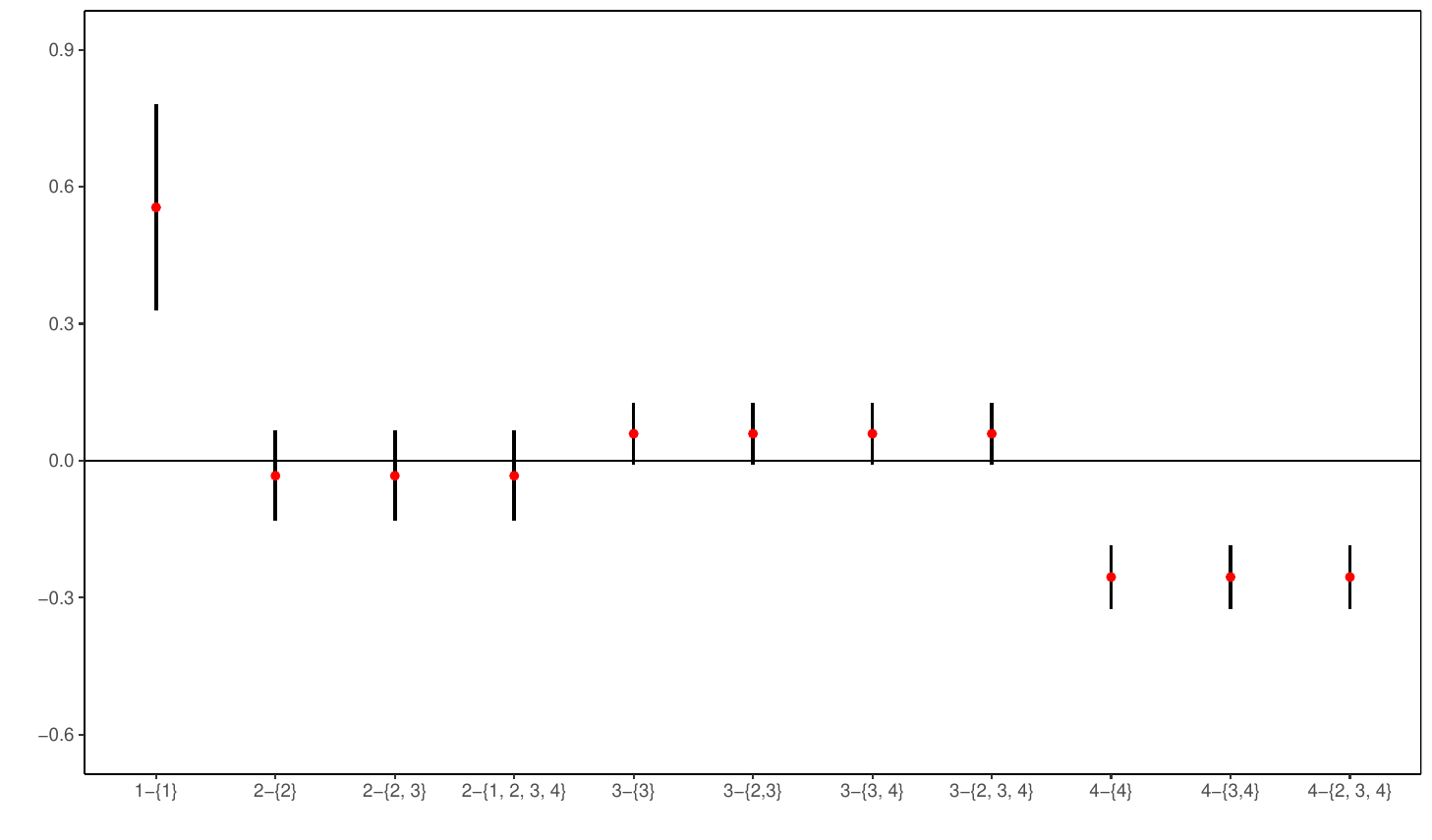}
    \caption{90\% naive confidence intervals}
    \label{fig:empirics_naive}
\end{figure}

\begin{figure}
	\centering
     \includegraphics[width=\columnwidth]{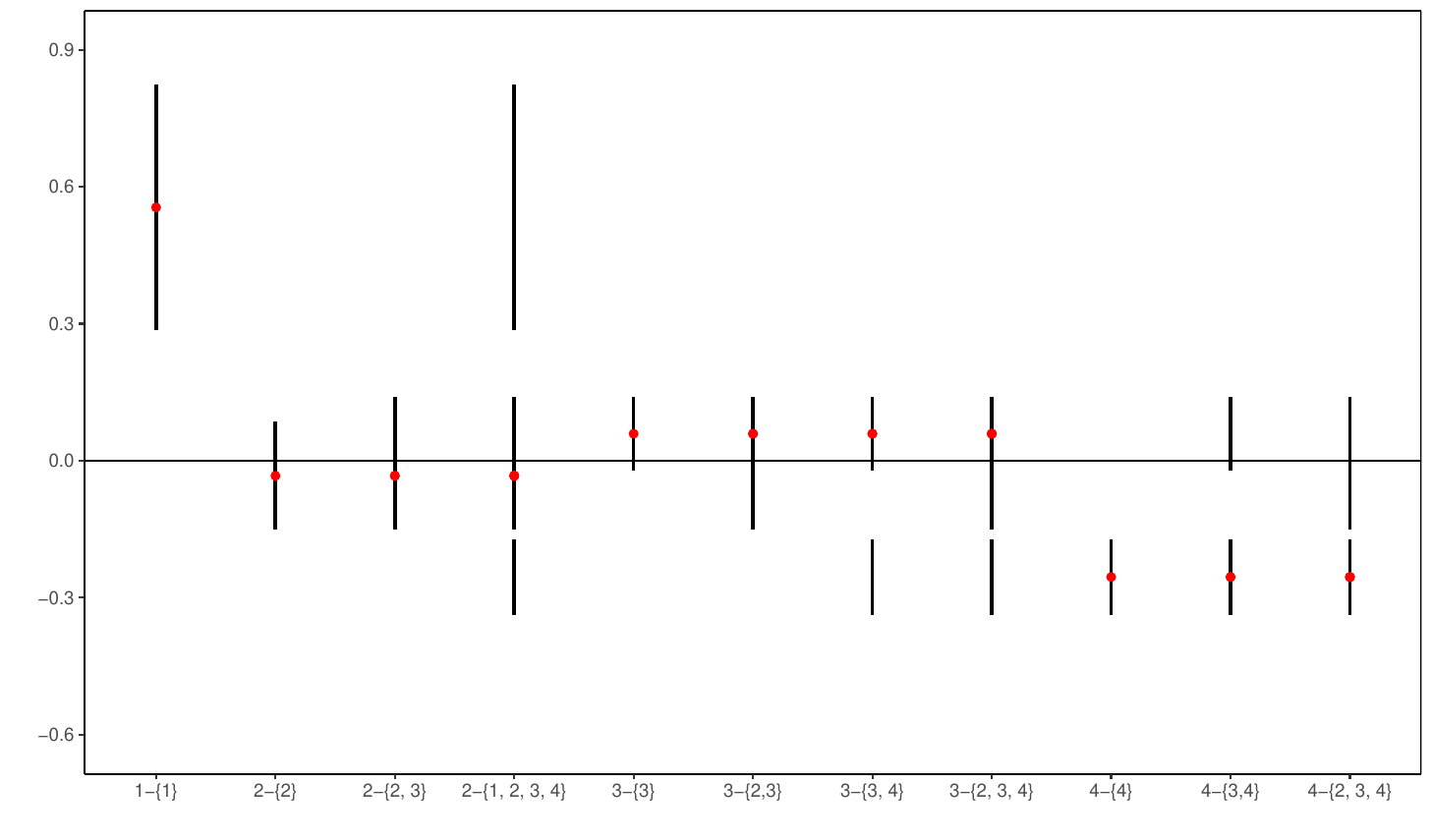}
    \caption{90\% minimum-t-based confidence intervals}
    \label{fig:empirics_umt}
\end{figure}

\begin{figure}
	\centering
        \includegraphics[width=\columnwidth]{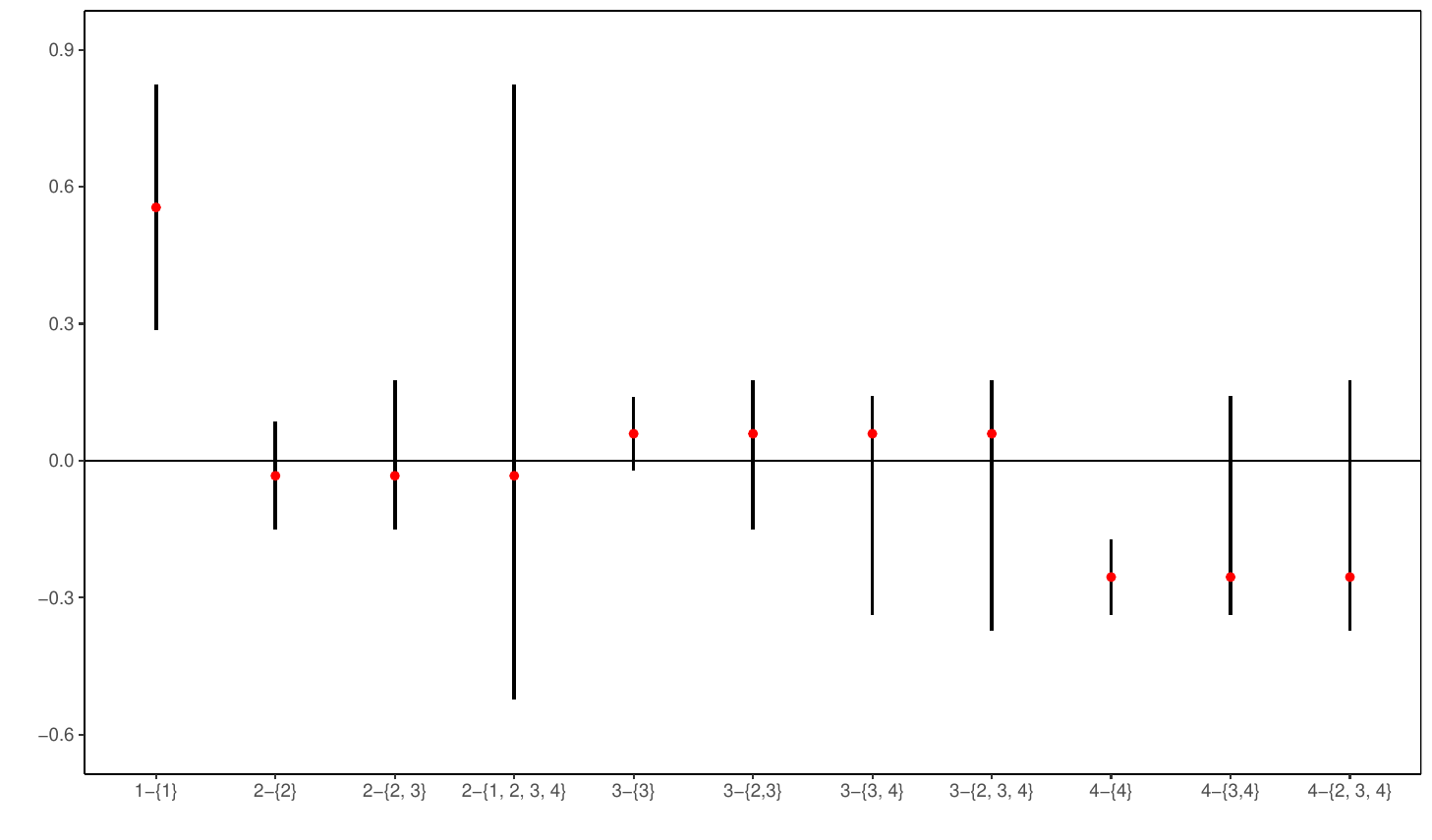}
    \caption{90\% bias-corrected confidence intervals}
    \label{fig:empirics_bc}
\end{figure}

We calculate the 90\% confidence intervals for each state with $\alpha_1=\alpha_2=0.05$. Figures \ref{fig:empirics_umt} and \ref{fig:empirics_bc} display the minimum-type and bias-corrected confidence intervals, respectively. The results of statistical inference, accounting for statistical uncertainty, reveal interesting patterns. Seven states in Category 1-\{1\} have a statistically significant positive effect of a minimum wage increase. Seven states in Category 4-\{4\} have a statistically significant negative effect. All other states have statistically insignificant effects. Note that all states with statistically significant effects have group memberships estimated precisely. While four states in Categories 4-\{3,4\} and 4-\{2,3,4\} (namely, Colorado, Nebraska, South Dakota, and Utah) are estimated to experience statistically significant negative effects from a minimum wage increase if possible misassignment is ignored, these effects are not statistically significant once statistical uncertainty is taken into account. 

We also observe that the minimum-type procedure yields disconnected confidence intervals in certain cases. In particular, Categories 3-\{3,4\} and 4-\{3,4\} have wide gaps between two sub-intervals that constitute the confidence intervals. In contrast, the bias-corrected confidence intervals are always connected by construction. However, they can be much longer, particularly when the minimum-type confidence intervals are disconnected and have wide gaps between sub-intervals.

\begin{rem}
    We note that state-by-state regression cannot be applied in our dataset. This is because, within each state, the cross-county variation of the minimum wage is almost absent in each quarter and thus absorbed in the time fixed effects $\tau_t$. In contrast, our proposed methods can be applied because they utilize cross-state variation of the minimum wage in each quarter.
\end{rem}

\section{Monte Carlo simulations}\label{sec:simu}

This section provides simulation results. Monte Carlo simulations are conducted to examine the finite-sample performance of the proposed methods in comparison with existing alternatives. In particular, we aim at quantifying the sensitivity of these confidence sets to unit-wise error variances. On the one hand, units having a large error variance are liable to misclassification, as pointed out by \textcite{dzemskiokui2021convergence,dzemski2024confidence}. This can lead to incorrect centering for the clustering-based naive confidence set. On the other hand, a large error variance implies that the parameter estimates have large standard errors, resulting in longer confidence intervals. In this section, we evaluate the coverage and length properties of several confidence sets.

Our simulation design is based on our empirical application and given by

\begin{align}
    y_{it}=\theta_{g_i^0, 1}x_{it,1} +\theta_{g_i^0, 2}x_{it,2} + \theta_{g_i^0, 3}x_{it,3}+\sigma_{u,i}u_{it}, \quad i=1, \ldots, N, \quad t=1, \ldots, T,
\end{align}
where $g_i^0 \in\{1,2,3,4\}$, $x_{it,j} \sim N(0,1)$, $u_{it}\sim N(0,1)$, and $\sigma_{u,i} = 0.5 \sigma_{i}$ with $\sigma_i \sim \chi^2(4)/4$ generated from the chi-square distribution with 4 degrees of freedom.\footnote{We use one realized vector $(\sigma_1,\ldots,\sigma_N)$ for all simulations. This design allows us to easily analyze how the performance of the procedures is related to the magnitudes of the error variances. Although we unavoidably choose one particular set of error variances, the realized values are sufficiently diverse and informative.} $\theta_{g,1}, \theta_{g,2}, \theta_{g,3}$ are set equal to the estimated coefficients in Table \ref{tab:app:slope_coef}, and we set $g_i^0 = g$ for $i\in\{(g-1)\times N/4+1,\ldots,g\times N/4\}$.

We generate panels of size $N \in \{40,80,160\}$ and $T\in\{40,80\}$. In each simulation, we calculate 90\% confidence intervals for $\theta_{g_i^0,1}$ and confidence sets for $(\theta_{g_i^0,1}, \theta_{g_i^0,2})$ for each $i$. For unit-wise confidence intervals, we calculate the unit-wise mean lengths and coverage ratios, averaging over 1000 simulations. For unit-wise confidence sets, we report the unit-wise coverage ratios. To save space, we only report the cases with $T=80$. The results for $T=40$ are presented in Appendix \ref{appendix:simulation}. Thoughout the simulations, standard errors (or asymptotic variance estimators) are fixed-$T$ versions.\footnote{We also tried the usual cluster-robust standard errors or asymptotic variance estimators. We found that the fixed-$T$ versions achieve better coverage and do not necessarily make inference more conservative. This improvement is more visible when $T=40$.}

\subsection{Performance of confidence intervals}\label{subsec:simu_scalar}

We calculate four 90\% confidence intervals for $\theta_{g_i^0,1}$: the unit-by-unit confidence interval, $\mathrm{CI}^{\mathrm{ubu}}(i,0.1)$, the naive confidence interval (ignoring uncertainty in $\hat{g}_i$), $\mathrm{CI}^{\mathrm{naive}}(i,0.1)$, the bias-corrected confidence interval, $\mathrm{CI}^{\mathrm{BC}}(i,0.1)$, and $\mathrm{CI}^{\mathrm{MT}}(i,0.1)$. We set $\alpha_1=\alpha_2=0.05$.

We first investigate coverage properties. Figure \ref{fig:coverage_80} plots the empirical coverage fractions of the confidence intervals across different magnitudes of $\sigma_i$ (the double of the error standard deviation $\sigma_{u,i}$). When $N=40$, the unit-by-unit confidence interval generally attains the nominal 90\% coverage rate for all $\sigma_i$, while the naive confidence interval suffers from severe undercoverage for moderate to large $\sigma_i$, because they do not account for potential misclassification, or uncertainty in $\hat{g}_i$. The confidence interval based on the minimum $t$-statistic can undercover for small $\sigma_i$ when $N=40$, yet as $N$ increases, it attains nominal coverage rate or becomes conservative. The bias-corrected confidence interval is conservative for moderate to large $\sigma_i$. These patterns are also observed for $N=80$ and $N=160$, with an exception that the minimum-statistic-based confidence interval is slightly conservative for all $\sigma_i$. Generally speaking, the proposed methods successfully overcome the severe undercoverage problem that the naive method suffers from by taking uncertainty in $\hat{g}_i$ into account.

\begin{figure}
	\centering
	    \begin{subfigure}{0.64\columnwidth}
        \includegraphics[width=\columnwidth]{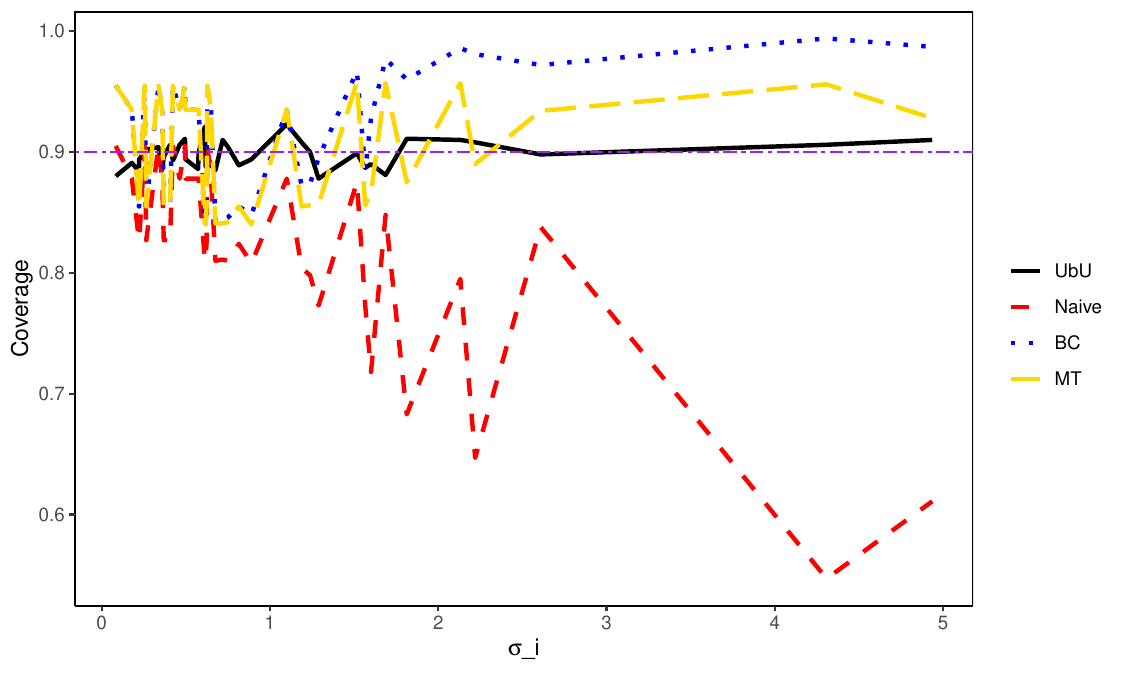}
    \caption{$(N,T)=(40,80)$}
	\label{fig:coverage_40_80}
    \end{subfigure} \quad
    \begin{subfigure}{0.64\columnwidth}
        \includegraphics[width=\columnwidth]{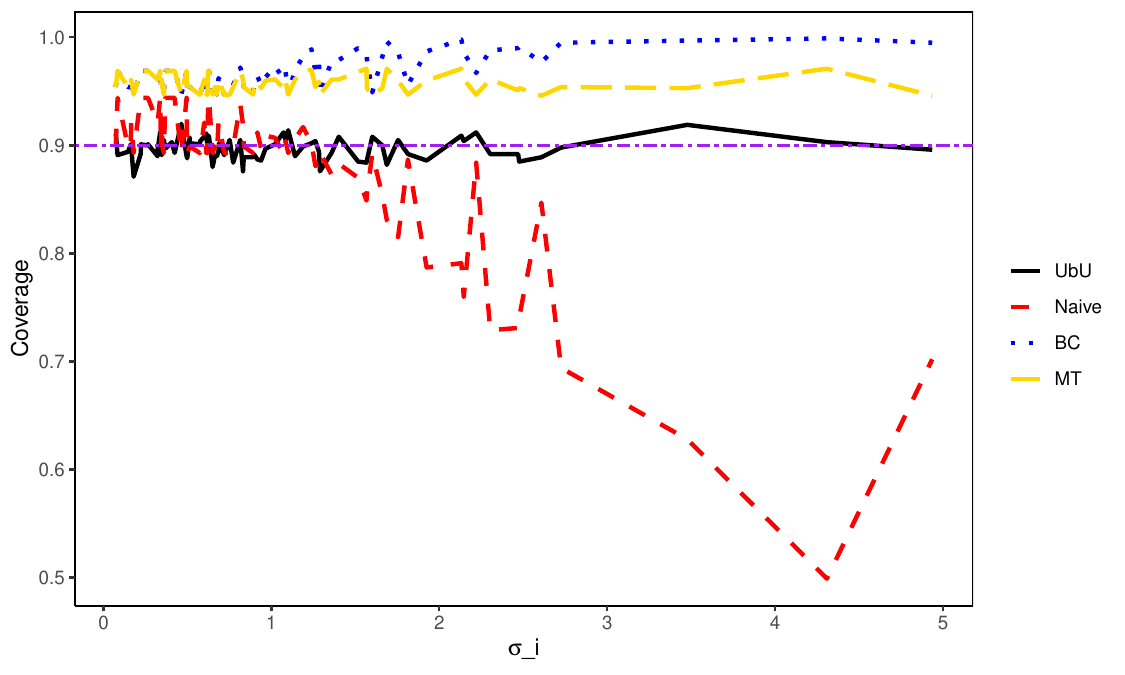}
    \caption{$(N,T)=(80,80)$}
	\label{fig:coverage_80_80}
    \end{subfigure} \quad
    \hfill
    \begin{subfigure}{0.64\columnwidth}
        \includegraphics[width=\columnwidth]{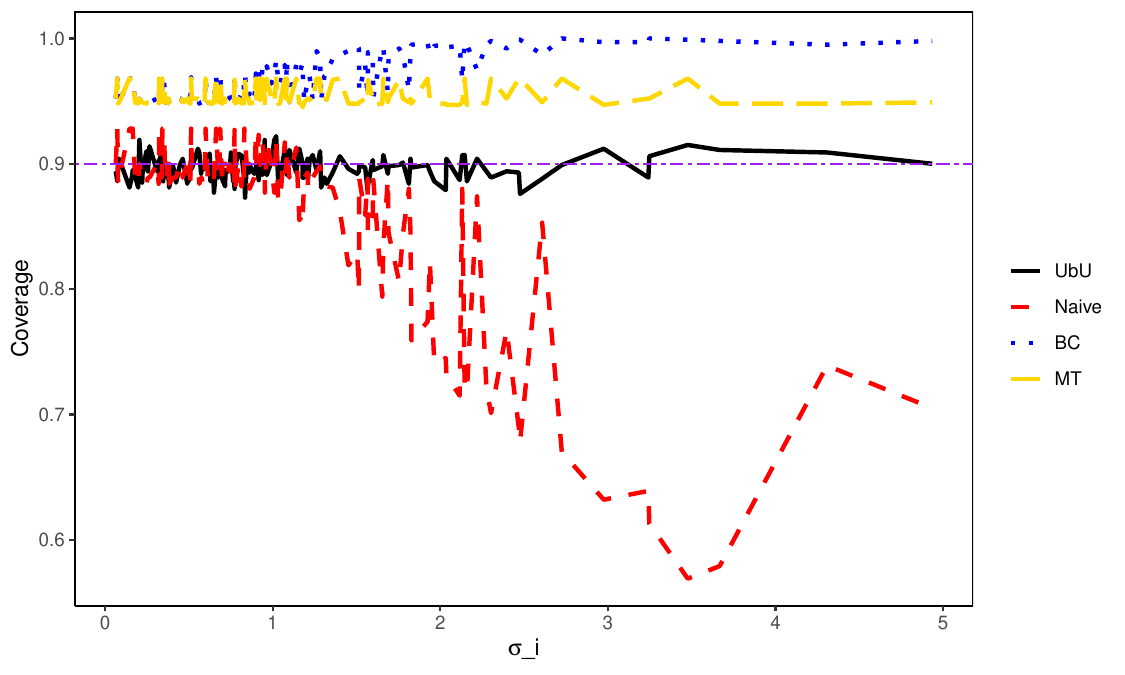}
    \caption{$(N,T)=(160,80)$}
	\label{fig:coverage_160_80}
    \end{subfigure} \quad
    \caption{Unit-wise coverage rates of confidence intervals, $T=80$}
    \label{fig:coverage_80}
    \caption*{\full: Unit-by-unit (UbU), \quad \color{red}\dashed\color{black}: Naive, \quad \color{black}\dotted\color{black}: Bias-corrected (BC), \quad \color{gold}\longdash\color{black}: Minimum t (MT)}
\end{figure}

Next, let us evaluate the length properties. We plot the unit-wise mean lengths against the corresponding values of $\sigma_i$. See Figure \ref{fig:length_80}. First, we look at the case with $N=40$. The naive confidence interval performs best in terms of mean length for all $\sigma_i$. This is because the naive confidence interval is based on a single, small group-specific standard error. The unit-by-unit confidence interval is short only for small $\sigma_i$, and its length linearly increases as $\sigma_i$ increases. The bias-corrected confidence interval is as long as the unit-by-unit confidence interval for almost all $\sigma_i$. This result is likely from the fact that the bias-correction method is connected and need to cover the range between multiple parameter values when the confidence set for group membership is not singleton. The minimum-type confidence interval is the second shortest on average. As $N$ increases, the naive confidence interval remains the shortest, the length of the bias-corrected confidence interval does not change much, and the minimum-type confidence interval becomes shorter than in the $N=40$ case. This is an advantage of the minimum-type confidence interval over the unit-by-unit confidence interval, whose power does not improve as $N$ increases.

\begin{figure}
	\centering
	    \begin{subfigure}{0.64\columnwidth}
        \includegraphics[width=\columnwidth]{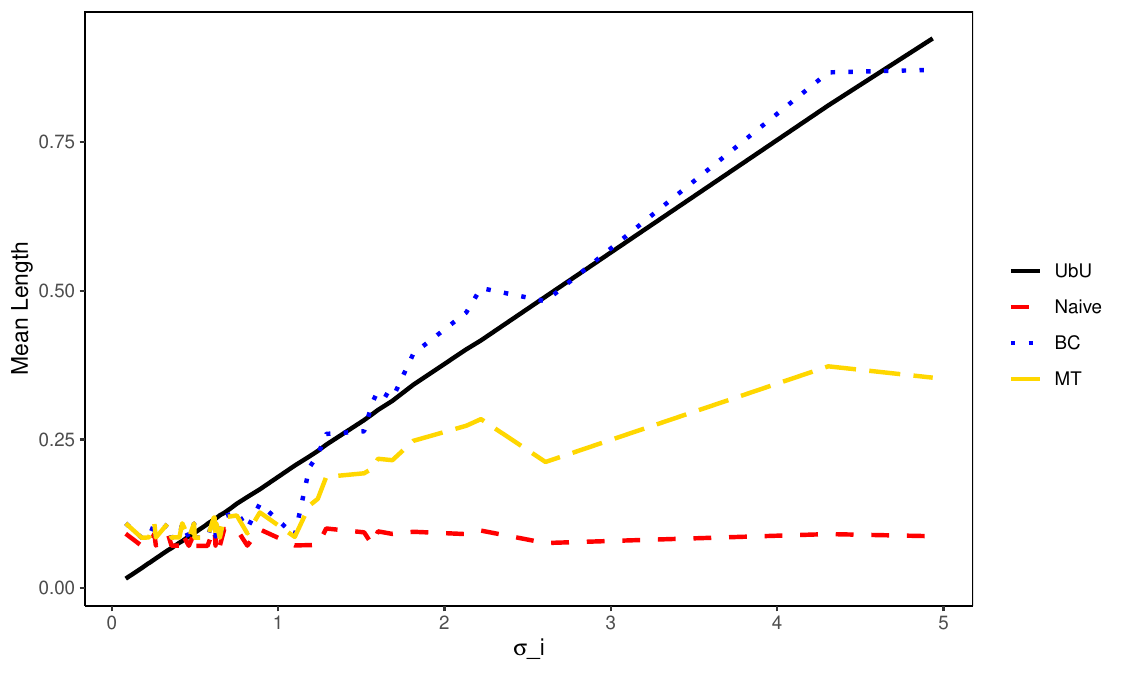}
    \caption{$(N,T)=(40,80)$}
	\label{fig:length_40_80}
    \end{subfigure} \quad
    \begin{subfigure}{0.64\columnwidth}
        \includegraphics[width=\columnwidth]{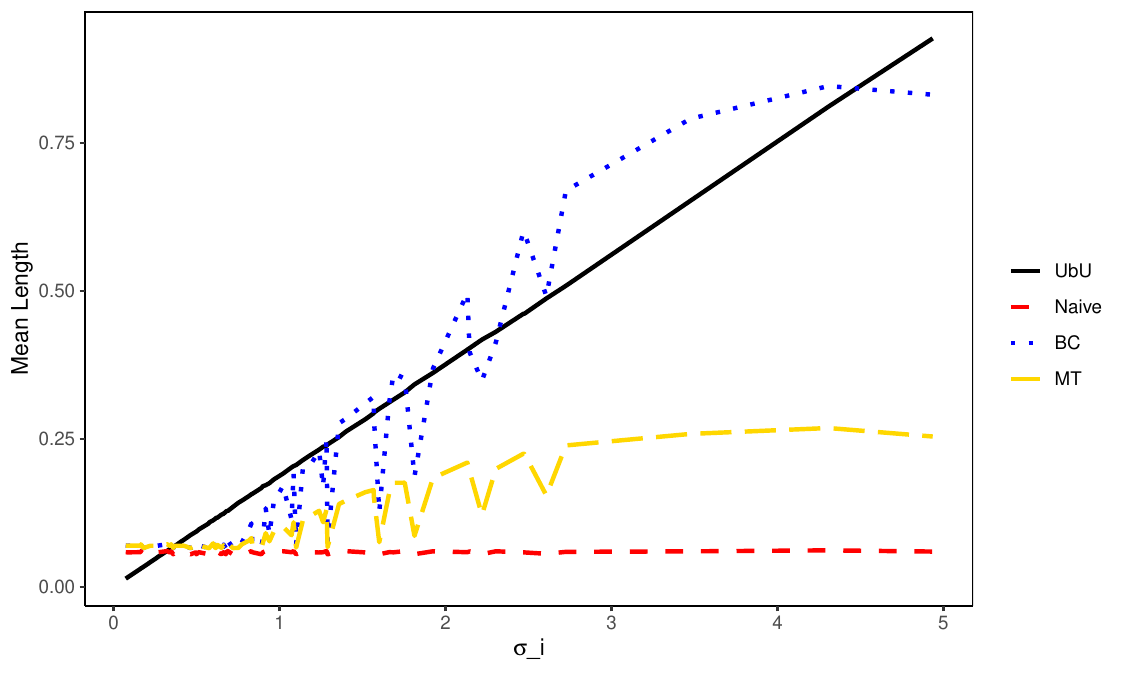}
    \caption{$(N,T)=(80,80)$}
	\label{fig:length_80_80}
    \end{subfigure} \quad
    \hfill
    \begin{subfigure}{0.64\columnwidth}
        \includegraphics[width=\columnwidth]{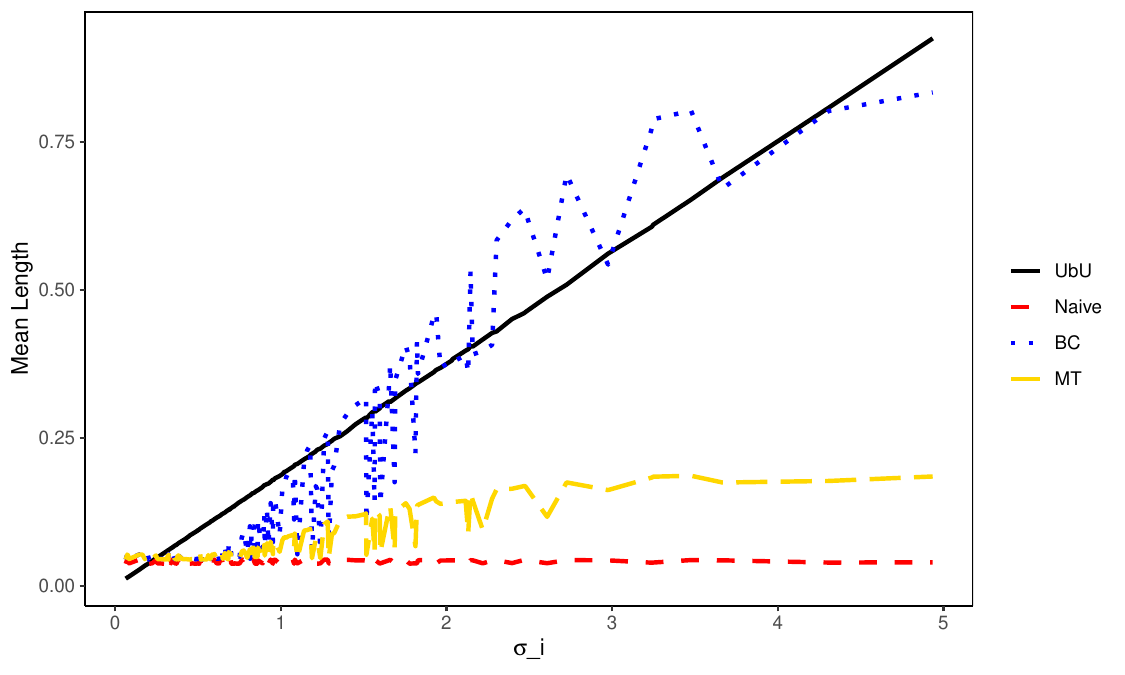}
    \caption{$(N,T)=(160,80)$}
	\label{fig:length_160_80}
    \end{subfigure} \quad
    \caption{Unit-wise mean lengths of confidence intervals, $T=80$}
    \label{fig:length_80}
    \caption*{\full: Unit-by-unit (UbU), \quad \color{red}\dashed\color{black}: Naive, \quad \color{black}\dotted\color{black}: Bias-corrected (BC), \quad \color{gold}\longdash\color{black}: Minimum t (MT)}
\end{figure}

\subsection{Performance of confidence sets}\label{subsec:simu_vec}

In this subsection, we evaluate the coverage properties of four 90\% confidence sets for bivariate parameters $(\theta_{g_i^0,1}, \theta_{g_i^0,2})$, namely, the unit-by-unit confidence set, the naive confidence set, $\mathrm{CS}^{\mathrm{BC}}(i,0.1)$, and $\mathrm{CS}^{\mathrm{MW}}(i,0.1)$ with $\alpha_1=\alpha_2=0.05$. Figure \ref{fig:coverage_80_vec} shows the result for $T=80$. The behaviors of the four confidence sets are qualitatively the same as those of the confidence intervals considered in Section \ref{subsec:simu_scalar}: the unit-by-unit confidence set has correct coverage, the naive confidence set severely undercovers when $N$ is small or $\sigma_i$ is moderate to large, and the bias-corrected confidence set tends to be conservative. The only exception is that the minimum-type confidence set tends to undercover for small to moderate $\sigma_i$ when $N=40$, but it becomes slightly conservative for larger $N$, as observed for the minimum-type confidence interval.

\begin{figure}
	\centering
	    \begin{subfigure}{0.64\columnwidth}
        \includegraphics[width=\columnwidth]{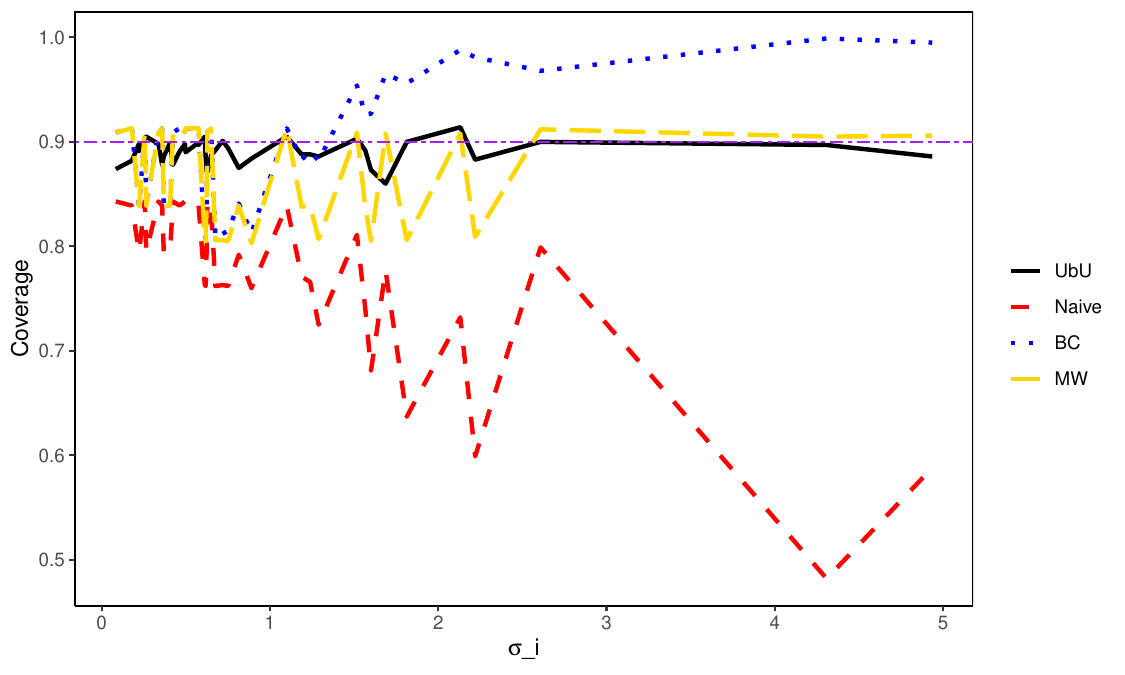}
    \caption{$(N,T)=(40,80)$}
	\label{fig:coverage_40_80_vec}
    \end{subfigure} \quad
    \begin{subfigure}{0.64\columnwidth}
        \includegraphics[width=\columnwidth]{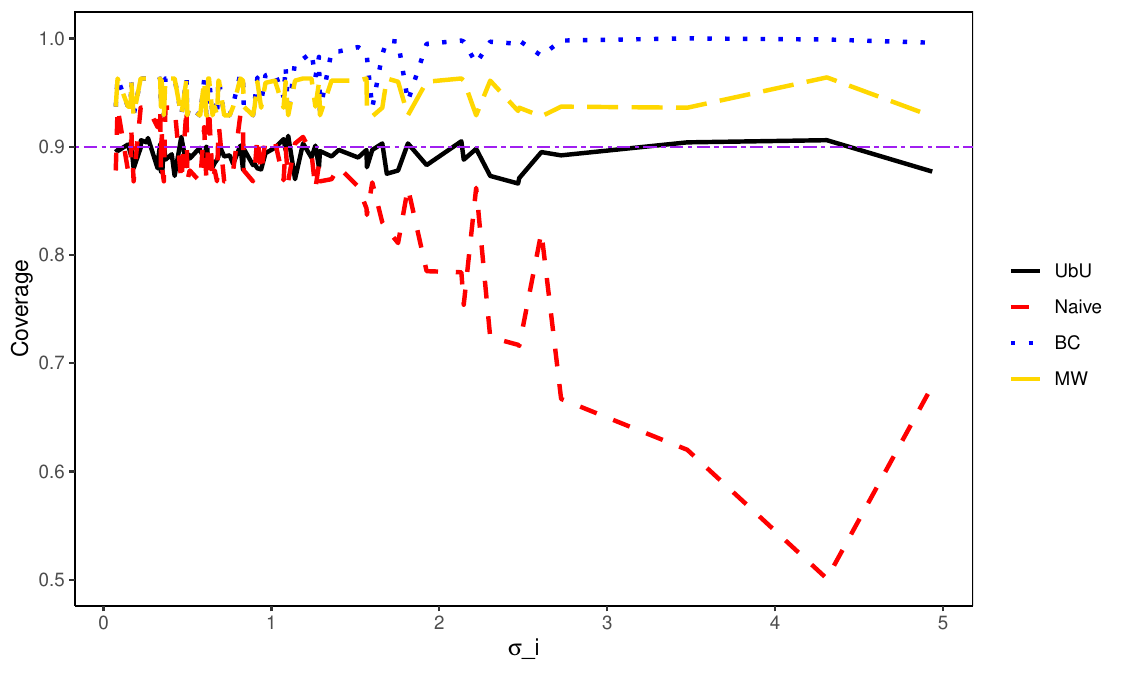}
    \caption{$(N,T)=(80,80)$}
	\label{fig:coverage_80_80_vec}
    \end{subfigure} \quad
    \hfill
    \begin{subfigure}{0.64\columnwidth}
        \includegraphics[width=\columnwidth]{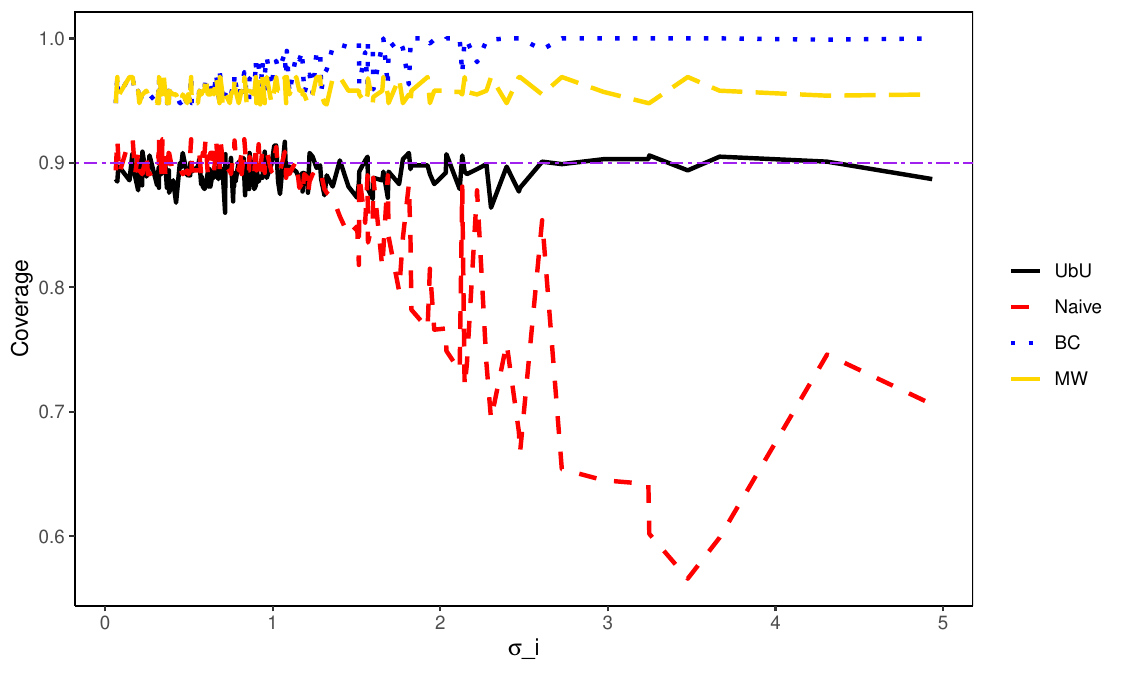}
    \caption{$(N,T)=(160,80)$}
	\label{fig:coverage_160_80_vec}
    \end{subfigure} \quad
    \caption{Unit-wise coverage rates of confidence sets, $T=80$}
    \label{fig:coverage_80_vec}
    \caption*{\full: Unit-by-unit (UbU), \quad \color{red}\dashed\color{black}: Naive, \quad \color{black}\dotted\color{black}: Bias-corrected (BC), \quad \color{gold}\longdash\color{black}: Minimum Wald (MW)}
\end{figure}

\section{Conclusion}
\label{sec:conclusion}

This paper introduces inference methods for unit-specific coefficients that exhibit a grouped pattern of heterogeneity. Our approach combines the efficiency gains from clustering units with accounting for statistical uncertainty in group membership. We propose two methods: one uses the minimum-type statistic within the confidence set for group membership, and the other corrects for bias due to possible group misassignment. The two methods present a deliberate trade-off. The minimum-type method yields shorter and more powerful confidence sets, but these may be disconnected and harder to communicate. The bias-correction method always returns a connected, readily interpretable interval for a scalar parameter, at the cost of being conservative and, in the worst case, no shorter than the unit-by-unit interval. Because both methods are valid, the choice can be made on the ground of a researcher's preference toward interpretability and power. These methods are particularly useful when group assignment is not perfectly certain. Ignoring statistical uncertainty in group membership estimation can severely distort test size and confidence set coverage. While unit-by-unit estimation is robust, it tends to be inefficient. Our proposed methods achieve adequate size and coverage while harnessing the efficiency of clustering. These tools enable empirical researchers to analyze unit-specific coefficients accurately, without distorting statistical properties.

Our inference procedures offer several opportunities for further development. This paper focuses on linear models, but extending these methods to nonlinear contexts would be beneficial. Empirical research often examines unit-specific coefficients in nonlinear models with latent group structures, a scenario common in experimental behavioral economics \parencite{bruhin2010risk, conte2011mixture}. In these studies, researchers classify participants into behavioral types and examine the magnitude of behavioral biases at the individual level. Most analyses use structural and nonlinear models.

The current paper focuses on situations in which the covariates are exogenous. We chose this focus because, in such situations, both consistent group-specific coefficient estimators and a valid confidence set for group membership have been developed. When the covariates are not exogenous, we may employ instrumental-variable estimators or generalized method-of-moments estimators \parencite{su_identifying_2016,mehrabani2022estimation,choi2024latentgroupstructurelinear}. In those cases, methods to construct confidence sets for group membership have not been developed. In the future, once such methods are developed, our proposed procedure can be readily applied.

\appendix

\section*{Appendix}

\section{Details of constructing confidence sets for group memberships}
\label{appendix: confsetgroup}

This section describes the details of constructing the marginal confidence set for the group memberships. We complement the brief discussion in Section \ref{subsec:cs_membership}. Note that all the materials presented here follow \textcite{dzemski2024confidence} and we present them merely for the current paper to be self-contained. We refer the reader to \textcite{dzemski2024confidence} for motivation and theoretical properties.

\subsection{Long-run variance estimation}
The long-run variance-covariance estimator is given by
\begin{align*}
\widehat{\Xi}_{i}(g, h, h') = \sum_{j = -T+1}^{T-1} K\left(\frac{j}{\kappa_N}\right) \widehat{H}_{ij}(g, h, h'),
\end{align*}
where
\begin{align}
\widehat{H}_{ij}(g, h, h')
= \frac{1}{T} \sum_{t = |j|+1}^{T}
\left(\hat{d}_{i, t+\min(0,j)}(g, h) - \bar{\hat{d}}_{i}(g, h)\right)
\left(\hat{d}_{i, t-\max(0,j)}(g, h') - \bar{\hat{d}}_{i}(g, h')\right) \label{eq:H_hat}
\end{align}
with $\bar{\hat{d}}_{i}(g, h) = \frac{1}{T}\sum_{t=1}^T \hat{d}_{it}(g, h)$, and $\kappa_N$ is a bandwidth parameter.
We use the quadratic spectral kernel \parencite{Andrews91}:
\begin{align*}
K_{QS}(x) = \frac{25}{12\pi^2 x^2}\left(\frac{\sin(6\pi x / 5)}{6\pi x / 5} - \cos(6\pi x / 5)\right).
\end{align*}
The bandwidth $\kappa_N$ is selected by the following data-driven algorithm adapted from \textcite{chang2022central}:
\begin{description}
\item[Step A] For each $g \in \mathbb{G}$ and $h \in \mathbb{G} \setminus \{g\}$, take each unit $i$ such that $\hat{g}_i = g$ and fit an $\mathrm{AR}(1)$ model on $(\hat{d}_{it}(g, h))_{t=1}^T$. Let $\hat{\rho}_{igh}$ denote the estimated autoregressive coefficient and $\hat{\sigma}^2_{igh}$ the estimated innovation variance.
\item[Step B] Select the bandwidth
\begin{align*}
\kappa_N = 1.3221 \left(
T \times \frac{
\sum_{i=1}^N \sum_{g \in \mathbb{G}} \sum_{h \in \mathbb{G} \setminus \{g\}} \hat{\rho}{igh}^2 \hat{\sigma}_{igh}^4 / (1 - \hat{\rho}_{igh}^2)^8
}{
\sum_{i=1}^N \sum_{g \in \mathbb{G}} \sum_{h \in \mathbb{G} \setminus \{g\}} \hat{\sigma}_{igh}^4 / (1 - \hat{\rho}_{igh}^2)^4
}
\right)^{1/5}.
\end{align*}
\end{description}

\subsection{Critical value}

Recall that for $G \geq 3$, the critical value is given by
\begin{align*}
\hat{c}_{i, \alpha}(g)
= \sqrt{\frac{T}{T-1}} \left(t_{\max, \rho(\widehat{\Omega}_i(g), \epsilon_N), T-1}\right)^{-1} \left(1 - \alpha \right),
\end{align*}
where $t_{\max, \Omega, T-1}$ denotes the distribution function of the maximum entry of a centered random vector with multivariate $t$-distribution with scale matrix $\Omega$ and $(T-1)$ degrees of freedom. 

We now explain how to obtain $\widehat{\Omega}_i(g)$, $\rho $ and $\epsilon_N$.
The estimated correlation matrix $\widehat{\Omega}_i(g)$ is a $(G-1) \times (G-1)$ matrix with entry $(j, j')$ equal to
\begin{align*}
\left(\widehat{\Omega}_i(g)\right)_{j,j'}
= \frac{\widehat{\Xi}_i(g, h, h')}{\sqrt{\widehat{\Xi}_i(g, h, h), \widehat{\Xi}_i(g, h', h')}},
\end{align*}
where $h$ and $h'$ denote the $j$th and $j'$th elements of $\mathbb{G} \setminus \{g\}$, respectively.
The regularization function $\rho$ is defined as
\begin{align*}
\rho(\Omega, \epsilon) = \mathrm{diag}^{-1/2}\left(\Omega + \epsilon^*(\Omega, \epsilon), I_{G-1}\right)
\left(\Omega + \epsilon^*(\Omega, \epsilon), I_{G-1}\right)
\mathrm{diag}^{-1/2}\left(\Omega + \epsilon^*(\Omega, \epsilon), I_{G-1}\right),
\end{align*}
where
\begin{align*}
\epsilon^*(\Omega, \epsilon) = \max \left\{0, \epsilon - \left(1 - \max_{i < j} \Omega_{ij}\right)\right\}.
\end{align*}
We set $\epsilon_N = 0.01$.

\section{Details of the empirical application}
\label{app:detail_empirical}

This appendix provides the details of constructing the confidence set for group memberships in the empirical application in Section \ref{sec:empirical}. All materials presented here follow \textcite{dzemski2024confidence}. We present them to ensure the current paper is self-contained. For motivation and more detailed explanations, refer to \textcite{dzemski2024confidence}.

We estimate group memberships by running a single update step of the \emph{k-means} algorithm using the estimated slope coefficients. This update step ensures that the estimated group memberships satisfy the procedure's condition. The resulting estimates are identical to the CLasso estimates in all but six states.

We identify states (subscript $i$) as the cross-sectional dimension and counties and quarters (subscripts $c$ and $t$) jointly as the second dimension. The confidence sets are computed based on the fixed-effect transformation:
\begin{align*}
\widetilde{\texttt{lemp}}_{ict} = \theta_{g^0_s, 1}, \widetilde{\texttt{lmw}}_{ict} + \theta_{g^0_s, 2}, \widetilde{\texttt{lpop}}_{ict} + \theta_{g^0_s, 3}, \widetilde{\texttt{lemp}}_{ict}^{\texttt{TOT}} + \tilde{\tau}_t + \sigma_i \tilde{v}_{ict},
\end{align*}
where $\widetilde{\texttt{lemp}}_{ict} = \log(\texttt{lemp}_{ict}) - \sum_{t'=1}^T \log(\texttt{lemp}_{ict'}) / T$, and $\widetilde{\texttt{lmw}}_{ict}$, $\widetilde{\texttt{lpop}}_{ict}$, $\widetilde{\texttt{lemp}}_{ict}^{\texttt{TOT}}$ are defined analogously. Moreover, $\tilde{\tau}_t = \tau_t - \sum_{t'=1}^T \tau_{t'} / T$ and $\tilde{v}_{ict} = v_{ict} - \sum_{t'=1}^T v_{ict'} / T$.
Since the second dimension comprises both cross-sectional variation between counties and time-series variation between quarters, the estimated autocovariance $\widehat{H}_{ij}$ is redefined as
\begin{align*}
\widehat{H}_{ij}(g, h, h')
= \frac{1}{T n_i} \sum_{c=1}^{n_i} \sum_{t=|j|+1}^{T}
\left(\hat{d}_{ic, t+\min(0,j)}(g, h) - \bar{\hat{d}}_{ic}(g, h)\right)
\left(\hat{d}_{ic, t-\max(0,j)}(g, h') - \bar{\hat{d}}_{ic}(g, h')\right),
\end{align*}
where $\bar{\hat{d}}_{ic}(g, h) = \sum_{t=1}^T \hat{d}_{ict}(g, h) / T$ and $\hat{d}_{ict}$ replaces $\hat{d}_{it}$ with the double index $ct$ in place of $t$. The bandwidth selection algorithm is modified accordingly:
\begin{description}
\item[Step A] For each $g \in \mathbb{G}$ and $h \in \mathbb{G} \setminus \{g\}$, take each state $i$ with $\hat{g}_i = g$ and fit an $\mathrm{AR}(1)$ model on $(\hat{d}_{ict}(g, h))_{t=1}^T$ for each county $c = 1, \dotsc, n_i$. Let $\hat{\rho}_{icgh}$ and $\hat{\sigma}^2_{icgh}$ denote the estimated autoregressive coefficient and innovation variance, respectively.
\item[Step B] Select the bandwidth
\begin{align*}
\hat{\kappa}_N = 1.3221 \left(
T \times \frac{
\sum_{i=1}^N \sum_{c=1}^{n_i} \sum_{g \in \mathbb{G}} \sum_{h \in \mathbb{G} \setminus \{g\}} \hat{\rho}_{icgh}^2, \hat{\sigma}_{icgh}^4 / (1 - \hat{\rho}_{icgh}^2)^8
}{
\sum_{i=1}^N \sum_{c=1}^{n_i} \sum_{g \in \mathbb{G}} \sum_{h \in \mathbb{G} \setminus \{g\}} \hat{\sigma}_{icgh}^4 / (1 - \hat{\rho}_{icgh}^2)^4
}
\right)^{1/5}.
\end{align*}
\end{description}
This yields $\hat{\kappa}_N = 21.37$.

\section{Proofs of the theoretical results}
\label{appendix: proof}

\noindent\begin{proof}[Proof of Theorem \ref{thm:coverage_um}]

(i) By the definition of $\mathrm{CI}^{\mathrm{MT}}(i, \alpha)$, we have
\begin{align}
P\left(R\theta_{g_i^0}^{0} \notin \mathrm{CI}^{\mathrm{MT}}(i, \alpha)\right) 
& \leq P\left(R\theta_{g_i^0}^{0} \notin \mathrm{CI}^{\mathrm{MT}}(i, \alpha), \ g_i^0 \in \mathrm{CS}_{i, \alpha_1}\right)+\alpha_1 + o(1) \\
&=P\left(\min_{g\in \mathrm{CS}_{i,\alpha_1}} \left|\frac{R\left(\hat{\theta}_g-\theta_{g_i^0}^0\right)}{\mathrm{se}(g)}\right| > z_{1-\alpha_2/2}, \ g_i^0 \in \mathrm{CS}_{i, \alpha_1}\right)+\alpha_1 + o(1) \\
&\leq P\left(\left|\frac{R\left(\hat{\theta}_{g_i^0}-\theta_{g_i^0}^0\right)}{\mathrm{se}(g_i^0)}\right| > z_{1-\alpha_2/2}\right)+\alpha_1 + o(1) \\
&\leq \max_{g=1,\ldots,G} P\left(\left|\frac{R\left(\hat{\theta}_{g}-\theta_{g}^0\right)}{\mathrm{se}(g)}\right| > z_{1-\alpha_2/2}\right)+\alpha_1 + o(1) \\
&=\alpha+o(1),
\end{align}
uniformly in $i$, where the first inequality follows from Assumption \ref{asm:cs_membership}, and the last equality follows from Assumptions \ref{asm:asym_normal} and \ref{asm:est_group_size}. This yields
\begin{align}
    \limsup_{N,T\to\infty}\max_{i=1,\ldots,N}P\left(R\theta_{g_i^0}^{0} \notin \mathrm{CI}^{\mathrm{MT}}(i, \alpha)\right) \leq \alpha.
\end{align}

(ii) Recalling the definition of $\mathrm{CS}^{\mathrm{MW}}(i, \alpha)$, we have

\begin{align}
P\left(R \theta_{g_{i}^{0}}^{0} \notin\mathrm{CS}^{\mathrm{MW}}(i, \alpha)\right) 
&\leq P\left(R\theta_{g_{i}^{0}}^{0} \notin\mathrm{CS}^{\mathrm{MW}}(i, \alpha), \ g_{i}^{0} \in \mathrm{CS}_{i, \alpha_1}\right)+\alpha_1 + o(1) \\
&= P\left(\widetilde{W}_{i, N T, \alpha_1}\left(\theta_{g_{i}^{0}}^{0}\right)>\chi_{p, 1-\alpha_2}^{2}, \ g_{i}^{0} \in \mathrm{CS}_{i, \alpha_1}\right) +\alpha_1 + o(1) \\
&\leq  P\left(\widehat{N}_{g_{i}^{0}} T\left(\hat{\theta}_{g_{i}^{0}}-\theta_{g_{i}^{0}}^{0}\right)^{\prime} R'\left(R\widehat{\Sigma}_{g_i^0}R'\right)^{-1}R\left(\hat{\theta}_{g_{i}^{0}}-\theta_{g_{i}^{0}}^{0}\right)>\chi_{p, 1-\alpha_2}^{2}\right)+\alpha_1 + o(1) \\
&\leq \max _{g=1, \ldots, G} P\left(\widehat{N}_{g} T\left(\hat{\theta}_{g}-\theta_{g}^{0}\right)^{\prime}R' \left(R\widehat{\Sigma}_{g}R'\right)^{-1}R\left(\hat{\theta}_{g}-\theta_{g}^{0}\right)>\chi_{p, 1-\alpha_2}^{2}\right) +\alpha_1 + o(1)\\
&= \alpha+o(1),
\end{align}
uniformly in $i$. This leads to
\begin{align}
    \limsup_{N,T\to\infty}\max_{i=1,\ldots,N}P\left(R\theta_{g_i^0}^{0} \notin\mathrm{CS}^{\mathrm{MW}}(i, \alpha)\right) \leq \alpha,
\end{align}
which completes the proof.
\end{proof}

\noindent\begin{proof}[Proof of Theorem \ref{thm:coverage_bc}]
    (i) It suffices to show 
    \begin{align}
        \limsup_{N,T\to\infty}\max_{i=1,\ldots,N}P\left(R\theta_{g_i^0}^{0} \notin \mathrm{CI}^{\mathrm{BC}}(i, \alpha)\right) \leq \alpha.
        \label{bound:noncover}
    \end{align}
    We bound the left-hand side as
\begin{align}
P\left(R\theta_{g_i^0}^{0} \notin \mathrm{CI}^{\mathrm{BC}}(i, \alpha)\right) 
& =P\left(R\theta_{g_i^0}^{0} \notin \mathrm{CI}^{\mathrm{BC}}(i, \alpha), \ g_i^0 \in \mathrm{CS}_{i, \alpha_1}\right)+P\left(R\theta_{g_i^0}^{0} \notin \mathrm{CI}^{\mathrm{BC}}(i, \alpha), \ g_i^0 \notin \mathrm{CS}_{i, \alpha_1}\right) \\
& \leq P\left(R\theta_{g_i^0}^{0} \notin \mathrm{CI}^{\mathrm{BC}}(i, \alpha), \ g_i^0 \in \mathrm{CS}_{i, \alpha_1}\right)+\alpha_1 + o(1),
\label{bound:noncover_intermediate}
\end{align}
where the $o(1)$ term is uniform in $i$ by Assumption \ref{asm:cs_membership}. Let us bound the first term of the above display:
\begin{align}
& P\left(R\theta_{g_i^0}^{0} \notin \mathrm{CI}^{\mathrm{BC}}(i, \alpha), \ g_i^0 \in \mathrm{CS}_{i, \alpha_1}\right) \\
& \leq P\left(\min _{g \in \mathrm{CS}_{i, \alpha_1}}R\hat{\theta}_{g}-z_{1-\alpha_2 /2} \times \mathrm{se}_{i}>R\theta_{g_i^0}^{0}, \ g_i^0 \in \mathrm{CS}_{i, \alpha_1}\right) \\
& \quad +P\left(\max _{g \in \mathrm{CS}_{i, \alpha_1}}R\hat{\theta}_{g}+z_{1-\alpha_2 /2} \times \mathrm{se}_{i}<R\theta_{g_i^0}^{0}, \ g_i^0 \in \mathrm{CS}_{i, \alpha_1}\right).
\label{bound:noncover_true_in_cs}
\end{align}

The lower-end noncoverage probability is bounded as
\begin{align}
& P\left(\min _{g \in \mathrm{CS}_{i, \alpha_1}}R\hat{\theta}_{g}-z_{1-\alpha_2 /2} \times \mathrm{se}_{i}>R\theta_{g_i^0}^{0}, \ g_i^0 \in\mathrm{CS}_{i, \alpha_1}\right) \\
& \leq P\left(\frac{R\left(\hat{\theta}_{g_i^0}-\theta_{g_i^0}^{0}\right)}{\mathrm{se}_{i}}>z_{1-\alpha_2 /2}, \ g_i^0 \in \mathrm{CS}_{i, \alpha_1}\right) \\
& \leq P\left(\frac{R\left(\hat{\theta}_{g_i^0}-\theta_{g_i^0}^{0}\right)}{\mathrm{se}\left(g_i^0\right)}>z_{1-\alpha_2 /2}\right) \\
&=P\left(\sqrt{\widehat{N}_{g_i^0} T}R\left(\hat{\theta}_{g_i^0}-\theta_{g_i^0}^{0}\right) / \sqrt{R\widehat{\Sigma}_{g_i^0}R'}>z_{1-\alpha_2 /2}\right) \\
&\leq \max_{g=1,\ldots,G} P\left(\sqrt{\widehat{N}_{g} T}R\left(\hat{\theta}_{g}-\theta_{g}^{0}\right) / \sqrt{R\widehat{\Sigma}_{g}R'}>z_{1-\alpha_2 /2}\right),
\end{align}
where the last upper bound is independent of $i$ and hence holds uniformly in $i=1,\ldots,N$. It follows from the above display and Assumptions \ref{asm:asym_normal} and \ref{asm:est_group_size} that
\begin{align}
    \limsup _{N, T \rightarrow \infty} \max_{i=1,\ldots,N} P\left(\min _{g \in \mathrm{CS}_{i, \alpha_1}}R\hat{\theta}_{g}-z_{1-\alpha_2 /2} \times \mathrm{se}_{i}>R\theta_{g_i^0}^{0}, \ g_i^0 \in \mathrm{CS}_{i, \alpha_1}\right) \leq \alpha_2 /2.
\end{align}

For the upper-end noncoverage probability, a symmetric argument shows
\begin{align}
    \limsup _{N, T \rightarrow \infty} \max_{i=1,\ldots,N} P\left(\max _{g \in \mathrm{CS}_{i, \alpha_1}}R\hat{\theta}_{g}+z_{1-\alpha_2 /2} \times \mathrm{se}_{i}<R\theta_{g_i^0}^{0}, \ g_i^0 \in \mathrm{CS}_{i, \alpha_1}\right) \leq \alpha_2 /2.
\end{align}
Substituting these results into \eqref{bound:noncover_true_in_cs} yields
\begin{align}
    \limsup _{N, T \rightarrow \infty} \max_{i=1,\ldots,N}P\left(R\theta_{g_i^0}^{0} \notin \mathrm{CI}^{\mathrm{BC}}(i, \alpha), \ g_i^0 \in \mathrm{CS}_{i, \alpha_1}\right) \leq \alpha_2.
\end{align}
This, in conjunction with \eqref{bound:noncover_intermediate}, establishes \eqref{bound:noncover}.

(ii) Next, we consider the case with $r\geq2$. As in the proof of part (i), we have
\begin{align}
P\left(R\theta_{g_{i}^{0}}^{0} \notin \mathrm{CS}^{\mathrm{BC}}(i, \alpha)\right)  \leq P\left(R\theta_{g_{i}^{0}}^{0} \notin \mathrm{CS}^{\mathrm{BC}}(i, \alpha), g_{i}^{0} \in \mathrm{CS}_{i, \alpha_1}\right)+\alpha_1 + o(1)
\end{align}

We bound the first term. Recalling \eqref{eqn:decompose_2}, we have
\begin{align}
& P\left(R\theta_{g_{i}^{0}}^{0} \notin \mathrm{CS}^{\mathrm{BC}}(i, \alpha), \ g_{i}^{0} \in \mathrm{CS}_{i, \alpha_1}\right) \\
& =P\left(\overline{W}_{i, N T, \alpha_1}\left(\theta_{g_{i}^{0}}^{0}\right)>\chi_{p, 1-\alpha_2}^{2}, \ g_{i}^{0} \in \mathrm{CS}_{i, \alpha_1}\right) \\
& =P\biggl(\min _{g \in \mathrm{CS}_{i, \alpha_1}} \widehat{N}_{g} T\left(\hat{\theta}_{\hat{g}_{i}}-\theta_{g_{i}^{0}}^{0}\right)^{\prime} R'\left(R\widehat{\Sigma}_{g}R'\right)^{-1}R\left(\hat{\theta}_{\hat{g}_{i}}-\theta_{g_{i}^{0}}^{0}\right) \\
&\hspace{1.5cm} -\max _{g \in \mathrm{CS}_{i, \alpha_1}} \widehat{N}_{g} T\left(\hat{\theta}_{\hat{g}_{i}}-\hat{\theta}_{g}\right)^{\prime} R'\left(R\widehat{\Sigma}_{g}R'\right)^{-1}R\left(\hat{\theta}_{\hat{g}_{i}}-\hat{\theta}_{g}\right) \\
&\hspace{1.5cm} -2 \max _{g \in \mathrm{CS}_{i, \alpha_1}}\left|\widehat{N}_{g} T\left(\hat{\theta}_{\hat{g}_{i}}-\hat{\theta}_{g}\right)^{\prime} R'\left(R\widehat{\Sigma}_{g}R'\right)^{-1}R\left(\hat{\theta}_{g}-\theta_{g_{i}^{0}}^{0}\right)\right|>\chi_{p, 1-\alpha_2}^{2}, \ g_{i}^{0} \in \mathrm{CS}_{i, \alpha_1}\biggr) \\  
&\leq P\left(\widehat{N}_{g_{i}^0}T\left(\hat{\theta}_{g_{i}^0}-\theta_{g_{i}^0}^{0}\right)^{\prime} R'\left(R\widehat{\Sigma}_{g_i^0}R'\right)^{-1}R\left(\hat{\theta}_{g_{i}^0}-\theta_{g_{i}^0}^{0}\right)>\chi_{p,1-\alpha_2}^{2}\right) \\
& \leq \max _{g=1, \ldots, G} P\left(\widehat{N}_{g} T\left(\hat{\theta}_{g}-\theta_{g}^{0}\right)^{\prime} R'\left(R\widehat{\Sigma}_{g}R'\right)^{-1}R\left(\hat{\theta}_{g}-\theta_{g}^{0}\right)>\chi_{p,1-\alpha_2}^{2}\right) \\
&=\alpha_2+o(1)
\end{align}
uniformly in $i$ by Assumptions \ref{asm:asym_normal} and \ref{asm:est_group_size}. This shows
\begin{align}
    \limsup _{N, T \rightarrow \infty} \max_{i=1,\ldots,N}P\left( R \theta_{g_{i}^{0}}^{0} \notin \mathrm{CS}^{\mathrm{BC}}(i, \alpha)\right) \leq \alpha,
\end{align}
which completes the proof.
\end{proof}

\noindent\begin{proof}[Proof of Lemma \ref{lem:vcov_consistency}]
    The consistency of $\widehat{M}_g$ immediately follows from Assumption \ref{asm:est_group_size} and conditions (a)--(c).

For $\widehat{\Omega}_{g}$, we decompose it as
\begin{align}
\widehat{\Omega}_{g}&=  \frac{N_{g}}{\widehat{N}_g} \frac{1}{N_{g} T}\left(\sum_{i=1}^{N} \sum_{t=1}^{T} \sum_{s=1}^{T} 1\left\{\hat{g}_{i}=g, g_{i}^{0}=g\right\} x_{it} x_{is}^{\prime} \hat{\varepsilon}_{it} \hat{\varepsilon}_{is} +\sum_{i=1}^{N} \sum_{t=1}^{T} \sum_{s=1}^{T} 1\left\{\hat{g}_{i} \neq g, g_{i}^{0}=g\right\} x_{it} x_{is}^{\prime} \tilde{\varepsilon}_{it} \tilde{\varepsilon}_{is}\right) \\
& \quad -\frac{N_{g}}{\widehat{N}_g} \frac{1}{N_{g} T} \sum_{i=1}^{N} \sum_{t=1}^{T} \sum_{s=1}^{T} 1\left\{\hat{g}_{i} \neq g, g_{i}^{0}=g\right\} x_{it} x_{is}^{\prime} \tilde{\varepsilon}_{it} \tilde{\varepsilon}_{is} \\
& +\frac{N_{g}}{\widehat{N}_g} \frac{1}{N_{g} T} \sum_{i=1}^{N} \sum_{t=1}^{T} \sum_{s=1}^{T} 1\left\{\hat{g}_{i}=g, g_{i}^{0} \neq g\right\} x_{it} x_{is}^{\prime} \hat{\varepsilon}_{it} \hat{\varepsilon}_{is,} \\
&= (1+o_p(1)) \times \widetilde{\Omega}_{g}-A_{1}+A_{2},
\end{align}
where
\begin{align}
    A_{1}\coloneqq\frac{N_{g}}{\widehat{N}_g} \frac{1}{N_{g} T} \sum_{i=1}^{N} \sum_{t=1}^{T} \sum_{s=1}^{T} 1\left\{\hat{g}_{i} \neq g, g_{i}^{0}=g\right\} x_{it} x_{is}^{\prime} \tilde{\varepsilon}_{it} \tilde{\varepsilon}_{is}
\end{align}
and 
\begin{align}
    A_{2}\coloneqq\frac{N_{g}}{\widehat{N}_{g}} \frac{1}{N_{g} T} \sum_{i=1}^{N} \sum_{t=1}^{T} \sum_{s=1}^{T} 1\left\{\hat{g}_{i}=g, g_{i}^{0} \neq g\right\} x_{it} x_{is}^{\prime} \hat{\varepsilon}_{it} \hat{\varepsilon}_{is}.
\end{align}

For $A_{1}$, we have
\begin{align}
A_{1}=(1+o_p(1)) & \frac{1}{N_{g} T} \sum_{i=1}^{N} \sum_{t=1}^{T} \sum_{s=1}^{T} 1\left\{\hat{g}_{i} \neq g, g_{i}^{0}=g\right\} x_{it} x_{is}^{\prime}\left(\varepsilon_{it}-x_{it}^{\prime}\left(\hat{\beta}_{g_i^0}-\beta_{g_i^0}^0\right)\right)\left(\varepsilon_{is}-x_{is}^{\prime}\left(\hat{\beta}_{g_i^0}-\beta_{g_{i}^{0}}^{0}\right)\right) \\
=(1+o_p(1)) &\left\{ \frac{1}{N_{g} T} \sum_{i=1}^{N} \sum_{t=1}^{T} \sum_{s=1}^{T} 1\left\{\hat{g}_{i} \neq g, g_{i}^{0}=g\right\} x_{it} x_{is}^{\prime} \varepsilon_{it} \varepsilon_{is} \right. \\
& \left. \qquad -\frac{2}{N_{g} T} \sum_{i=1}^{N} \sum_{t=1}^{T} \sum_{s=1}^{T} 1\left\{\hat{g}_{i} \neq g, g_{i}^{0}=g\right\} x_{it} x_{is}^{\prime} \varepsilon_{it} x_{is}^{\prime}\left(\hat{\beta}_{g}-\beta_{g}^{0}\right) \right. \\
& \left. \qquad +\frac{1}{N_{g} T} \sum_{i=1}^{N} \sum_{t=1}^{T} \sum_{s=1}^{T} 1\left\{\hat{g}_{i} \neq g, g_{i}^{0}=g\right\} x_{it} x_{is}^{\prime} x_{it}^{\prime}\left(\hat{\beta}_g-\beta_{g}^{0}\right)\left(\hat{\beta}_g-\beta_{g}^{0}\right)^{\prime} x_{is}\right\}.
\end{align}

Using the H\"{o}lder inequality, the first term in the curly brackets is bounded as
\begin{align}
    &\left\|\frac{1}{N_g T} \sum_{i=1}^{N} \sum_{t=1}^{T} \sum_{s=1}^{T} 1\left\{\hat{g}_{i} \neq g, g_{i}^{\circ}=g\right\} x_{it} x_{is}^{\prime} \varepsilon_{it} \varepsilon_{is}\right\| \\
    & \leq \frac{1}{N_g} \sum_{i=1}^{N} 1\left\{\hat{g}_{i} \neq g, g_{i}^{0}=g\right\}\left\|\frac{1}{T} \sum_{t=1}^{T} \sum_{s=1}^{T} x_{it} x_{is}^{\prime} \varepsilon_{it} \varepsilon_{is}\right\| \\
& \leq\left(\frac{1}{N_g} \sum_{i=1}^{N} 1\left\{\hat{g}_{i} \neq g, g_{i}^{0}=g\right\}\right)^{\frac{1}{1+q}}\left(\frac{1}{N_g} \sum_{i=1}^{N}\left\|\frac{1}{\sqrt{T}} \sum_{t=1}^{T} x_{it} \varepsilon_{it}\right\|^{\frac{2(1+q)}{q}}\right)^{\frac{q}{1+q}} \\
& =o_p(1/T),
\end{align}
where the last probability order follows from condition (b) and Lemma 3 of Hansen (2007), which is applicable under conditions (c)--(e). Using a similar argument coupled with Assumption 1, the second and third terms in the curly brackets are $o_p(1 / \sqrt{N T})$ and $o_p(1 /(N T))$, respectively. This shows $A_{1}=o_p(1)$. Next, we decompose $A_{2}$ as
\begin{align}
A_{2}&=(1+o_p(1)) \frac{1}{N_{g} T} \sum_{h \neq g} \sum_{i=1}^{N} \sum_{t=1}^{T} \sum_{s=1}^{T} 1\left\{\hat{g}_{i}=g, g_{i}^{0}=h\right\} x_{it} x_{is}^{\prime}\left(\varepsilon_{it}-x_{it}^{\prime}\left(\hat{\beta}_{g}-\beta_{h}^{0}\right)\right)\left(\varepsilon_{is}-x_{is}^{\prime}\left(\hat{\beta}_{g}-\beta_{h}^{0}\right)\right) \\
&=(1+o_p(1))\left\{\frac{1}{N_{g} T} \sum_{h \neq g} \sum_{i=1}^{N} \sum_{t=1}^{T} \sum_{s=1}^{T} 1\left\{\hat{g}_{i}=g, g_{i}^{0}=h\right\} x_{it} x_{is}^{\prime} \varepsilon_{it} \varepsilon_{is} \right. \\
& \left. \hspace{3cm} -\frac{2}{N_{g} T} \sum_{h \neq g} \sum_{i=1}^{N} \sum_{t=1}^{T} \sum_{s=1}^{T} 1\left\{\hat{g}_{i}=g, g_{i}^{0}=h\right\} x_{it} x_{is}^{\prime} \varepsilon_{it} x_{is}^{\prime}\left(\hat{\beta}_{g}-\beta_{h}^{0}\right) \right. \\
& \left. \hspace{3cm} +\frac{1}{N_{g} T} \sum_{h \neq g} \sum_{i=1}^{N} \sum_{t=1}^{T} \sum_{s=1}^{T} 1\left\{\hat{g}_{i}=g, g_{i}^{0}=h\right\} x_{it}^{\prime}\left(\hat{\beta}_{g}-\beta_{h}^{0}\right)\left(\hat{\beta}_{g}-\beta_{h}^{0}\right)^{\prime} x_{is}\right\}.
\end{align}
The same argument as used above shows that the first, second, and third terms in the curly brackets are $o_p(1/T)$, $o_p(1)$, and $o_p(1)$, respectively. This yields $A_{2}=o_p(1)$.

Combining the above results with condition (f), we deduce $\widehat{\Omega}_g \xrightarrow{p} \Omega_g$.
\end{proof}

Lemmas \ref{lem:first_deriv_indicator} and \ref{lem:partial_exp} below are useful to establish Proposition \ref{prop:fixed_T}.

\begin{lem}
    \begin{align}
        \frac{\partial}{\partial \theta_{\widetilde{g}}'} E\left[1\left\{\hat{g}_{i}(\boldsymbol{\theta})=g\right\} \mid x_{i}=x\right]=
        \begin{cases}
        \sum_{h \neq g} \int_{S_{g h}} \left\|x\left(\theta_{h}-\theta_{g}\right)\right\|^{-1}\left(y-x \theta_{g}\right)^{\prime} x f(y \mid x) d y \quad \text{ for } \ \tilde{g}=g \\
        -\int_{S_{g \widetilde{g}}}\left\|x\left(\theta_{\widetilde{g}}-\theta_{g}\right)\right\|^{-1} \left(y-x \theta_{\widetilde{g}}\right)^{\prime} x f(y \mid x) d y \quad \text{ for } \ \tilde{g} \neq g
        \end{cases}.
    \end{align}
    \label{lem:first_deriv_indicator}
\end{lem}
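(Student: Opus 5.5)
We would prove Lemma \ref{lem:first_deriv_indicator} by treating $E[1\{\hat g_i(\boldsymbol\theta)=g\}\mid x_i=x]=\int_{A_g(\boldsymbol\theta)}f(y\mid x)\,dy$ as the integral of a continuous density over a moving region, and then differentiating that region with a transport (Leibniz) formula for polyhedral domains. The key observation is that the region is a polyhedron. For fixed $x$ and $\boldsymbol\theta$, set
\begin{align}
A_g(\boldsymbol\theta)=\bigcap_{h\neq g}\{y\in\mathbb R^T:\ \phi_{gh}(y,\boldsymbol\theta)\le 0\},\qquad \phi_{gh}(y,\boldsymbol\theta)\coloneqq\|y-x\theta_g\|^2-\|y-x\theta_h\|^2 .
\end{align}
Expanding gives $\phi_{gh}(y,\boldsymbol\theta)=-2y'x(\theta_g-\theta_h)+\theta_g'x'x\theta_g-\theta_h'x'x\theta_h$, which is affine in $y$. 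Hence $A_g(\boldsymbol\theta)$ is an intersection of half-spaces. The face of its boundary shared with region $h$ is exactly $S_{gh}$, which lies in the hyperplane $\{\phi_{gh}=0\}$. We assume throughout that $x(\theta_h-\theta_g)\neq 0$ for $h\neq g$, so that every face is a genuine hyperplane piece and the normalising denominators are nonzero. Ties between three or more groups occur on lower-dimensional sets, which have Lebesgue (and surface) measure zero; this is where Assumption \ref{asm:fixed_T}(f), the continuous conditional density, is used.

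Next we would apply the transport formula for a domain $\{\phi\le 0\}$ whose boundary moves with a parameter $\tau$. It reads
\begin{align}
\frac{d}{d\tau}\int_{\{\phi\le 0\}}f\,dy=-\int_{\{\phi=0\}}\frac{\partial_\tau\phi}{\|\nabla_y\phi\|}\,f\,dS,
\end{align}
and for the polyhedron it is applied face by face. On the face $S_{gh}$ we have $\nabla_y\phi_{gh}=2x(\theta_h-\theta_g)$, so its norm is $2\|x(\theta_h-\theta_g)\|$. The two relevant parameter derivatives are
\begin{align}
\partial_{\theta_g}\phi_{gh}=-2x'(y-x\theta_g),\qquad \partial_{\theta_h}\phi_{gh}=2x'(y-x\theta_h).
\end{align}

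We then read off the two cases, noting how many faces move in each.
\begin{itemize}
\item[(1)] When $\tilde g=g$, every face $S_{gh}$, $h\neq g$, moves. Summing the face contributions gives $\sum_{h\neq g}\int_{S_{gh}}\|x(\theta_h-\theta_g)\|^{-1}(y-x\theta_g)'x\,f(y\mid x)\,dy$, the first line of the lemma.
\item[(2)] When $\tilde g\neq g$, only the face $S_{g\tilde g}$ moves. Its contribution is $-\int_{S_{g\tilde g}}\|x(\theta_{\tilde g}-\theta_g)\|^{-1}(y-x\theta_{\tilde g})'x\,f(y\mid x)\,dy$, the second line.
\end{itemize}
The signs can be sanity-checked directly: moving $\theta_g$ toward $y$ enlarges region $g$, while moving $\theta_{\tilde g}$ toward $y$ shrinks it.

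The main obstacle is justifying the transport formula rigorously for a polyhedron whose faces intersect. To handle it, we would compute a one-sided difference quotient for a perturbation $\theta_{\tilde g}+\delta v$. The symmetric difference $A_g(\boldsymbol\theta+\delta v)\triangle A_g(\boldsymbol\theta)$ is, up to $O(\delta^2)$ sets near the lower-dimensional edges, a union of thin slabs over each moving face. Each slab has signed thickness $\delta\,\partial_\theta\phi\cdot v/\|\nabla_y\phi\|+o(\delta)$. Using Fubini in coordinates normal to the face and continuity of $f(\cdot\mid x)$, the slab integrals divided by $\delta$ converge to the face integrals. The edge corrections vanish because the edges carry zero surface measure and $f$ is locally bounded. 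Finally, the integrability of $\|y\|f(y\mid x)$ on the faces, inherited from the moment condition Assumption \ref{asm:fixed_T}(c), gives dominated convergence and so allows passage to the limit.
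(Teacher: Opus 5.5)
Your proposal is correct and is essentially the paper's argument. The paper also writes the conditional probability as $\int_{V_g} f(y\mid x)\,dy$ over the polyhedron $V_g$ and differentiates it as the boundary integral $\sum_{h\neq g}\int_{S_{gh}} f(y\mid x)\,v_g(y;\theta_{\tilde g})'\,dy$; your normal speed $-\partial_\tau\phi_{gh}/\|\nabla_y\phi_{gh}\|$ is exactly the velocity the paper gets by solving for $\lambda(\xi)$ in $y^*(\xi)=y+\lambda(\xi)x(\theta_h-\theta_g)$, and in both arguments only the face $S_{g\tilde g}$ moves when $\tilde g\neq g$.

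Your slab/difference-quotient justification goes further than the paper, which simply invokes the differential calculus of Pollard (1982) and Bonhomme and Manresa (2015). The one loose point is the claim that integrability of $\|y\|f(y\mid x)$ on the faces follows from Assumption \ref{asm:fixed_T}(c). That assumption is a volume moment bound, and it does not control surface integrals such as $\int_{S_{gh}}\|y\|f(y\mid x)\,dy$ or the tails along the unbounded faces and edges. This integrability has to be assumed separately, as the finiteness of the lemma's right-hand side already implicitly requires.
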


\noindent\begin{proof}
Let
\begin{align}
    V_{g}=\left\{y \in \mathbb{R}^{T} ;\left\|y-x \theta_{g}\right\|^{2} \leq\|y-x \theta_{\widetilde{g}}\|^{2} \text{ for all } \tilde{g} \neq g\right\}.
\end{align}
Note that $V_{g}$ is the intersection of ($G-1$) half-spaces in $\mathbb{R}^{T}$ as
\begin{align}
\left\|y-x \theta_{\widetilde{g}}\right\|^{2}-\left\|y-x \theta_{g}\right\|^{2}=  2\left(\theta_{g}-\theta_{\widetilde{g}}\right)^{\prime} x^{\prime}\left\{y-x\left(\theta_{g}-\theta_{\widetilde{g}}\right)\right\}.
\end{align}
Using differential calculus as in \textcite{pollard1982central} and \textcite{bonhomme2015grouped}, we have
\begin{align}
    \frac{\partial}{\partial \theta_{\widetilde{g}}^{\prime}} E\left[1\left\{\hat{g}_{i}(\boldsymbol{\theta})=g\right\} \mid x_{i}=x\right] 
    & =\frac{\partial}{\partial \theta_{\widetilde{g}}^{\prime}} \int_{V_{g}} f(y \mid x) d y \\
& =\int_{\partial V_{g}} f(y \mid x) v_{g}\left(y ; \theta_{\widetilde{g}}\right)^{\prime} d y,
\end{align}
where $\partial V_{g}$ is the boundary of $V_{g}$, and $v_{g}\left(y ; \theta_{\widetilde{g}}\right)=\left(v_{g}\left(y ; \theta_{\widetilde{g} k}\right)\right)$ is the $p$-dimensional velocity vector. Noting that $\partial V_{g}=\bigcup_{h \neq g} S_{g h}$ and thus
\begin{align}
    \int_{\partial V_{g}} f(y \mid x) v_{g}\left(y ; \theta_{\widetilde{g}}\right)^{\prime} d y=\sum_{h \neq g} \int_{S_{g h}} f(y \mid x) v_{g}\left(y ; \theta_{\widetilde{g}}\right)^{\prime} d y,
    \label{eqn:int_decompose}
\end{align}
we compute $v_{g}\left(y ; \theta_{\widetilde{g}}\right)$ on $S_{g h}$.

First consider the case $\tilde{g}=g$. For a given (small) $\xi \in \mathbb{R}$, define
\begin{align}
    \theta_{g}^{*}(\xi)=\theta_{g}+\xi e_{k},
\end{align}
and
\begin{align}
    S_{g h}^{*}(\xi)=\left\{y \in \mathbb{R}^{T}; \left\|y-x \theta_{g}^{*}(\xi)\right\|^{2}=\left\|y-x \theta_{h}\right\|^{2},\text{ and } \left\|y-x \theta_{g}^{*}(\xi)\right\|^{2} \leq\left\|y-x \theta_{\widetilde{g}}\right\|^{2} \text{ for all } \tilde{g} \neq g, h\right\}.
\end{align}
To any given $y \in S_{g h}$, associate the point $y^{*}(\xi) \in S_{g h}^*(\xi)$ such that $y^{*}(\xi)-y$ is orthogonal to the hypersurface $S_{gh}$. The velocity is defined by
\begin{align}
    v_{g}\left(y; \theta_{g k}\right)=\lim _{\xi \rightarrow 0} \frac{\left(y^{*}(\xi)-y\right)^{\prime} \vec{n}}{\xi}, \quad \text{ for } k=1, \ldots, p,
\end{align}
where $\vec{n}$ is the unit normal vector to $S_{gh}$ that points outside of $V_g$. Because $S_{g h}$ is characterized by the equation
\begin{align}
    \frac{1}{2}\left(\left\|y-x \theta_{g}\right\|^{2}-\left\|y-x \theta_{h}\right\|^{2}\right)=\left(\theta_{h}-\theta_{g}\right)^{\prime} x^{\prime}\left(y-x \frac{\theta_{h}+\theta_{g}}{2}\right)=0,
\end{align}
we have
\begin{align}
    \vec{n}=\frac{x\left(\theta_{h}-\theta_{g}\right)}{\left\|x\left(\theta_{h}-\theta_{g}\right)\right\|}.
\end{align}
Moreover, $y^{*}(\xi)$ satisfies
\begin{align}
    y^{*}(\xi)=y+\lambda(\xi) x\left(\theta_{h}-\theta_{g}\right),
\end{align}
where $\lambda(\xi)$ is such that
\begin{align}
    \left(\theta_{h}-\theta_{g}^{*}(\xi)\right)^{\prime} x^{\prime}\left(y^{*}(\xi)-x \frac{\theta_{h}+\theta_{g}^{*}(\xi)}{2}\right)=0.
\end{align}
That is,
\begin{align}
    \left(\theta_{h}-\theta_{g}-\xi e_{k}\right)^{\prime} x^{\prime}\left\{y+\lambda(\xi) x\left(\theta_{h}-\theta_{g}\right)-x \frac{\theta_{h}+\theta_{g}}{2}-\frac{\xi}{2} x e_{k}\right\}=0.
\end{align}
Rearranging terms gives
\begin{align}
    \lambda(\xi)=\xi \frac{\left(y-x \theta_{g}\right)^{\prime} x e_{k}}{\left\|x\left(\theta_h-\theta_{g}\right)\right\|^{2}}+o(\xi). \label{eqn:lambda_xi}
\end{align}
It thus follows that
\begin{align}
    v_{g}\left(y ; \theta_{g k}\right) 
    & =\lim _{\xi \rightarrow 0} \frac{\lambda(\xi)\left(\theta_{h}-\theta_{g}\right)^{\prime} x^{\prime}\left\{x\left(\theta_{h}-\theta_{g}\right)/\left\|x\left(\theta_{h}-\theta_{g}\right)\right\|\right\}}{\xi} \\
& =\lim _{\xi \rightarrow 0} \frac{\xi \{\left(y-x \theta_{g}\right)^{\prime} x e_{k}/\left\|x\left(\theta_{h}-\theta_{g}\right)\right\|^{2}\}\left\|x\left(\theta_{h}-\theta_{g}\right)\right\|+o(\xi)}{\xi} \\
& =\frac{\left(y-x \theta_{g}\right)^{\prime} x e_{k}}{\left\|x\left(\theta_{h}-\theta_{g}\right)\right\|}.
\end{align}
This gives
\begin{align}
    v_{g}\left(y ; \theta_{g}\right)^{\prime}=\frac{\left(y-x \theta_{g}\right)^{\prime} x}{\left\|x\left(\theta_{h}-\theta_{g}\right)\right\|}.
\end{align}

Combining the above results, we get
\begin{align}
    \int_{\partial V_{g}} f(y \mid x) v_{g}\left(y ; \theta_{\widetilde{g}}\right)^{\prime} d y=\sum_{h \neq g} \int_{S_{g h}} \frac{\left(y-x \theta_{g}\right)^{\prime} x}{\left\|x\left(\theta_{h}-\theta_{g}\right)\right\|} f(y \mid x) d y.
\end{align}
This yields the first result of Lemma \ref{lem:first_deriv_indicator}.

We next compute $v_{g}\left(y ; \theta_{\widetilde{g}}\right)$ for $\tilde{g} \neq g$. Again invoking \eqref{eqn:int_decompose}, we do this on $S_{g h}$ for each $h \neq g$. If $\tilde{g} \neq h$, then $\lambda(\xi)=0$ so $v_{g}\left(y, \theta_{\widetilde{g}}\right)=0$. If $\tilde{g}=h$, then $y^{*}(\xi) \in S_{gh}$ solves
\begin{align}
    \left(\theta_{h}^{*}(\xi)-\theta_{g}\right)^{\prime} x^{\prime}\left\{y^{*}(\xi)-x \frac{\theta_{h}^{*}(\xi)+\theta_{g}}{2}\right\}=0,
\end{align}
where $\theta_h^{*}(\xi)=\theta_h+\xi e_{k}$. That is
\begin{align}
    &\left(\theta_{h}-\theta_{g}+\xi e_{k}\right)^{\prime} x^{\prime}\left(y-x \frac{\theta_{h}+\theta_{g}}{2}+\lambda(\xi) x\left(\theta_{h}-\theta_{g}\right)-\frac{\xi}{2} x e_{k}\right)=0, \\
    \intertext{so}
    &\lambda(\xi)=-\xi \frac{\left(y-x \theta_{h}\right)^{\prime} x e_{k}}{\left\|x\left(\theta_{h}-\theta_{g}\right)\right\|^{2}}+o(\xi).
\end{align}

Following the same steps as used after \eqref{eqn:lambda_xi} yields the second result.
\end{proof}

The following result is an immediate consequence of this lemma.

\begin{lem}
\begin{align}
    &\left.\frac{\partial}{\partial \theta_{\widetilde{g}}^{\prime}}\right|_{\boldsymbol{\theta}=\overline{\boldsymbol{\theta}}} E\left[1 \left\{ \hat{g}_{i}(\boldsymbol{\theta})=g\right\}x_{i}^{\prime}\left(y_{i}-x_{i} \overline{\theta}_{g}\right) \mid x_{i}=x\right] \\
   & = \begin{cases}\sum_{h \neq g} \int_{\overline{S}_{g h}} \left\{x^{\prime}\left(y-x\overline{\theta}_{g}\right)\left(y-x \overline{\theta}_{g}\right)^{\prime} x/\left\|x\left(\overline{\theta}_{h}-\overline{\theta}_{g}\right)\right\|\right\} f(y \mid x) d y & \text{ for } \ \tilde{g}=g \\
   -\int_{\overline{S}_{g \widetilde{g}}} \left\{x^{\prime}\left(y-x \overline{\theta}_{g}\right)\left(y-x \overline{\theta}_{\widetilde{g}}\right)^{\prime} x/\left\|x\left(\overline{\theta}_{\widetilde{g}}-\overline{\theta}_{g}\right)\right\|\right\} f(y \mid x) d y & \text{ for } \ \tilde{g} \neq g\end{cases}.
\end{align}
\label{lem:partial_exp}
\end{lem}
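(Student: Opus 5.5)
We prove Lemma~\ref{lem:partial_exp} by a product rule: split the derivative into a term from differentiating the indicator and a term from differentiating the explicit $\theta_g$ in the integrand. Conditional on $x_i=x$, write
\begin{align}
E\left[1\{\hat g_i(\boldsymbol\theta)=g\}x'(y_i-x\theta_g)\mid x_i=x\right]=\int_{V_g(\boldsymbol\theta)} x'(y-x\theta_g) f(y\mid x)\,dy,
\end{align}
where $V_g(\boldsymbol\theta)$ is the polyhedral cell from the proof of Lemma~\ref{lem:first_deriv_indicator}. The statement's left-hand side evaluates the integrand at the fixed value $\overline\theta_g$ and differentiates only through the indicator. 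So the explicit-$\theta_g$ term (which would give $-E[1\{\hat g_i=g\}x'x]$ and feed the first piece of $\Gamma_{gg}$ in Proposition~\ref{prop:fixed_T}) is excluded here. The lemma therefore reduces to differentiating the moving domain $V_g(\boldsymbol\theta)$ with a smooth, $\boldsymbol\theta$-free weight $\varphi(y)=x'(y-x\overline\theta_g)$.

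The plan is to repeat the Reynolds transport/velocity argument of Lemma~\ref{lem:first_deriv_indicator}, replacing $f(y\mid x)$ with $\varphi(y)f(y\mid x)$, which is continuous by Assumption~\ref{asm:fixed_T}(f). We get
\begin{align}
\frac{\partial}{\partial\theta_{\tilde g}'}\int_{V_g}\varphi f\,dy=\sum_{h\neq g}\int_{S_{gh}}\varphi(y)f(y\mid x)\,v_g(y;\theta_{\tilde g})'\,dy.
\end{align}
The velocities are those already computed:
\begin{align}
v_g(y;\theta_g)'&=(y-x\theta_g)'x/\|x(\theta_h-\theta_g)\| && \text{on } S_{gh},\\
v_g(y;\theta_h)'&=-(y-x\theta_h)'x/\|x(\theta_h-\theta_g)\| && \text{on } S_{gh} \text{ when } \tilde g=h,\\
v_g(y;\theta_{\tilde g})&=0 && \text{on } S_{gh} \text{ when } \tilde g\notin\{g,h\}.
\end{align}
Evaluating at $\boldsymbol\theta=\overline{\boldsymbol\theta}$ gives the outer product $x'(y-x\overline\theta_g)(y-x\overline\theta_{\tilde g})'x/\|x(\overline\theta_{\tilde g}-\overline\theta_g)\|$ integrated over $\overline S_{g\tilde g}$. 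In the case $\tilde g=g$ this becomes the sum over $h$ with $(y-x\overline\theta_g)$ appearing twice; in the case $\tilde g\neq g$ the minus sign is inherited from the velocity. These are exactly the two displayed cases.

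The main obstacle is justifying the transport formula with a weight. This means checking that the boundary $\partial V_g$ is a finite union of pieces of hyperplanes, with lower-dimensional intersections carrying zero mass, and that $\varphi f$ is integrable on these faces. For a fixed $x$ the faces are affine, and $\|x(\overline\theta_h-\overline\theta_g)\|>0$ by the uniqueness and distinctness implied by Assumption~\ref{asm:fixed_T}(e). I would establish differentiability directly via a change of variables $y\mapsto y+\lambda(\xi)x(\theta_h-\theta_g)$ near each face and then apply dominated convergence, using the continuity of the density and the moment bounds in Assumption~\ref{asm:fixed_T}(c).
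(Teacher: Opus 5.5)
Your proposal is correct and follows the paper's route. The paper states this lemma as an immediate consequence of Lemma~\ref{lem:first_deriv_indicator}, which is exactly your observation: only the cell $V_g(\boldsymbol{\theta})$ moves, the weight $x'(y-x\overline{\theta}_g)$ does not depend on $\boldsymbol{\theta}$, and so the same face velocities multiply $x'(y-x\overline{\theta}_g)f(y\mid x)$. This yields the stated outer products, with the minus sign for $\tilde g\neq g$ inherited from the velocity.

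One caution on the regularity step you flag, which the paper does not address either. Continuity of $f(\cdot\mid x)$ together with the moment bound in Assumption~\ref{asm:fixed_T}(c) does not by itself make the integrals of $\|y\|^2 f(y\mid x)$ over the hyperplane pieces $S_{gh}$ finite or continuous in $\boldsymbol{\theta}$. A rigorous dominated-convergence argument would therefore need an extra condition, such as a bounded conditional density with suitable tail decay.
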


Now we are ready to prove Proposition \ref{prop:fixed_T}.

\noindent\begin{proof}[Proof of Proposition \ref{prop:fixed_T}]
    If $g=\tilde{g}$, then
    \begin{align}
        \Gamma_{g g} 
        & =-\left.\frac{\partial}{\partial \theta_{g}^{\prime}}\right|_{\boldsymbol{\theta}=\overline{\boldsymbol{\theta}}} E\left[1 \left\{\hat{g}_{i}(\boldsymbol{\theta})=g\right\}x_{i}^{\prime}\left(y_{i}-x_{i} \theta_{g}\right)\right] \\
        & =E\left[x_{i}^{\prime} x_{i} 1\left\{\hat{g}_{i}(\overline{\boldsymbol{\theta}})=g\right\}\right]  -E\left[\left.\frac{\partial}{\partial \theta_{g}^{\prime}} \right|_{\boldsymbol{\theta}=\overline{\boldsymbol{\theta}}} E\left[1\left\{\hat{g}_{i}(\boldsymbol{\theta})=g\right\} x_{i}^{\prime}\left(y_{i}-x_{i} \overline{\theta}_{g}\right) \mid x_{i}\right]\right] \\
        & =E\left[x_{i}^{\prime} x_{i} 1\left\{\hat{g}_{i}(\overline{\boldsymbol{\theta}})=g\right\}\right]  -E\left[\sum_{h \neq g} \int_{\overline{S}_{g h}} \frac{x_{i}^{\prime}\left(y-x_{i} \overline{\theta}_{g}\right)\left(y-x_{i} \overline{\theta}_{g}\right)^{\prime} x_{i}}{\left\|x_{i}\left(\overline{\theta}_{h}-\overline{\theta}_{g}\right)\right\|} f\left(y\mid x_{i}\right) d y\right],
    \end{align}   
    where the last equality follows from Lemma \ref{lem:partial_exp}.

Similarly, for $\tilde{g} \neq g$, we have
\begin{align}
    \Gamma_{g \widetilde{g}} 
    & =-\left.\frac{\partial}{\partial \theta_{\widetilde{g}}^{\prime}}\right|_{\boldsymbol{\theta}=\overline{\boldsymbol{\theta}}} E\left[1\left\{\hat{g}_{i}(\boldsymbol{\theta})=g\right\} x_{i}^{\prime}\left(y_{i}-x_{i} \theta_g\right)\right] \\
    & =-E\left[\left.\frac{\partial}{\partial \theta_{\widetilde{g}}^{\prime}}\right|_{\boldsymbol{\theta}=\overline{\boldsymbol{\theta}}} E\left[1\left\{\hat{g}_{i}(\boldsymbol{\theta})=g\right\} x_{i}^{\prime}\left(y_{i}-x_{i} \overline{\theta}_{g}\right) \mid x_{i}\right]\right] \\
    & =E\left[\int_{\overline{S}_{g \widetilde{g}}} \frac{x_{i}^{\prime}\left(y-x_{i} \overline{\theta}_{g}\right)\left(y-x_i \overline{\theta}_{\widetilde{g}}\right)^{\prime} x_{i}}{\left\|x_{i}\left(\overline{\theta}_{\widetilde{g}} - \overline{\theta}_{g}\right) \right\|} f\left(y\mid x_{i}\right) d y\right].
\end{align}
\end{proof}

\section{Joint coverage by confidence intervals and confidence sets}
\label{appendix: joint_coverage}

Theorems \ref{thm:coverage_um} and \ref{thm:coverage_bc} guarantee unit-wise (marginal) correct coverage by the bias-corrected and minimum-statistic-based confidence sets. The joint coverage across $N$ units can also be established under the assumption of joint coverage by the first step confidence set, $\mathrm{CS}_{i,\alpha}$, for group membership and a suitably adjusted nominal level for the second step confidence set construction. Specifically, we assume the following condition.

\begin{asm}\label{asm:cs_membership_joint}
    \begin{align}
        \liminf_{N,T\to\infty}P\left(\bigcap_{i=1}^N \{g_i^0\in \mathrm{CS}_{i, \alpha}\}\right)\geq 1-\alpha.
    \end{align}
\end{asm}
\textcite{dzemski2024confidence} discuss sufficient conditions under which their confidence set for group membership satisfies Assumption \ref{asm:cs_membership_joint}. In particular, the critical value, $c_{i,\alpha}(g)$, used to construct $\mathrm{CS}_{i,\alpha}$ needs to be the $(1-\alpha/N)$-quantile of a suitable distribution.

\begin{thm}\label{thm:joint_coverage}
    Suppose Assumptions \ref{asm:asym_normal}--\ref{asm:vcov_consistent} and \ref{asm:cs_membership_joint} hold. Suppose $\mathrm{CI}^{\mathrm{BC}}(i,\alpha)$, $\mathrm{CS}^{\mathrm{BC}}(i,\alpha)$, $\mathrm{CI}^{\mathrm{MT}}(i,\alpha)$, and $\mathrm{CS}^{\mathrm{MW}}(i,\alpha)$ are defined as in the paper but calculated using $\alpha_2/G$ in place of $\alpha_2$.

    When $r=1$, we have
    \begin{itemize}
        \item[(i)] $\liminf_{N,T\to\infty}P(\bigcap_{i=1}^N\{R\theta_{g_i^0}^{0} \in \mathrm{CI}^{\mathrm{BC}}(i, \alpha)\}) \geq 1 - \alpha$,

        \item[(ii)] $\liminf_{N,T\to\infty}P(\bigcap_{i=1}^N\{R\theta_{g_i^0}^{0} \in \mathrm{CI}^{\mathrm{MT}}(i, \alpha)\}) \geq 1 - \alpha$.
    \end{itemize}
    
    When $r\geq2$, we have
    \begin{itemize}
        \item[(iii)] $\liminf_{N,T\to\infty}P(\bigcap_{i=1}^N\{R\theta_{g_{i}^{0}}^{0} \in \mathrm{CS}^{\mathrm{BC}}(i, \alpha)\}) \geq 1 - \alpha$,

        \item[(iv)] $\liminf_{N,T\to\infty}P(\bigcap_{i=1}^N\{R\theta_{g_{i}^{0}}^{0} \in\mathrm{CS}^{\mathrm{MW}}(i, \alpha)\}) \geq 1 - \alpha$.
    \end{itemize}
\end{thm}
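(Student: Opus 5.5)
The plan is to reduce joint coverage over all $N$ units to a single event that involves only the $G$ group-level estimators. I would intersect with the joint membership event $\mathcal{E}_N \coloneqq \bigcap_{i=1}^N\{g_i^0\in\mathrm{CS}_{i,\alpha_1}\}$ (reading Assumption \ref{asm:cs_membership_joint} at level $\alpha_1$, which is how it enters the construction). Define the group-level ``good'' event
\begin{align}
\mathcal{F}_{NT}\coloneqq\bigcap_{g=1}^G\left\{\widehat{N}_gT\left(\hat\theta_g-\theta_g^0\right)'R'\left(R\widehat{\Sigma}_gR'\right)^{-1}R\left(\hat\theta_g-\theta_g^0\right)\leq c_{\alpha_2/G}\right\},
\end{align}
where $c_{\alpha_2/G}$ is $z_{1-\alpha_2/(2G)}^2$ when $r=1$ (equivalently $|t_g|\le z_{1-\alpha_2/(2G)}$) and $\chi^2_{p,1-\alpha_2/G}$ when $r\ge2$. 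The key observation is that the pointwise arguments in the proofs of Theorems \ref{thm:coverage_um} and \ref{thm:coverage_bc} are deterministic inclusions. Together with the corresponding argument for $\mathrm{CI}^{\mathrm{BC}}$, they show that on $\{g_i^0\in\mathrm{CS}_{i,\alpha_1}\}\cap\{\text{the statistic for group } g_i^0 \text{ is below its critical value}\}$, unit $i$ is covered by each of the four sets. I will check this set-by-set, as follows.

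For $\mathrm{CI}^{\mathrm{MT}}$ and $\mathrm{CS}^{\mathrm{MW}}$, the minimum over $\mathrm{CS}_{i,\alpha_1}$ is at most the value at $g_i^0$. For $\mathrm{CI}^{\mathrm{BC}}$, if $|R(\hat\theta_{g_i^0}-\theta^0_{g_i^0})|\le z\,\mathrm{se}(g_i^0)\le z\,\mathrm{se}_i$, then $R\theta^0_{g_i^0}$ lies between $\min_g R\hat\theta_g - z\,\mathrm{se}_i$ and $\max_g R\hat\theta_g+z\,\mathrm{se}_i$. For $\mathrm{CS}^{\mathrm{BC}}$, I would repeat the inequality chain from the proof of Theorem \ref{thm:coverage_bc}(ii): expanding $W_{i,NT}(\theta^0,g_i^0,\hat g_i)$ via \eqref{eqn:decompose_2} with scaling at $g=g_i^0$ shows that $\overline{W}_{i,NT,\alpha_1}(\theta^0_{g_i^0})$ is bounded by the $g_i^0$-Wald statistic. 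That chain is deterministic given $g_i^0\in\mathrm{CS}_{i,\alpha_1}$, and I expect it to be the main point requiring care. The bound uses only the min/max over $\mathrm{CS}_{i,\alpha_1}$ and the nonnegativity of the subtracted bias terms, so no probabilistic step intervenes. Since each unit's $g_i^0$ is one of the $G$ groups, $\mathcal{E}_N\cap\mathcal{F}_{NT}\subseteq\bigcap_{i=1}^N\{R\theta^0_{g_i^0}\in\text{set}_i\}$. This is the crux: the unit-level events collapse to at most $G$ distinct group-level events, which is why $\alpha_2/G$ rather than $\alpha_2/N$ suffices.

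The final step is a union bound:
\begin{align}
P\left(\bigcup_{i=1}^N\{R\theta^0_{g_i^0}\notin\text{set}_i\}\right)\le P(\mathcal{E}_N^c)+\sum_{g=1}^GP\left(\text{Wald}_g>c_{\alpha_2/G}\right).
\end{align}
By Assumptions \ref{asm:asym_normal}--\ref{asm:vcov_consistent} and Slutsky's theorem, each Wald$_g$ converges in distribution to $\chi^2_r$; for $r=1$ the $t$-ratio converges to $N(0,1)$. Hence each summand tends to $\alpha_2/G$ (for $r\ge2$ I would use the $\chi^2$ quantile matching the paper's degrees-of-freedom convention, and conservativeness goes the right way if $p\ge r$). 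Taking $\limsup$ and using Assumption \ref{asm:cs_membership_joint} gives noncoverage at most $\alpha_1+\alpha_2=\alpha$, which proves (i)--(iv) simultaneously.
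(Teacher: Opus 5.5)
Your proposal is correct and follows essentially the same route as the paper: intersect with the joint membership event $\bigcap_{i=1}^N\{g_i^0\in\mathrm{CS}_{i,\alpha_1}\}$ (the paper also implicitly reads the joint-coverage assumption at level $\alpha_1$), replace each unit's noncoverage event by the event that the group-$g_i^0$ statistic exceeds its critical value, collapse the union over $i$ to a union over the $G$ groups, and apply a Bonferroni bound with $\alpha_2/G$. The paper writes this out only for $\mathrm{CI}^{\mathrm{BC}}$, splitting the two tails. Your deterministic-inclusion framing supplies the omitted cases, in particular the bound $\overline{W}_{i,NT,\alpha_1}(\theta^0_{g_i^0})\le W_{i,NT}(\theta^0_{g_i^0},g_i^0,g_i^0)$, which follows from the identity $W_{i,NT}(\theta,g,\hat g_i)=W_{i,NT}(\theta,g,g)+(\text{squared-bias term})_g+2(\text{cross term})_g$ together with the max dominating the value at $g_i^0$, and your remark that using $\chi^2_{p}$ quantiles with $r\le p$ errs on the conservative side.
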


\noindent\begin{proof}[Proof of Theorem \ref{thm:joint_coverage}]
    We prove part (i) only to save space. The proofs for the other results are similar. As in the proof of Theorem \ref{thm:coverage_bc}(i), it suffices to show
    \begin{align}
        \limsup_{N,T\to\infty}P\left(\bigcup_{i=1}^N\left\{R\theta_{g_i^0}^0\notin \mathrm{CI}^{\mathrm{BC}}(i,\alpha)\right\}, \ \bigcap_{i=1}^N\left\{g_i^0\in \mathrm{CS}_{i,\alpha_1}\right\}\right) \leq \alpha_2.
    \end{align}

    We have
    \begin{align}
        &P\left(\bigcup_{i=1}^N\left\{R\theta_{g_i^0}^0\notin \mathrm{CI}^{\mathrm{BC}}(i,\alpha)\right\}, \ \bigcap_{i=1}^N\left\{g_i^0\in \mathrm{CS}_{i,\alpha_1}\right\}\right) \\
        &\leq P\left(\bigcup_{i=1}^N\left\{R\theta_{g_i^0}^0> \max_{g\in\mathrm{CS}_{i,\alpha_1}} R\hat{\theta}_g + z_{1-\alpha_2/(2G)}\times \mathrm{se}_i, \ g_i^0\in \mathrm{CS}_{i,\alpha_1} \right\}\right) \\
        &\qquad +P\left(\bigcup_{i=1}^N\left\{R\theta_{g_i^0}^0< \min_{g\in\mathrm{CS}_{i,\alpha_1}} R\hat{\theta}_g - z_{1-\alpha_2/(2G)}\times \mathrm{se}_i, \ g_i^0\in \mathrm{CS}_{i,\alpha_1} \right\}\right) \\
        &\leq P\left(\bigcup_{i=1}^N\left\{\frac{R\left(\hat{\theta}_{g_i^0}-\theta_{g_i^0}^0\right)}{\mathrm{se}(g_i^0)}< -z_{1-\alpha_2/(2G)}\right\}\right) + P\left(\bigcup_{i=1}^N\left\{\frac{R\left(\hat{\theta}_{g_i^0}-\theta_{g_i^0}^0\right)}{\mathrm{se}(g_i^0)}> z_{1-\alpha_2/(2G)}\right\}\right) \\
        &=P\left(\bigcup_{g=1}^G\left\{\frac{R\left(\hat{\theta}_{g}-\theta_{g}^0\right)}{\mathrm{se}(g)}< -z_{1-\alpha_2/(2G)}\right\}\right) + P\left(\bigcup_{g=1}^G\left\{\frac{R\left(\hat{\theta}_{g}-\theta_{g}^0\right)}{\mathrm{se}(g)}> z_{1-\alpha_2/(2G)}\right\}\right) \\
        &\leq \sum_{g=1}^G P\left(\left\{\frac{R\left(\hat{\theta}_{g}-\theta_{g}^0\right)}{\mathrm{se}(g)}< -z_{1-\alpha_2/(2G)} \text{ or } \frac{R\left(\hat{\theta}_{g}-\theta_{g}^0\right)}{\mathrm{se}(g)}> z_{1-\alpha_2/(2G)} \right\}\right) \\
        &\leq \alpha_2 + o(1).
    \end{align}
    This proves part (i).
\end{proof}

\section{Additional simulation results}\label{appendix:simulation}

In this appendix, we present the simulation results for $T=40$.

Figure \ref{fig:coverage_40} gives the coverage properties for the confidence intervals. When $N=40$, the unit-by-unit confidence interval attains the nominal 90\% coverage rate, while the naive confidence interval suffers from severe undercoverage, particularly when $\sigma_i$ is moderate to large. The bias-corrected confidence interval can undercover for small $\sigma_i$ but tends to be conservative for moderate to large $\sigma_i$, while the minimum-type confidence interval can suffer from severe undercoverage for relatively small $\sigma_i$. For larger $N$, the pattern is similar to those observed for $T=80$: the unit-by-unit confidence set has correct coverage, the naive confidence set severely undercovers except for small $\sigma_i$, the bias-corrected confidence set tends to be conservative, and the minimum-type confidence interval attains nominal coverage or can be conservative. Similar comments apply to the coverage properties of the confidence sets (see Figure \ref{fig:coverage_40_vec}).

\begin{figure}[ht]
	\centering
	    \begin{subfigure}{0.64\columnwidth}
        \includegraphics[width=\columnwidth]{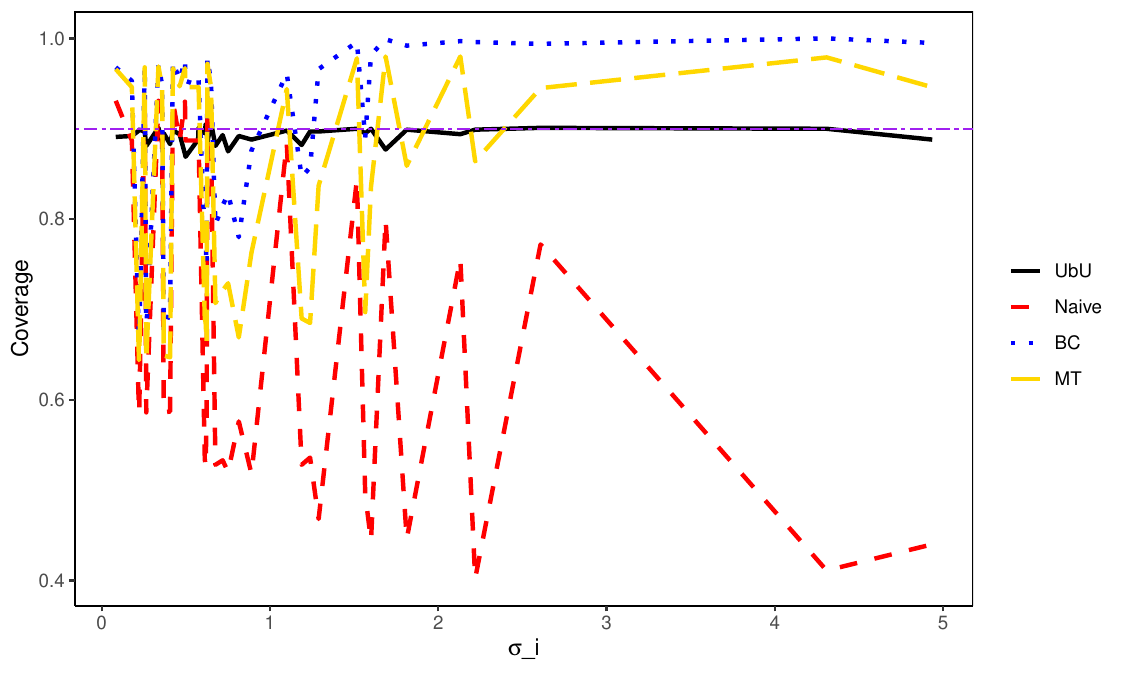}
    \caption{$(N,T)=(40,40)$}
	\label{fig:coverage_40_40}
    \end{subfigure} \quad
    \begin{subfigure}{0.64\columnwidth}
        \includegraphics[width=\columnwidth]{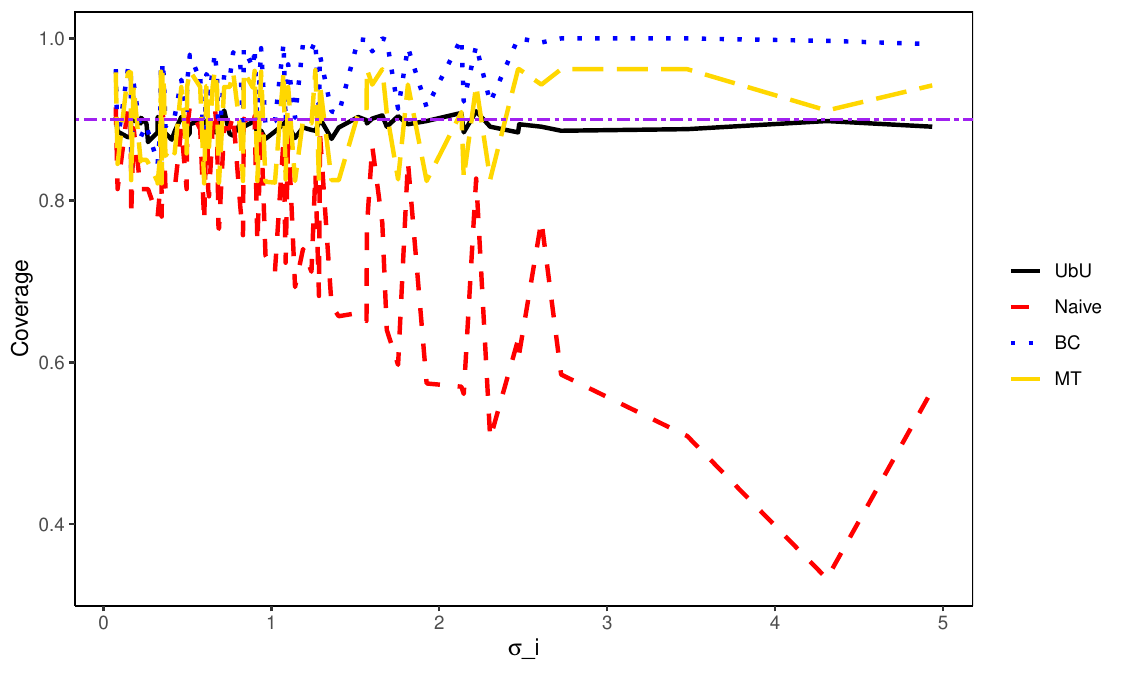}
    \caption{$(N,T)=(80,40)$}
	\label{fig:coverage_80_40}
    \end{subfigure} \quad
    \hfill
    \begin{subfigure}{0.64\columnwidth}
        \includegraphics[width=\columnwidth]{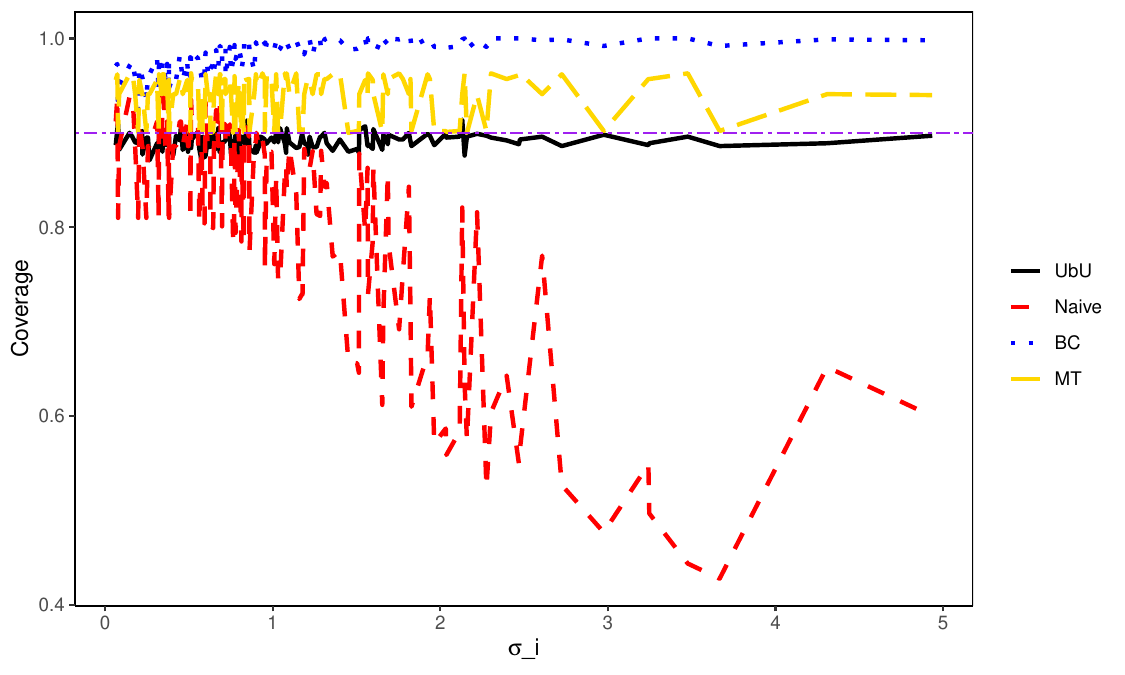}
    \caption{$(N,T)=(160,40)$}
	\label{fig:coverage_160_40}
    \end{subfigure} \quad
    \caption{Unit-wise coverage rates of confidence intervals, $T=40$}
    \label{fig:coverage_40}
    \caption*{\full: Unit-by-unit (UbU), \quad \color{red}\dashed\color{black}: Naive, \quad \color{black}\dotted\color{black}: Bias-corrected (BC), \quad \color{gold}\longdash\color{black}: Minimum t (MT)}
\end{figure}

\begin{figure}[ht]
	\centering
	    \begin{subfigure}{0.64\columnwidth}
        \includegraphics[width=\columnwidth]{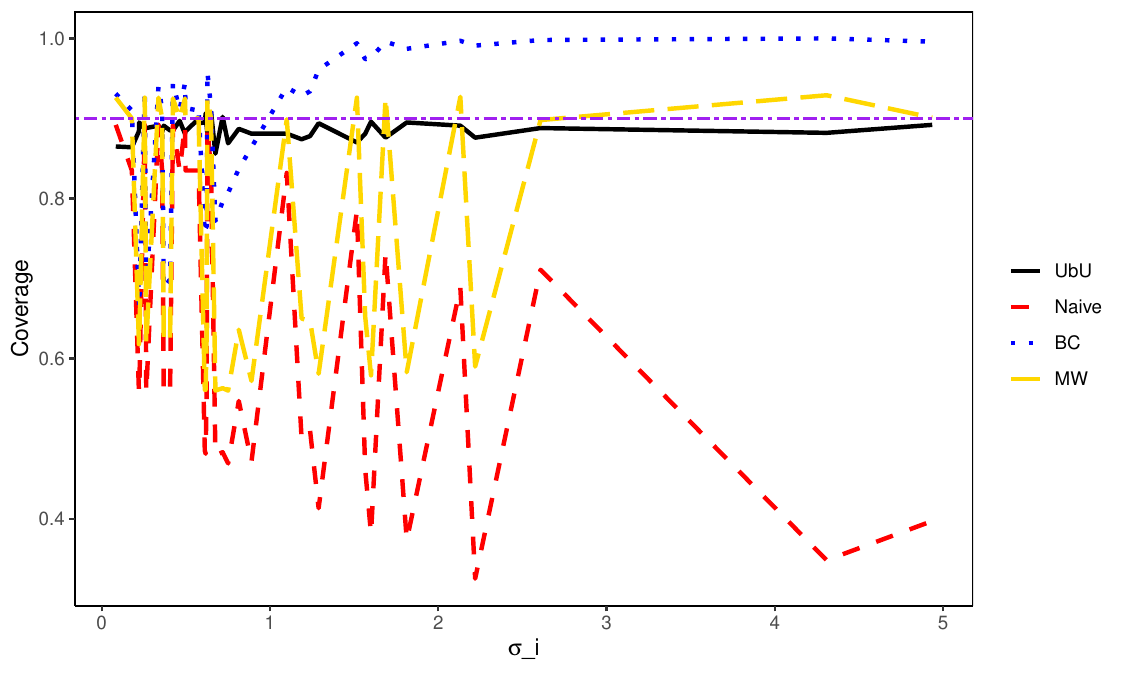}
    \caption{$(N,T)=(40,40)$}
	\label{fig:coverage_40_40_vec}
    \end{subfigure} \quad
    \begin{subfigure}{0.64\columnwidth}
        \includegraphics[width=\columnwidth]{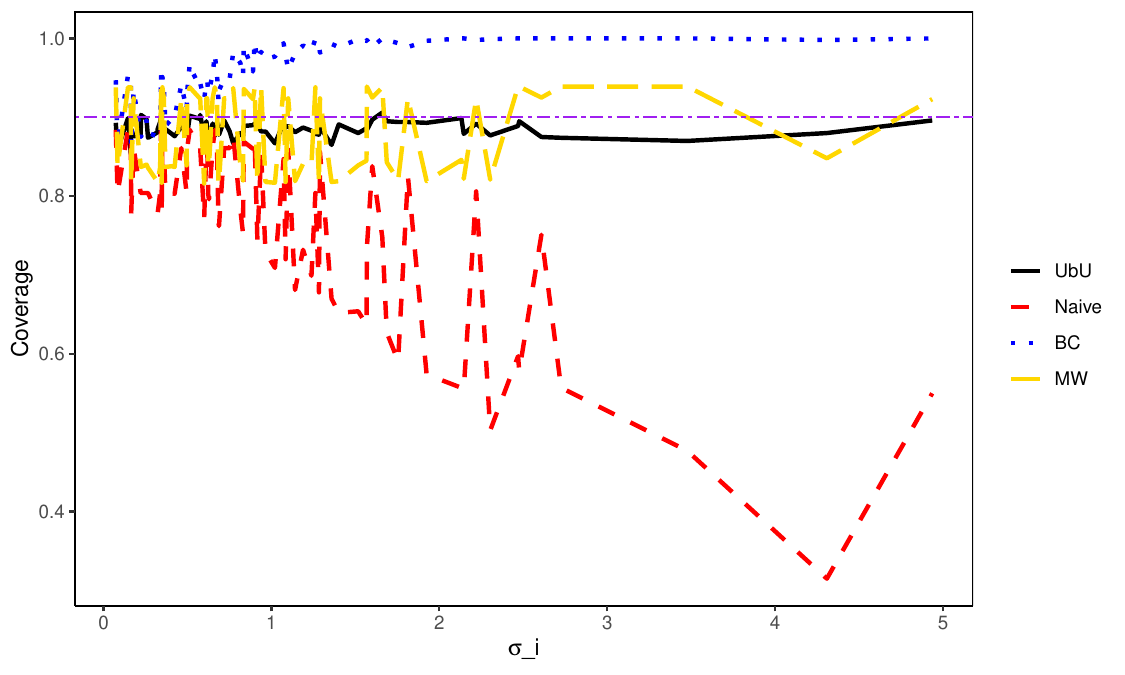}
    \caption{$(N,T)=(80,40)$}
	\label{fig:coverage_80_40_vec}
    \end{subfigure} \quad
    \hfill
    \begin{subfigure}{0.64\columnwidth}
        \includegraphics[width=\columnwidth]{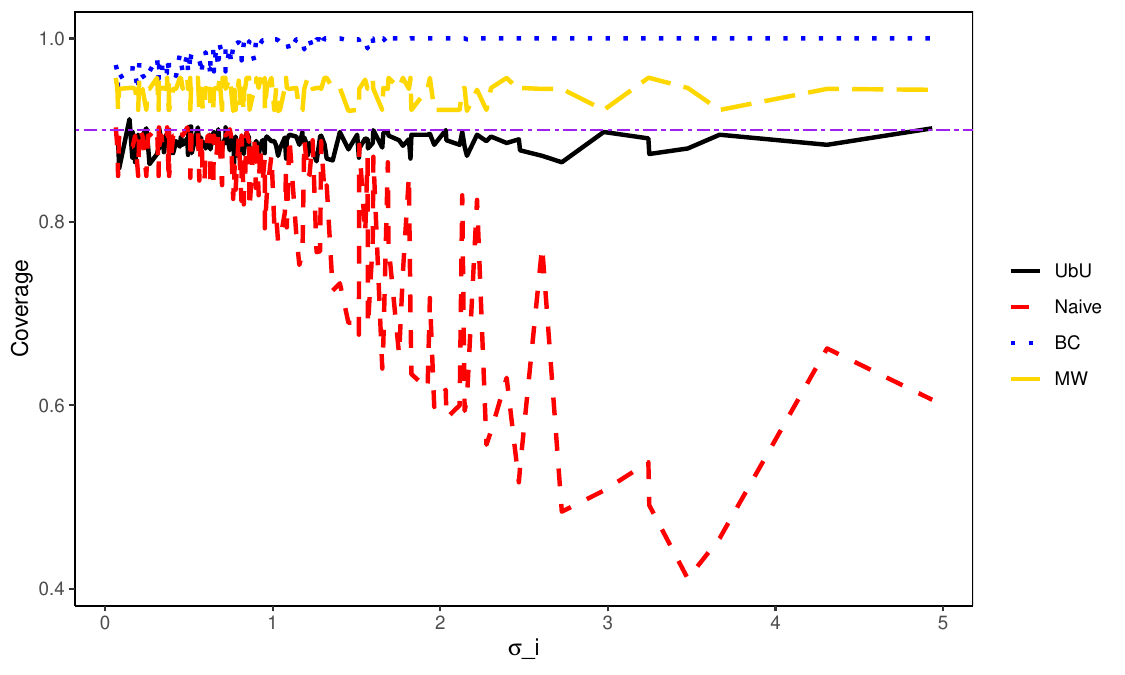}
    \caption{$(N,T)=(160,40)$}
	\label{fig:coverage_160_40_vec}
    \end{subfigure} \quad
    \caption{Unit-wise coverage rates of confidence sets, $T=40$}
    \label{fig:coverage_40_vec}
    \caption*{\full: Unit-by-unit (UbU), \quad \color{red}\dashed\color{black}: Naive, \quad \color{black}\dotted\color{black}: Bias-corrected (BC), \quad \color{gold}\longdash\color{black}: Minimum Wald (MW)}
\end{figure}

Next, let us evaluate the power properties, which are shown in Figure \ref{fig:length_40}. When $N=40$, the naive confidence interval performs best in terms of mean length except for small $\sigma_i$, while the unit-by-unit confidence interval is short only for small $\sigma_i$. The bias-corrected confidence interval is the longest among the considered confidence intervals, and the minimum-type confidence interval is long on average. As $N$ increases, the naive confidence interval remains the shortest, and the proposed methods become shorter than in the $N=40$ case. In particular, the minimum-type confidence interval is the second shortest except for very small $\sigma_i$ and outperforms the unit-by-unit confidence interval.

\begin{figure}[ht]
	\centering
	    \begin{subfigure}{0.64\columnwidth}
        \includegraphics[width=\columnwidth]{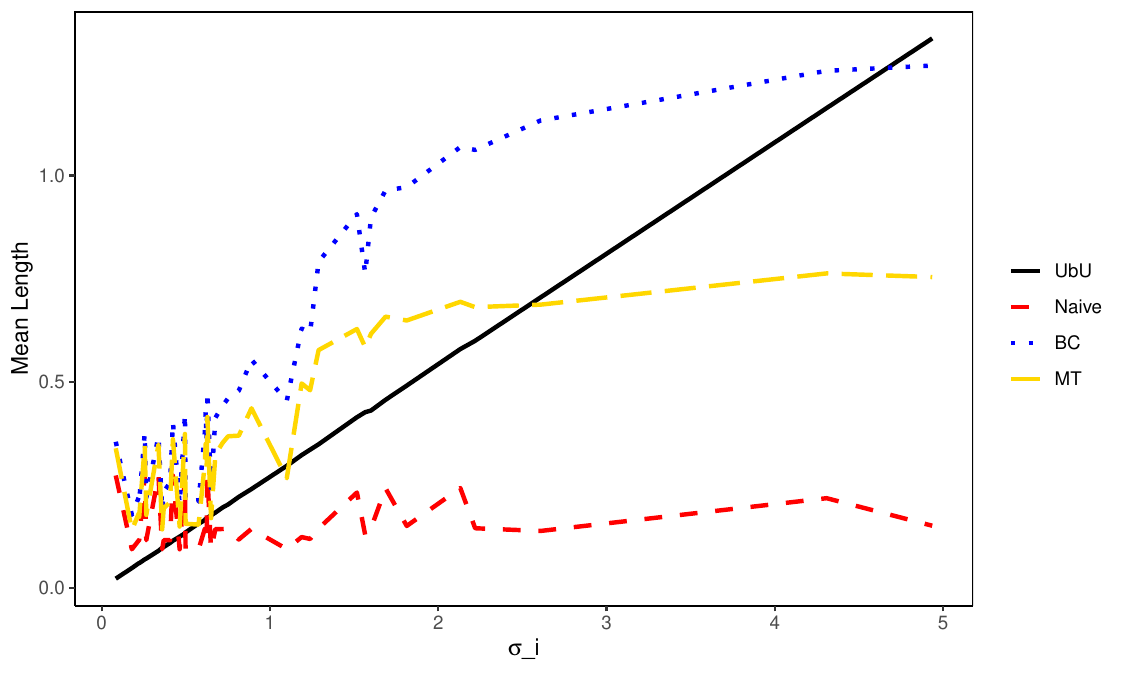}
    \caption{$(N,T)=(40,40)$}
	\label{fig:length_40_40}
    \end{subfigure} \quad
    \begin{subfigure}{0.64\columnwidth}
        \includegraphics[width=\columnwidth]{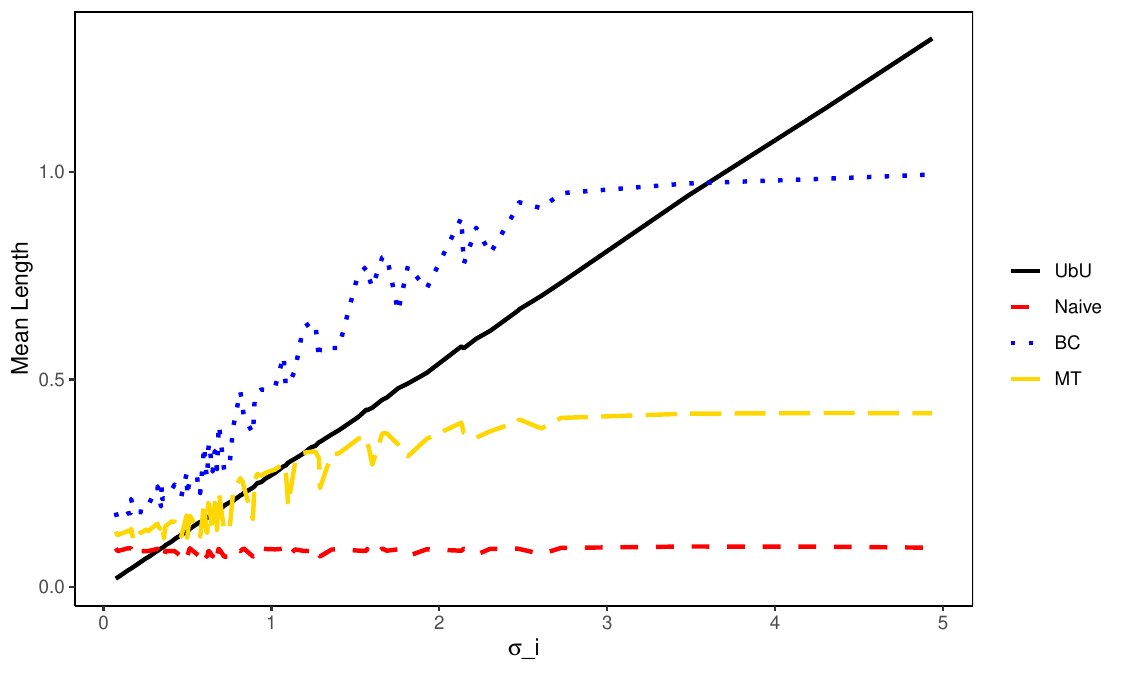}
    \caption{$(N,T)=(80,40)$}
	\label{fig:length_80_40}
    \end{subfigure} \quad
    \hfill
    \begin{subfigure}{0.64\columnwidth}
        \includegraphics[width=\columnwidth]{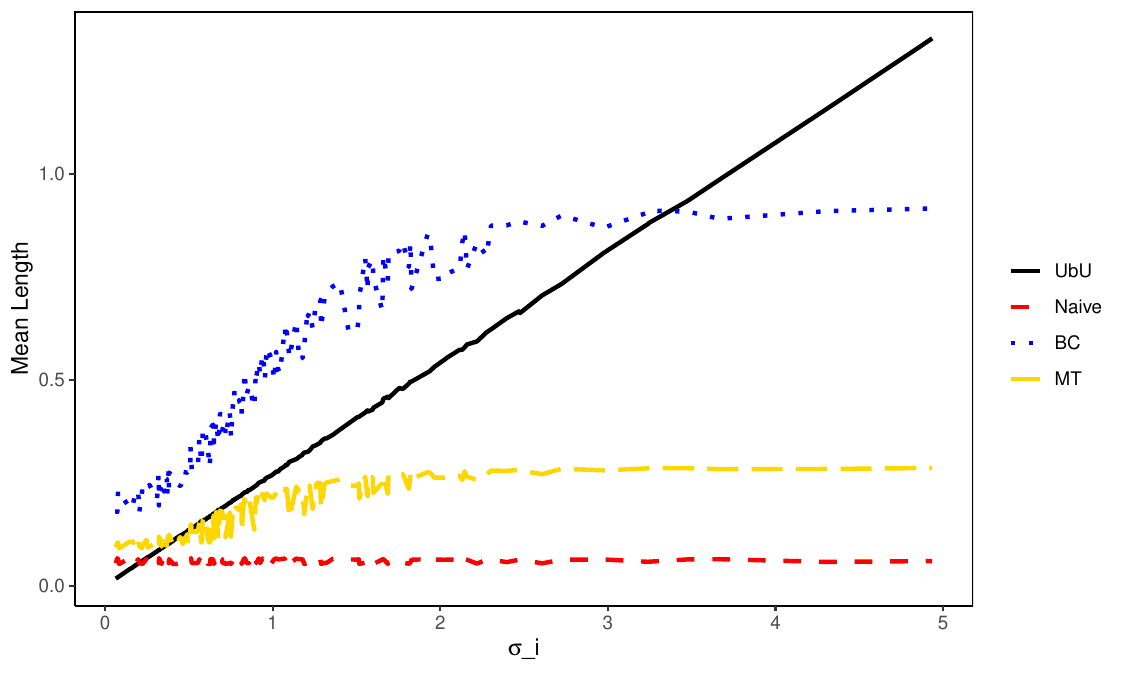}
    \caption{$(N,T)=(160,40)$}
	\label{fig:length_160_40}
    \end{subfigure} \quad
    \caption{Unit-wise mean lengths of confidence intervals, $T=40$}
    \label{fig:length_40}
    \caption*{\full: Unit-by-unit (UbU), \quad \color{red}\dashed\color{black}: Naive, \quad \color{black}\dotted\color{black}: Bias-corrected (BC), \quad \color{gold}\longdash\color{black}: Minimum t (MT)}
\end{figure}

\printbibliography

@unpublished{kaji2025controlling,
  author    = {Tetsuya Kaji and Hyungjune Kang},
  title     = {Controlling Tail Risk Measures with Estimation Error},
  year      = {2025},
  note      = {Working paper},
}

@unpublished{beyhum2024inference,
  title   = {Inference after Discretizing Time-Varying Unobserved Heterogeneity},
  author  = {Beyhum, Jad and Mugnier, Martin},
  year    = {2024},
  note    = {arXiv preprint arXiv:2412.07352}
}

@article{BergerBoos1994,
  author  = {Berger, Roger L. and Boos, Dennis D.},
  title   = {P Values Maximized Over a Confidence Set for the Nuisance Parameter},
  journal = {Journal of the American Statistical Association},
  year    = {1994},
  volume  = {89},
  number  = {427},
  pages   = {1012--1016},
  doi     = {10.1080/01621459.1994.10476836}
}

@article{Dufour1990,
  author  = {Dufour, Jean-Marie},
  title   = {Exact Tests and Confidence Sets in Linear Regressions with Autocorrelated Errors},
  journal = {Econometrica},
  year    = {1990},
  volume  = {58},
  number  = {2},
  pages   = {475--494}
}

@unpublished{ArmstrongWeidnerZeleneev2025,
  author = {Armstrong, Timothy B. and Weidner, Martin and Zeleneev, Andrei},
  title  = {Robust Estimation and Inference in Panels with Interactive Fixed Effects},
  year   = {2025},
  note   = {arXiv preprint arXiv:2210.06639v4}
}

@article{Silvapulle1996,
  author  = {Silvapulle, Mervyn J.},
  title   = {A Test in the Presence of Nuisance Parameters},
  journal = {Journal of the American Statistical Association},
  year    = {1996},
  volume  = {91},
  number  = {436},
  pages   = {1690--1693},
  doi     = {10.1080/01621459.1996.10476739}
}

@article{DiTraglia2016,
  author  = {DiTraglia, Francis J.},
  title   = {Using Invalid Instruments on Purpose: Focused Moment Selection and Averaging for {GMM}},
  journal = {Journal of Econometrics},
  year    = {2016},
  volume  = {195},
  number  = {2},
  pages   = {187--208},
  doi     = {10.1016/j.jeconom.2016.07.006}
}

@article{bruhin2010risk,
  title = {Risk and rationality: Uncovering heterogeneity in probability distortion},
  author = {Bruhin, Adrian and Fehr-Duda, Helga and Epper, Thomas},
  journal = {Econometrica},
  volume = {78},
  number = {4},
  pages = {1375--1412},
  year = {2010},
  publisher = {Wiley Online Library},
  doi = {10.3982/ECTA7139}
}

@article{conte2011mixture,
  title = {Mixture models of choice under risk},
  author = {Conte, Anna and Hey, John D. and Moffatt, Peter G.},
  journal = {Journal of Econometrics},
  volume = {162},
  number = {1},
  pages = {79--88},
  year = {2011},
  publisher = {Elsevier},
  doi = {10.1016/j.jeconom.2009.10.011}
}

@online{choi2024latentgroupstructurelinear,
      title={Latent group structure in linear panel data models with endogenous regressors}, 
      author={Junho Choi and Ryo Okui},
      year={2024},
      eprint={2405.08687},
      archivePrefix={arXiv},
      primaryClass={econ.EM},
      url={https://arxiv.org/abs/2405.08687}, 
}

@online{akgun2025robustinferencemethodslatent,
      title={Robust Inference Methods for Latent Group Panel Models under Possible Group Non-Separation}, 
      author={Oguzhan Akgun and Ryo Okui},
      year={2025},
      eprint={2511.18550},
      archivePrefix={arXiv},
      primaryClass={econ.EM},
      url={https://arxiv.org/abs/2511.18550}, 
}

@online{wan2025conditionalselectiveinferenceselected,
      title={Conditional Selective Inference for the Selected Groups in Panel Data}, 
      author={Chuang Wan and Jiajun Sun and Xingbai Xu},
      year={2025},
      eprint={2511.04466},
      archivePrefix={arXiv},
      primaryClass={stat.ME},
      url={https://arxiv.org/abs/2511.04466}, 
}

@techreport{pionati2025latent,
  title={Latent grouped structures in panel data: A review},
  author={Pionati, Alessandro},
  institution={Munich Personal RePEc Archive},
  type={MPRA Paper},
  number={123954},
  year={2025},
  note={Available at \url{https://mpra.ub.uni-muenchen.de/123954/}}
}

@article{yu2024spectral,
  title={Spectral clustering with variance information for group structure estimation in panel data},
  author={Yu, Lu and Gu, Jiaying and Volgushev, Stanislav},
  journal={Journal of Econometrics},
  volume={241},
  number={1},
  pages={105709},
  year={2024},
  publisher={Elsevier}
}

@article{mugnier2025simple,
  title={A simple and computationally trivial estimator for grouped fixed effects models},
  author={Mugnier, Martin},
  journal={Journal of Econometrics},
  volume={250},
  pages={106011},
  year={2025},
  publisher={Elsevier}
}

@article{sarafidis2015partially,
	author = {Sarafidis, Vasilis and Weber, Neville},
	journal = {Oxford Bulletin of Economics and Statistics},
	number = {2},
	pages = {274--296},
	publisher = {Wiley Online Library},
	title = {A partially heterogeneous framework for analyzing panel data},
	volume = {77},
	year = {2015}}

@article{ando2016panel,
	author = {Ando, Tomohiro and Bai, Jushan},
	journal = {Journal of Applied Econometrics},
	number = {1},
	pages = {163--191},
	publisher = {Wiley Online Library},
	title = {Panel data models with grouped factor structure under unknown group membership},
	volume = {31},
	year = {2016}}

@article{kwon_optimal,
  title={Optimal shrinkage estimation of fixed effects in linear panel data models},
  author={Kwon, Soonwoo},
  journal={Econometrica},
  year={2026},
  note={Forthcoming}
}

@article{liu2021panel,
  title={Panel forecasts of country-level {C}ovid-19 infections},
  author={Liu, Laura and Moon, Hyungsik Roger and Schorfheide, Frank},
  journal={Journal of Econometrics},
  volume={220},
  pages={2--22},
  year={2021},
  publisher={Elsevier}
}

@article{liu2023forecasting,
  title={Forecasting with a panel {T}obit model},
  author={Liu, Laura and Moon, Hyungsik Roger and Schorfheide, Frank},
  journal={Quantitative Economics},
  volume={14},
  pages={117--159},
  year={2023},
  publisher={Wiley Online Library}
}

@article{liu2020forecasting,
  title={Forecasting with dynamic panel data models},
  author={Liu, Laura and Moon, Hyungsik Roger and Schorfheide, Frank},
  journal={Econometrica},
  volume={88},
  number={1},
  pages={171--201},
  year={2020},
  publisher={Wiley Online Library}
}

@article{pollard1982central,
	author = {Pollard, David},
	journal = {The Annals of Probability},
	number = {4},
	pages = {919--926},
	publisher = {Institute of Mathematical Statistics},
	title = {A central limit theorem for $ k $-means clustering},
	volume = {10},
	year = {1982}}

@article{chang2022central,
	author = {Chang, Jinyuan and Chen, Xiaohui and Wu, Mingcong},
	journal = {Bernoulli},
	number = {1},
	pages = {712-742},
	title = {Central limit theorems for high dimensional dependent data},
	volume = {30},
	year = {2024}}

@article{dzemskiokui2021convergence,
	author = {Andreas Dzemski and Ryo Okui},
	journal = {Economics Letters},
	pages = {109844},
	title = {Convergence rate of estimators of clustered panel models with misclassification},
	volume = {203},
	year = {2021}}

@article{mehrabani2022estimation,
	author = {Mehrabani, Ali},
	journal = {Journal of Econometrics},
	number = {2},
	pages = {1464-1482},
	title = {Estimation and identification of latent group structures in panel data},
	volume = {235},
	year = {2023}}

@article{bonhomme2015grouped,
	author = {Bonhomme, St{\'e}phane and Manresa, Elena},
	journal = {Econometrica},
	number = {3},
	pages = {1147--1184},
	publisher = {Wiley Online Library},
	title = {Grouped patterns of heterogeneity in panel data},
	volume = {83},
	year = {2015}}

@article{Andrews91,
	author = {Donald W. K. Andrews},
	journal = {Econometrica},
	number = {3},
	pages = {817--858},
	title = {Heteroskedasticity and Autocorrelation Consistent Covariance Matrix Estimation},
	volume = {59},
	year = {1991}}

@article{wang2016homogeneity,
	author = {Wuyi Wang and Peter C. B. Phillips and Liangjun Su},
	journal = {Journal of Applied Econometrics},
	pages = {797--815},
	title = {Homogeneity pursuit in panel data models: theory and applications},
	volume = {33},
	year = {2018}}

@article{liu2020identification,
	author = {Liu, Ruiqi and Shang, Zuofeng and Zhang, Yonghui and Zhou, Qiankun},
	journal = {Journal of Econometrics},
	number = {2},
	pages = {574--590},
	publisher = {Elsevier},
	title = {Identification and estimation in panel models with overspecified number of groups},
	volume = {215},
	year = {2020}}

@online{chetverikov2022spectral,
  author       = {Chetverikov, Denis and Manresa, Elena},
  title        = {Spectral and post-spectral estimators for grouped panel data models},
  year         = {2022},
  eprint       = {2212.13324},
  archivePrefix={arXiv},
  primaryClass={stat.ME},
  url          = {https://arxiv.org/abs/2212.13324},
}

@article{wang2019heterogeneous,
	author = {Wuyi Wang and Peter C. B. Phillips and Liangjun Su},
	journal = {Economics Letters},
	pages = {179--185},
	title = {The heterogeneous effects of the minimum wage on employment across states},
	volume = {174},
	year = {2019}}

@article{su_identifying_2016,
	author = {Su, Liangjun and Shi, Zhentao and Phillips, Peter C. B.},
	issn = {0012-9682},
	journal = {Econometrica},
	number = {6},
	pages = {2215--2264},
	title = {Identifying Latent Structures in Panel Data},
	url = {https://www.jstor.org/stable/44155362},
	urldate = {2024-03-25},
	volume = {84},
	year = {2016}}

@article{okui_heterogeneous_2021,
	author = {Okui, Ryo and Wang, Wendun},
	doi = {10.1016/j.jeconom.2020.04.009},
	issn = {0304-4076},
	journal = {Journal of Econometrics},
	month = feb,
	number = {2},
	pages = {447--473},
	series = {Annals {Issue}: {Celebrating} 40 {Years} of {Panel} {Data} {Analysis}: {Past}, {Present} and {Future}},
	title = {Heterogeneous structural breaks in panel data models},
	url = {https://www.sciencedirect.com/science/article/pii/S0304407620301287},
	urldate = {2024-03-25},
	volume = {220},
	year = {2021}}

@article{dzemski2024confidence,
	author = {Dzemski, Andreas and Okui, Ryo},
	journal = {Quantitative Economics},
	number = {2},
	pages = {245--277},
	publisher = {Wiley Online Library},
	title = {Confidence set for group membership},
	volume = {15},
	year = {2024}}

@article{lin2012estimation,
	author = {Lin, Chang-Ching and Ng, Serena},
	journal = {Journal of Econometric Methods},
	number = {1},
	pages = {42--55},
	publisher = {De Gruyter},
	title = {Estimation of panel data models with parameter heterogeneity when group membership is unknown},
	volume = {1},
	year = {2012}}

@article{hahn2010panel,
	author = {Hahn, Jinyong and Moon, Hyungsik Roger},
	journal = {Econometric Theory},
	number = {03},
	pages = {863--881},
	publisher = {Cambridge Univ Press},
	title = {Panel data models with finite number of multiple equilibria},
	volume = {26},
	year = {2010}}

@article{dube2010minimum,
	author = {Dube, Arindrajit and Lester, T. William and Reich, Michael},
	journal = {The Review of Economics and Statistics},
	number = {4},
	pages = {945--964},
	publisher = {MIT Press},
	title = {Minimum wage effects across state borders: Estimates using contiguous counties},
	volume = {92},
	year = {2010}}

@article{pollard1981strong,
	author = {Pollard, David},
	journal = {The Annals of Statistics},
	pages = {135--140},
	publisher = {JSTOR},
	title = {Strong consistency of k-means clustering},
	year = {1981}}

@article{arellano1987computing,
	author = {Arellano, Manuel},
	journal = {Oxford Bulletin of Economics and Statistics},
	pages = {431--434},
	title = {Computing Robust Standard Errors for Within-groups Estimators},
	volume = {49},
	year = {1987}}

@article{hansen2007asymptotic,
	author = {Hansen, Christian B.},
	journal = {Journal of Econometrics},
	pages = {597--620},
	title = {Asymptotic properties of a robust variance matrix estimator for panel data when {$T$} is large},
	volume = {141},
	year = {2007}}

@article{driscoll1998consistent,
	author = {Driscoll, John C. and Kraay, Aart C.},
	journal = {The Review of Economics and Statistics},
	pages = {549--560},
	title = {Consistent Covariance Matrix Estimation with Spatially Dependent Panel Data},
	volume = {80},
	year = {1998}}

\end{document}